\documentclass[10pt, conference, compsocconf]{IEEEtran}
\IEEEoverridecommandlockouts
% *** GRAPHICS RELATED PACKAGES ***
%
\ifCLASSINFOpdf
  % \usepackage[pdftex]{graphicx}
  % declare the path(s) where your graphic files are
  % \graphicspath{{../pdf/}{../jpeg/}}
  % and their extensions so you won't have to specify these with
  % every instance of \includegraphics
  % \DeclareGraphicsExtensions{.pdf,.jpeg,.png}
\else
  % or other class option (dvipsone, dvipdf, if not using dvips). graphicx
  % will default to the driver specified in the system graphics.cfg if no
  % driver is specified.
  % \usepackage[dvips]{graphicx}
  % declare the path(s) where your graphic files are
  % \graphicspath{{../eps/}}
  % and their extensions so you won't have to specify these with
  % every instance of \includegraphics
  % \DeclareGraphicsExtensions{.eps}
\fi
\usepackage{multicol}
\usepackage[cmex10]{amsmath}
\usepackage{tablefootnote}
\usepackage[T1]{fontenc} % needed for lmodern
\usepackage[utf8]{inputenc}
\usepackage{lmodern} % More complete than the default CM font
% The following is to fix some font-shape missing warnings due to missing
% small-caps bold characters.
\normalfont %to load T1lmr.fd
\DeclareFontShape{T1}{lmr}{bx}{sc} { <-> ssub * cmr/bx/sc }{}
\usepackage{dsfont} % for the indicator function
\usepackage{booktabs}
\usepackage{multirow}
\usepackage{graphicx}
\usepackage{chato-notes}
\usepackage[numbers,sort&compress,square]{natbib}
\usepackage[hidelinks]{hyperref}
\usepackage{siunitx}
\usepackage{algorithm}
\floatname{algorithm}{ALGORITHM}
\usepackage[noend]{algpseudocode}
\algrenewcommand\alglinenumber[1]{\small #1:}
\usepackage{pifont}% http://ctan.org/pkg/pifont
\usepackage{subfigure} %% XXX subcaption?
\usepackage{xspace}

\usepackage{amsmath}
\usepackage{amsthm}
\usepackage{mathtools}
\usepackage{amssymb}
\usepackage{wasysym}

\usepackage{tikz}
\usetikzlibrary{arrows}
\definecolor{zzttqq}{rgb}{0.6,0.2,0}
\definecolor{qqqqff}{rgb}{0,0,1}
\definecolor{xdxdff}{rgb}{0.49019607843137253,0.49019607843137253,1}
\definecolor{cqcqcq}{rgb}{0.7529411764705882,0.7529411764705882,0.7529411764705882}

\newtheorem{theorem}{Theorem}[section]

\newtheorem{lemma}[theorem]{Lemma}

\usepackage{xifthen}

\newcommand{\algosingle}{\textsc{FourEst}}
\newcommand{\algomulti}{$\textsc{TS4C}_1$}
\newcommand{\algomultione}{$\textsc{TS4C}_1$}
\newcommand{\algomultitwo}{$\textsc{TS4C}_2$}

\newcommand{\algodynamic}{\textsc{ATS4C}}
\newcommand{\algomultifive}{\textsc{TS5C}}
\newcommand{\algosinglefive}{\textsc{FiveEst}}

\newcommand{\Samp}{\mathcal{S}}

\newcommand{\Prob}[1]{\textrm{Pr}\left(#1\right)}

\newcommand{\Exp}[1]{\text{E}\left[#1\right]}

\newcommand{\Var}[1]{\text{Var}\left[#1\right]}
\newcommand{\Cov}[1]{\text{Cov}\left[#1\right]}

\delimitershortfall-1sp

\newcommand{\clisett}[2]{\mathcal{C}_{#1}^{(#2)}}

\newcommand{\fcliest}[1][]{%
\ifthenelse{\isempty{#1}{}}{\varkappa}{\varkappa^{(#1)}}%
}

\newcommand{\ntris}[1][]{%
\ifthenelse{\isempty{#1}{}}{\tau}{\tau^{(#1)}}%
}

\newcommand{\minone}[1]{\min\{1,#1\}}

\newcommand*{\neigh}{\mathcal{N}}

\newcommand{\tiesa}{\textsc{TieredSampling}}

\graphicspath{ {images/} }
% correct bad hyphenation here
\hyphenation{op-tical net-works semi-conduc-tor}
\begin{document}
%
% paper title
% can use linebreaks \\ within to get better formatting as desired
\title{%\algomulti: Estimating 4 and 5-cliques in Massive Graph Streams using Multi-Level Reservoir Sampling\\ or \\
Tiered  Sampling: An Efficient Method for Approximate Counting \\ Sparse Motifs in Massive Graph Streams}

% author names and affiliations
% use a multiple column layout for up to two different
% affiliations

\author{\IEEEauthorblockN{Lorenzo De Stefani}
\IEEEauthorblockA{Department of Computer Science\\
Brown University\\
Providence, RI, USA\\
lorenzo@cs.brown.edu}
\and
\IEEEauthorblockN{Erisa Terolli$^*$}
\IEEEauthorblockA{Department of Computer Science\\
Sapienza University of Rome\\
Rome, Italy\\
terolli@di.uniroma1.it}
\thanks{*Work done in part while visiting Brown University.}
\and
\IEEEauthorblockN{Eli Upfal}
\IEEEauthorblockA{Department of Computer Science\\
Brown University\\
Providence, RI, USA\\
eli@cs.brown.edu}
}

% make the title area
\maketitle

\begin{abstract}
We introduce {\sc Tiered Sampling}, a novel technique for approximate counting sparse motifs in massive graphs whose edges are observed in a stream. Our technique requires only a single pass on the data and uses a memory of fixed size $M$, which can be magnitudes smaller than the number of edges.  

Our methods addresses the challenging task of counting sparse motifs - sub-graph patterns that have low probability to appear in a sample of $M$ edges in the graph, which is the maximum amount of data available to the algorithms in each step. To obtain an unbiased and low variance estimate of the count we partition the available memory to tiers (layers) of reservoir samples. While the base layer is a standard reservoir sample of edges, other layers are reservoir samples of sub-structures of the desired motif. By storing more frequent sub-structures of the motif, we increase the probability of detecting an occurrence of the sparse motif we are counting, thus decreasing the variance and error of the estimate.  

We demonstrate the advantage of our method in the specific applications of counting sparse 4 and 5-cliques in massive graphs.
%We present a complete analysis comparing of the two layer approach to one reservoir sample algorithm for the 4-clique count problem.
%We present \algomulti, single-pass streaming algorithms to compute
%  unbiased, low-variance, high-quality approximations for the number number of 4 and 5-cliques in massive graphs
%  observed as an adversarial stream of edges.
%
%Our approach exploits the rarity of sub-patterns of 4 and four cliques in oder to obtain high quality estimations. In order to pursue this approach while obtaining strong guarantees estimators while limiting the memory space being use we use multiple levels of reservoir samples.
%
%This approach allow optimal use of the available memory while not requiring hard-to-choose parameters (e.g., a fixed sampling, and constitutes the first viable and effective method to use rare sub-patterns of cliques in order to improve the estimations. We show the advantages of our multi-level approach by introducing an alternative estimation algorithm which uses one single reservoir sample.  We provide bounds for the variance of both estimations.
We present a complete analytical analysis and 
extensive experimental results using both synthetic and real-world data. Our results demonstrate the advantage of our method in obtaining high-quality approximations for the number of 4 and 5-cliques for large graphs using a very limited amount of memory, significantly outperforming the single edge sample approach for counting sparse motifs in large scale graphs.

\end{abstract}

\begin{IEEEkeywords}

graph motif mining; reservoir sampling; stream computing; 

\end{IEEEkeywords}

\IEEEpeerreviewmaketitle

\section{Introduction}
Counting motifs (sub-graphs with a given pattern)
in large graphs is a fundamental primitive in graph mining with numerous practical applications including link prediction and recommendation, community detection~\cite{BerryHLVP11}, topic
mining~\cite{EckmannM02}, spam and anomaly detection~\citep{BecchettiBCG10,lim2015} and protein interaction networks analysis~\cite{milo2002}.

Computing exact count of motifs in massive, Web-scale networks is often
impractical or even infeasible. Furthermore, 
many interesting networks, such as social networks, are {continuously growing},
 hence there is a limited value in maintaining an exact count.
The goal is rather to have, \emph{at any time}, a high-quality
approximation of the quantity of interest. 

To obtain a scalable and efficient solution for massive size graphs we focus here on the well studied model of one-pass stream computing. Our algorithms use a memory of fixed size $M$, where $M$ is significantly smaller that the size of the input graph. The input is given as a \emph{stream} of edges in an \emph{arbitrary} order, and the algorithm has only one pass on the input. The goal of the algorithm is to compute at any given time an unbiased, low variance estimate of the count of motif occurrences in the graph seen up to that time.
 
Given its theoretical and practical importance, the problem of counting motifs in graph streams has received great attention in the literature, with particular emphasis on the approximation of the number of 3-cliques (triangles)~\cite{destefani2017triest,pavan2013,KutzkovP14}. 
A standard approach to this problem is to sample up to $M$ edges uniformly at random, using a fixed sampling probability or, more efficiently,  reservoir sampling. A count of the number of motifs in the sample, extrapolated (normalized) appropriately,  gives an unbiased estimate for the number of occurrences in the entire graph. The variance (and error) of this method depends on the expected number of occurrences in the sample. In particular, for sparse motifs that are unlikely to appear many times in the sample, this method exhibits high variance (larger than the actual count) which makes it useless for counting sparse motifs. Note that when the input graph is significantly larger than the memory size $M$, a motif that is unlikely to appear in a random sample of $M$ edges may still have a large count in the graph. Also, as we attempt to count larger structure than triangles, these structures are more likely to be sparse in the graph. It is therefore important to obtain efficient methods for counting \emph{sparser} motifs in massive scale graph streams.

In this work we introduce the concept of {\sc Tiered Sampling} in stream computing. To obtain an unbiased and low variance estimate for the amount of sparse motif in massive scale graph we partition the available memory to tiers (layers) of reservoir samples. The base tier is a standard reservoir sample of individual edges, other tiers are reservoir samples of sub-structures of the desired motif.  This strategy significantly improves the probability of detecting occurrences of the motif. 

Assume that we count motifs with $k$ edges. If all the available  memory is used to store a sample of the edges, we would need $k-1$ of the motif's edges to be in the sample when the last edge of the motif is observed on the stream. The probability of this event decreases exponentially in $k$. 

Assume now that
 we use part of the available memory to store a sample of the observed occurrences of a fixed sub-motifs with $k/2$ edges. We are more likely to observe such motifs (we only need $k/2-1$ of there edges to be in the edge sample when the last edge is observed in the stream), and they are more likely to stay in the second reservoir sample since they are still relatively sparse. We now observe a full motif when the current edge in the stream completes an occurrence of the motif with edges sub-motifs in the two reservoirs.
 % a we need one sub-motif in the sub-motifs sample and $k/2-1$ edges in the edge sample. 
  For an appropriate choice of parameters, and with fixed total memory size, this event has significantly higher probability than observing the $k-1$ edges in the edge sample.
To obtain an unbiased estimate of the count, the number of observed occurrences needs to be carefully normalized by the probabilities of observing each of the components. 

%We present concrete applications of the general method in estimating the number of 4 and 5 cliques in large scale graphs. 
In this work we make the following main contributions:
%focusing on concrete applications of the general method to estimating the number of 4 and 5 cliques in large scale graphs:  
\begin{itemize}
\item We introduce the concept of {\sl Tiered Sampling} for counting sparse concepts in large scale graph stream using multi-layer reservoir samples.
\item We develop and fully analyze two algorithms for counting the number of 4-cliques in a graph using 
two tier reservoir sampling.
\item For comparison purpose we analyze a standard (one tier) reservoir sample algorithm for 4-cliques counting problem. \item
We verify the advantage of our multi-layer algorithms by analytically comparing their performance to a standard (one tier) reservoir sample algorithm for 4-cliques counting problem on random Bar\'abasi-Albert graphs.
%\item
%
%	\item We introduce 4-\algomulti{} (Clique Estimation via MUltiple Reservoir samples), a \emph{one-pass streaming algorithm} used to 
%    approximate, at each time instant, the  number of
%    4-cliques in a graph stream (i.e., a sequence of
%    edges addtions) using a \emph{fixed
%    amount of memory}. This is the first contribution that provides estimation for 4-cliques in the streaming setting. \algomulti{} only requires the user to specify \emph{the amount of available memory}, an interpretable parameter that is definitively known to the user.We analyze the performance of our algorithm and provide bounds for the variance of the obtained estimator.
%    \item We extensively discuss the advantages of our multi-sampling approach compared to approaches which simply rely on edge sampling. In order to do so we introduce \algosingle{}, a one-pass streaming algorithm for estimating the number of 4-cliques for a graph edges stream which uses reservoir sampling to maintain an edge sample.      
     \item We develop the \algodynamic{} technique, which allows to adaptively adjust the sub-division of memory space among the two tiers according to the properties of the graph being considered.
     \item We conduct an extensive experimental evaluation of the 4-clique algorithms on massive 
    graphs with up to hundreds of millions edges. We show the quality of the achieved estimations by comparing them with the actual ground truth value.  Our algorithms are also
    extremely scalable, showing update times in the order of hundreds  of microseconds for graphs with billions of edges.
     \item We demonstrate the generality of our approach through a second application of the two-tier method, estimating the number of 5-cliques in a graph stream using a second reservoir sample of the 4-cliques observed on the stream.
\end{itemize}
To the best of our knowledge these are the first fully analyzed, one pass stream algorithms for the 4 and 5-cliques counting problem.

\section{Preliminaries}
\label{sec:preliminaries}
%In this paper we show how our \tiesa{} approach can be applied to the problem of counting the number of occurrences of a given graph motif (i.e., a sub-graph) in a dynamic undirected graph which is observed as an arbitrary (adversarial) stream of edge insertions over discretized time steps.

For any (discrete) time step $t\ge 0$, we denote the graph observed up to and including time $t$ as $G^{(t)}=(V^{(t)},E^{(t)})$, where $V^{(t)}$ (resp., $E^{(t)}$) denotes the set of vertices (resp., edges) of $G^{(t)}$.
At time $t=0$ we have $V^{(t)}=E^{(t)}=\emptyset$. For any $t>0$, at time $t+1$ we receive one single edge $e_{t+1}=(u,v)$  from a stream, where $u,v$ are two distinct vertices. $G^{(t+1)}$ is thus obtained by \emph{inserting the new edge}: $E^{(t+1)}=E^{(t)}\cup\{(u,v)\}$; if either $u$ or $v$ do not belong to $V^{(t)}$, they are added to $V^{(t+1)}$. 
Edges can be added just once (i.e., we do not consider \emph{multigraphs} in this work) in an arbitrary adversarial order, i.e., as to cause the worst outcome for the algorithm. We however assume that the adversary has \emph{no access to the random bits} used by the algorithm.

This work explores the idea of storing a sample of sub-motifs in order to enhance the count of a sparse motif. For concreteness we focus on estimating the counts of 4-cliques and 5-cliques. 

Given a graph $G^{(t)}=(V^{(t)},E^{(t)})$, a \emph{$k$-clique} in $G^{(t)}$ is a \emph{set} of ${k\choose 2}$ (distinct) edges connecting a set of $k$ (distinct) vertices.
%%$\{(u,v),(u,w),(u,z),(v,w),(v,z),(w,z)\}\subseteq E^{(t)}$, with $u$, $v$, $w$ and $z$  being four distinct vertices (the \emph{corners} of the four clique). 
%
%In our work we will  discuss how it is possible to use information regarding the number of \emph{triangle} sub-graphs in $G^{(t)}$ in order to improve the estimate for the number of four-cliques. A \emph{triangle} in $G^{(t)}$ is a \emph{set} of three edges $\{(u,v),(u,w),(v,w)\}\subseteq E^{(t)}$, with $u$, $v$, $w$  being three distinct vertices. 
We denote by $\clisett{k}{t}$ the set of \emph{all} $k$-cliques in $G^{(t)}$. %See Table~\ref{tab:notations} for other notations used trough the paper.
%%{\em Problem definition.} We study the \emph{Four and Five Clique Counting Problem} in Dynamic Graph Streams, which requires to compute,
%%at each time $t\ge 0$ an estimation, denoted as $\fcliest[t]$, of $|\fcliset[t]|$.

%\paragraph{Reservoir Sampling}
Our work makes use of the \emph{reservoir sampling scheme}~\citep{Vitter85}.
%in order to maintain \emph{uniform samples} over the elements being observed on the stream.
Consider a stream of elements $e_i$ observed in discretized time steps. Given a fixed sample  size $M>0$, for any time step $t$ the reservoir sampling scheme allows to maintain a uniform sample $\Samp$ of size $\min \{M,t\}$ of the $t$ elements observed on the stream:
\begin{itemize}
  \item If $t\le M$, then the element $e_t=(u,v)$ on the stream at time $t$ is
    deterministically inserted in $\Samp$.
  \item If $t>M$, then the sampling mechanism flips a biased coin with heads probability $M/t$. If
    the outcome is heads, it chooses an element $e_i$ uniformly at
    random from those currently in $\Samp$ which is replaced by $e_t$.
    Otherwise, $\Samp$ is not modified.
\end{itemize}

%We use reservoir sampling to the multiple tiers of our sampling strategy. In particular we use it to maintain an edge reservoir and a reservoir for opportune sub-patterns of the motif which we are counting. For example when counting 4-cliques (resp., 5-cliques) we will maintain a sample of 3-cliques (resp., 4-cliques). 

When using reservoir sampling for estimating the sub-graph count it is necessary to compute the probability of multiple elements being in $\Samp$ at the same time.

\begin{lemma}[Lemma 4.1~\cite{destefani2017triest}]\label{lem:reservoirhighorder}
  For any time step $t$ and any positive integer $k\le t$, let $B$ be any
  subset of size $|B|=k\le \min\{M,t\}$ of the element observed on the stream. Then, \emph{at the end} of time step $t$ (i.e., after updating the sample at time $t$), we have $\Pr(B\subseteq\Samp)=1$ if $t\le M$, and  $\Pr(B\subseteq\Samp)= \prod_{i=0}^{k-1}\frac{M-i}{t-i}$ otherwise.
  
%\begin{equation*}
%  	\Pr(B\subseteq\Samp)= \left\{\begin{array}{ccl} 1 & \text{if } t\le M\\
% 	\displaystyle\prod_{i=0}^{k-1}\frac{M-i}{t-i} & \text{otherwise}\end{array}\right.
%  \end{equation*}
  
%  \[
%    \Pr(B\subseteq\Samp)=\left\{\begin{array}{cl}0 & \text{if } k>M\\
%    \xi_{k,t} & \text{otherwise}\end{array} \right.\enspace.
%  \]
%  where:
%  \begin{equation*}
%  	\xi_{k,t} = \left\{\begin{array}{ccl} 1 & \text{if } t\le M\\
%  	\displaystyle\prod_{i=0}^{k-1}\frac{M-i}{t-i} & \text{otherwise}\end{array}\right.
%  \end{equation*}
\end{lemma}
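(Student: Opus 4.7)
The plan is to split the argument into the two cases dictated by the reservoir sampling scheme. The easy case is $t \le M$: by definition every observed element is deterministically inserted into $\Samp$ during the filling phase, so $B \subseteq \Samp$ holds with probability $1$ for every subset $B$ of the first $t$ elements, regardless of its size.

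For the case $t > M$, I would rely on the standard invariant of reservoir sampling: at the end of any time step $t \ge M$, the contents of $\Samp$ form a uniformly random $M$-subset of the $t$ elements observed so far. Given this invariant, the probability that a fixed set $B$ of size $k \le M$ is contained in $\Samp$ equals the fraction of $M$-subsets of $\{e_1,\ldots,e_t\}$ that include $B$, namely
\[
\Pr(B \subseteq \Samp) \;=\; \frac{\binom{t-k}{M-k}}{\binom{t}{M}}.
\]
A routine cancellation of factorials rewrites this ratio as the claimed product $\prod_{i=0}^{k-1}(M-i)/(t-i)$.

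What is left is to justify the uniformity invariant, which I would prove by induction on $t \ge M$. The base case $t=M$ is immediate since the first $M$ elements fill $\Samp$ deterministically. For the inductive step, fix any $M$-subset $A \subseteq \{e_1,\ldots,e_{t+1}\}$ and split on whether $e_{t+1} \in A$. If $e_{t+1} \notin A$, then $\Samp = A$ at time $t+1$ exactly when $\Samp = A$ at time $t$ (probability $1/\binom{t}{M}$ by hypothesis) and the coin at step $t+1$ comes up tails (probability $(t+1-M)/(t+1)$). If $e_{t+1} \in A$, then at time $t$ the reservoir must equal $(A \setminus \{e_{t+1}\}) \cup \{x\}$ for some $x$ among the $t-M+1$ elements outside $A \setminus \{e_{t+1}\}$; after that, the coin must come up heads (probability $M/(t+1)$) and $x$ must be the evicted element (probability $1/M$). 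Summing over $x$ and simplifying gives $(t+1-M)/((t+1)\binom{t}{M})$ in both cases, which equals $1/\binom{t+1}{M}$ by the identity $\binom{t+1}{M} = \binom{t}{M}(t+1)/(t+1-M)$.

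The only real obstacle is the bookkeeping in the inductive step — in particular, verifying that the two conditional probabilities collapse to the same value $1/\binom{t+1}{M}$ independently of whether $e_{t+1}$ belongs to $A$. Once uniformity is in hand, the statement of the lemma follows by pure counting, so no concentration or approximation arguments are needed.
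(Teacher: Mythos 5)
Your proof is correct. Note that the paper does not prove this lemma itself --- it imports it verbatim as Lemma~4.1 of the cited TRI\`EST paper --- and your route (deterministic inclusion for $t\le M$; the uniformity invariant of reservoir sampling established by induction on $t$, followed by the counting identity $\binom{t-k}{M-k}/\binom{t}{M}=\prod_{i=0}^{k-1}\frac{M-i}{t-i}$) is the standard argument and matches the one in that reference; the inductive bookkeeping you flag as the only delicate point does collapse to $1/\binom{t+1}{M}$ in both cases exactly as you compute.
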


%In our experimental analysis (Sections~\ref{sec:expran} and~\ref{sec:experiments}) we use the \emph{Mean Average Percentage Error} (MAPE)~\cite{hyndman2006} to assess the accuracy of our estimators \emph{through the evolution of the graph}. The MAPE measures the relative error of an estimator (in this case, $\varkappa^{(t)}$) with respect to the ground truth (in this case $|\clisett{k}{t}|$) averaged over $t$ time steps: that is $MAPE = \frac{1}{t}\sum_{i=1}^t \frac{|\fcliest[t] - |\clisett{k}{t}||}{|\clisett{k}{t}|}$.
In our experimental analysis (Sections~\ref{sec:expran} and~\ref{sec:experiments}) we measure the accuracy of the obtained estimator \emph{through the evolution of the graph} in terms of their \emph{Mean Average Percentage Error} (MAPE)~\cite{hyndman2006}. The MAPE measures the relative error of an estimator (in this case, $\varkappa^{(t)}$) with respect to the ground truth (in this case, $|\clisett{k}{t}|$) averaged over $t$ time steps: that is $MAPE = \frac{1}{t}\sum_{i=1}^t \frac{|\fcliest[t] - |\clisett{k}{t}||}{|\clisett{k}{t}|}$.
%\begin{table}
%\renewcommand{\arraystretch}{1.3}
%\centering
%\caption{Notation Table}
%\label{tab:notations}
%\begin{tabular}{c l}
%\textbf{Symbol} 	&	\textbf{Explanation}\\
%\hline
%\hline
%$\Sigma$ & Edge stream\\
%$t$ 		&	Time step \\
%$G^{(t)}$	&	Graph observed up to time $t$ (included)\\
%$e_{t}$	&	Edge observed in the stream at time $t$. \\
%$\clisett{k}{t}$ &	Set of k$$-cliques in $G^{(t)}$ \\
%$\fcliest[t]$  & Estimation of number of 4-cliques at time $t$\\
%$\Samp_e$ & Uniform sample of edges\\
%$\Samp_e^{(t)}$ & Content of $\Samp_e$ at the beginning of time $t$\\
%$M$ 	& Memory Size\\
%$M_e$ 	& Edge sample memory size\\
%$M_\delta$ 	& Triangle sample memory size\\
%$\tau^{(t)}$ 	& Number of triangles seen by \algomultione{} (\algomultitwo{}) up to time $t$\\
%$\neigh^{\Samp_e^{(t)}}_u$ & Neighbourhood of $u$ in $\Samp_e^{(t)}$ \\
%$\mathcal{N}^\Samp_{u,v}$ & Common neighbours of $u$ and $v$\\ 
%$\alpha$ 	& Splitting coefficient\\
%\hline
%\end{tabular}
%\end{table}

\section{Related Work}
Counting subgraphs in large networks is a  well studied problem in data mining which was originally brought to attention in the seminal work of Milo et al.~\cite{milo2002}. In particular, many contributions in the literature have focused on the triangle counting problem, including exact algorithms, MapReduce algorithms \cite{pagh2012, park2013} and streaming algorithms \cite{destefani2017triest, ahmed2014,  jha2015,pavan2013}. 

Previous works in literature on counting graph motifs \cite{rahman2014, ahmed2015} can also be used to estimate the number of cliques in large graphs. Other recent works on clique counting introduced randomized \cite{jain2017} and MapReduce \cite{finochi2015} algorithms. These require however priori information on the graph such as its degeneracy (for~\cite{jain2017}) or the vertex degree ordering (for~\cite{finochi2015}). Further, none of these approaches can be used in the streaming setting. 

The idea of using sub-structures of a graph motif in order to improve the estimation of its frequency in a massive graph has been previously explored in literature. In~\cite{bordino2008} Bordino et al. proposed a data stream algorithm  which estimates the number of occurrences of a given subgraph by sampling its ``\emph{prototypes} (i.e., sub-structures). While this approach is shown to be effective in estimating the counts of motif with three and four edges, it requires multiple passes through the graph stream and further knowledge on the properties of the graph. In \cite{jha2015cliques}, Jha et al. proposed an algorithm which effectively and efficiently approximates the frequencies of all 4-vertex subgraphs by sampling paths of length three. This algorithm requires however knowledge of the degrees of all the vertices in the graph and cannot be used in the streaming setting. 
% In a recent work~\cite{jain2017}, Jain and Seshadri propose a clique counting method based on T\'uran's Theorem which requires knowledge of the degeneracy of the observed graph.
%None of the algorithms in these contributions can be used in the streaming setting. 

To the best of our knowledge, our work is the first that proposes a  sampling-based, one pass algorithm for insertion only streams to approximate the global number of cliques found in large graphs. Further, our algorithms do not require any further information on the properties of the graph being observed. 

Using a strategy similar to our \tiesa{} approach, in~\cite{jha2015} Jha and Seshadri propose a one pass streaming algorithm for triangle counting which using a first reservoir for edges which are then used  to generate a stream of \emph{wedges} (i.e., paths of length two) stored in a second reservoir. This approach appear to be not worthwhile for triangle counting as it is consistently outperformed by a simpler strategy based on a single reservoir presented~\cite{destefani2017triest}. This is due to the fact that as in most large graph of interest wedges are much more frequent than edges themselves it is not worth devoting a large fraction of the available memory space to maintaining wedges over edges.
%The results of \cite{jha2015} are closest to our setting but with a total different goal. This work proposes a one pass streaming algorithm for triangle counting based on a multilevel structure of reservoir sampling. The first reservoir stores a set of uniform random edges, which are used to generate a stream of wedges(a path of length 2) stored in the second reservoir.  The sample of wedges is then used to approximate the total number of triangles by calculating the number of wedges that are going to be closed by future edges coming from the stream. However, sampling wedges might cause problems when their number exceed the number of edges. A more recent work by \cite{destefani2017triest} shows that  sampling only by a single edges reservoir produces better results than \cite{jha2015}.

\section{\tiesa{} application to 4-clique counting}
\label{sec:multi}
\begin{figure}
\begin{center}
	\begin{tikzpicture}[line cap=round,line join=round,>=triangle 45,x=1cm,y=1cm, scale=1, every node/.style={scale=1}]
	\fill[line width=1.2pt,color=zzttqq,fill=zzttqq,fill opacity=0.10000000149011612] (0,3) -- (0,0) -- (3,0) -- cycle;
	
	\fill[line width=0pt,color=qqqqff,fill=qqqqff,fill opacity=0.1] (3,3) -- (0,0) -- (3,0) -- cycle;
	
	\draw [line width=1.2pt] (0,3)-- (0,0);
	
	\draw [line width=1.2pt] (0,0)-- (3,0);
	
	\draw [line width=1.2pt] (3,0)-- (3,3);
	
	\draw [line width=1.2pt,dash pattern=on 1pt off 3pt] (3,3)-- (0,3);
	
	\draw [line width=1.2pt] (3,3)-- (0,0);
	
	\draw [line width=1.2pt] (0,3)-- (3,0);
	
	\begin{scriptsize}\draw [fill=xdxdff] (0,3) circle (2pt);
	
	\draw[color=xdxdff] (-0.19,3.16) node {$u$};
	
	\draw [fill=qqqqff] (3,3) circle (2pt);
	
	\draw[color=qqqqff] (3.16,3.16) node {$v$};
	\draw [fill=qqqqff] (0,0) circle (2pt);
	\draw[color=qqqqff] (-0.16,-0.19) node {$z$};
	\draw [fill=xdxdff] (3,0) circle (2pt);
	\draw[color=xdxdff] (3.16,-0.19) node {$w$};
	\draw[color=black] (-0.2,1.5) node {$e_{2}$};
	\draw[color=black] (1.5,-0.2) node {$e_{1}$};
	\draw[color=black] (3.2,1.5) node {$e_{3}$};
	\draw[color=black] (1.5,3.2) node {$e_{6}$};
	\draw[color=black] (0.95,2.3) node {$e_{4}$};
	\draw[color=black] (2.05,2.3) node {$e_{5}$};
	\draw[color=zzttqq] (0.68,1.5) node {$T_1$};
	\draw[color=qqqqff] (2.44,1.5) node {$T_2$};
	\end{scriptsize}\end{tikzpicture}
	\caption{Detection of 4-clique using triangles}
	\label{fig:4clidetection}
	\end{center}
	\vskip -.3in
\end{figure}
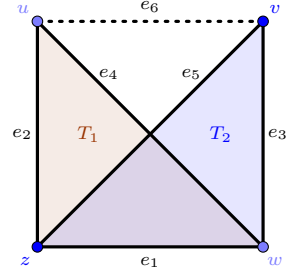

In this section we present \algomulti{} and \algomultitwo{}, two applications of our \tiesa{} approach for  counting the number of 4-cliques in an undirected graph observed as an edge stream. The two algorithms partition the available memory into two samples, an edge reservoir sample and a triangle reservoir sample. \algomulti{} attempts in each step to construct a 4-cliques using the current observed edge, two edges from the edge reservoir sample and one triangle from the triangle reservoir sample. \algomultitwo{} attempts at each step to construct a 4-clique from the current observed edge and two triangles from the triangle reservoir sample.
%Rather than counting a 4-cliques \emph{only}  when the current observed edge completes a 4-clique with five other edges maintained in an edge sample (similarly as what succesfully done for 3-cliques in~\cite{destefani2017triest}), we increase the probability of observing a clique by using a 3-cliques (i.e., triangles) reservoir sample. When the current observed edge completes a triangle with two edges in the edge reservoir sample, this triangle is a candidate for including in the triangle reservoir sample. We count a 4-clique when the edge currently observed completes a 4-clique using both triangles currently in the triangles reservoir sample and edges in the edge sample.  
At each time step $t$, both algorithms maintain a \emph{running estimation} $\fcliest[t]$ of $|\clisett{4}{t}|$. Clearly $\fcliest[0]= 0$ and the estimator is increased every time a 4-clique is ``\emph{observed}'' on the stream. The two algorithms also maintains a counter $\ntris[t]$ for the number of triangles observed in the stream up to time $t$. This value is used by the reservoir sampling scheme which manages the triangle reservoir.  

\subsection{Algorithm \algomulti{} description}
\begin{algorithm}[t]
 \tiny % small font size for algorithms is required by ACM style
  \caption{\algomulti{} - Tiered Sampling for 4-Clique counting}
  \label{alg:ncliquecounter}
  \begin{algorithmic}[0]
  \Statex{\textbf{Input:} Insertion-only edge stream $\Sigma$, integers $M$, $M_\Delta$}
    \State $\Samp_e \leftarrow\emptyset$ , $\Samp_\Delta \leftarrow\emptyset$, $t\leftarrow 0$, $t_\Delta\leftarrow 0$, $\sigma\leftarrow 0$
    \For{ {\bf each} element $(u,v)$ from $\Sigma$}
    \State $t\leftarrow t +1$
    \State \textsc{Update4Cliques}$(u,v)$
    \State \textsc{UpdateTriangles}$(u,v)$
    \If{\textsc{SampleEdge}$((u,v), t )$}
      \State $\Samp \leftarrow \Samp\cup \{((u,v), t)\}$
    \EndIf
    \EndFor
    \Statex
    \Function{Update4Cliques}{$(u,v), t$}
     % \State $\mathcal{N}^\Samp_{u,v} \leftarrow \mathcal{N}^\Samp_u \cap \mathcal{N}^\Samp_v$ 
      \For{{\bf each} triangle $(u, w, z)\in \Samp_\Delta$}
        \If{$(v,w) \in \Samp_e \wedge (v,w) \in \Samp_e$}
      		\State $p \leftarrow \textsc{ProbClique}((u,w,z),(v,w), (v,z)) $
      		\State $\sigma \leftarrow \sigma + p^{-1}/2$
     \EndIf   
      \EndFor
      \For{{\bf each} triangle $(v, w, z)\in \Samp_\Delta$}
        \If{$(u,w) \in \Samp_e \wedge (u,z) \in \Samp_e$}
      		\State $p \leftarrow \textsc{ProbClique}((v,w,z),(u,w), (u,z)) $
      		\State $\sigma \leftarrow \sigma + p^{-1}/2$
     \EndIf   
      \EndFor
    \EndFunction
    \Statex
    \Function{UpdateTriangles}{$(u,v), t $}
      \State $\mathcal{N}^\Samp_{u,v} \leftarrow \mathcal{N}^\Samp_u \cap \mathcal{N}^\Samp_v$ 
      \For{ {\bf each} element $w$ from $\mathcal{N}^\Samp_{u,v}$}
       	\State $t_\Delta \leftarrow t_\Delta + 1$
      	\If{\textsc{SampleTriangle}$(u,v,w)$}
      		\State $\Samp_\Delta \leftarrow \Samp_\Delta\cup \{u,v,w\}$
      	\EndIf
      \EndFor
    \EndFunction
    \Statex
    \Function{\textsc{SampleTriangle}}{$u,v,w$} 
         \If {$t_\Delta\leq  M_\Delta$}
        \State \textbf{return} True
      \ElsIf{\textsc{FlipBiasedCoin}$(\frac{M_\Delta}{t_\Delta}) = $ heads}
        \State $(u_1,v_1,w_1) \leftarrow$ random triangle from $\Samp_\Delta$
        \State $\Samp_\Delta\leftarrow \Samp_\Delta\setminus \{(u_1,v_1,w_1)\}$
        \State \textbf{return} True
      \EndIf
      \State \textbf{return} False
    \EndFunction
    \Statex
    \Function{\textsc{SampleEdge}}{$(u,v),t$} 
      \If {$t\leq  M$}
        \State \textbf{return} True
      \ElsIf{\textsc{FlipBiasedCoin}$(\frac{M}{t}) = $ heads}
        \State $((u',v'),t') \leftarrow$ random edge from $\Samp$
        \State $\Samp\leftarrow \Samp\setminus \{((u',v'),t')\}$
        %\State \textsc{UpdateCounters}$(-,(u',v'))$
        \State \textbf{return} True
      \EndIf
      \State \textbf{return} False
    \EndFunction
  \end{algorithmic}
\end{algorithm}

Algorithm \algomulti{} maintains an \emph{edges (resp., triangles) reservoir sample} $\Samp_e$ (resp., $\Samp_\Delta$) of \emph{fixed} size $M_e$ (resp., $M_\Delta$). From Lemma~\ref{lem:reservoirhighorder}, for any $t$ the probability of any edge $e$ (resp., triangle $T$) observed on the stream $\Sigma$ (resp., observed by the \algomulti{}) to be included in $S_e$ (resp., $S_\Delta$) is $M_e/t$ (resp. $M_\Delta/\tau^{(t)}$). We denote as $\Samp_e^{(t)}$ (resp., $\Samp_\Delta^{(t)}$) the set of edges (resp., triangles) in $\Samp_e$ (resp., $\Samp_\Delta$)  \emph{before} any update to the sample(s) occurring at step $t$. We denote with $\neigh^{\Samp_e^{(t)}}_u$ the \emph{neighborhood} of $u$ with respect to the edges in $\Samp_e^{(t)}$, that is $\neigh^{\Samp_e^{(t)}}_u =\{v\in V^{\Samp_e^{(t)}} ~:~ (u,v)\in \Samp_e^{(t)}\}$.

Let $e_t = (u,v)$ be the edge observed on the stream at time $t$. At each step \algomulti{} executes three main tasks:
\begin{itemize}
	\item \emph{Estimation update:} \algomulti{} invokes the function \textsc{Update 4-cliques} to detect any 4-clique completed by $(u,v)$  (see Figure~\ref{fig:4clidetection} for an example). That is, the algorithm verifies whether in triangle reservoir $\Samp_\Delta$ there exists any triangle $T$ which includes $u$ (resp., $v$). Note that since each edge is observed just once and the edge $(u,v)$ is being observed for the first time no triangle in $\Samp_\Delta$ can include both $u$ and $v$. For any such triangle $T'=\{u,w,z\}$ (or $\{v,w,z\}$) the algorithm checks whether the edges $(v,w)$ and $(v,z)$ (resp., $(v,w)$ and $(v,z)$) are currently in $\Samp_e$. When such conditions are meet, we say that a 4-clique is ``\emph{observed on the stream}''. The algorithm then uses \textsc{ProbClique} to compute the \emph{exact} probability $p$ of the observation based on the timestamps of all its edges. The estimator $\fcliest$ is then increased by $p^{-1}/2$.
	\item \emph{Triangle sample update:} using \textsc{UpdateTriangles} the algorithm verifies whether the edge $e_t$ completes any triangle with the edges in $\Samp_e^{(t)}$. If that is the case, we say that a new triangle $T^*$ is \emph{observed on the stream}. The counter $\tau$ is increased by one and the new triangle is a candidate for inclusion in $\Samp_\Delta$ with probability $M_\Delta/\tau^{t}$.
	\item \emph{Edge sample update:} the algorithm updates the edge sample $\Samp_e$ according to the Reservoir Sampling scheme described in Section~\ref{sec:preliminaries}.  
\end{itemize} 

Each time a 4-clique is observed on the stream, \algomulti{} uses \textsc{ProbClique} to compute the \emph{exact probability} of the observation. Such computation is all but trivial as it is influenced by both the order according to which the edges of the 4-clique were observed on the stream, and by the number of triangles observed on the stream $\ntris[t]$. The analysis proceeds using a (somehow tedious) analysis of \emph{all} the 5! possible orderings of the first five edges of the clique observed on the stream. Due to space limitations we refer the reader to Lemma~\ref{lem:probclique} of the extend version of this work~\cite{destefani2017} for the detailed computation.  

%\begin{corollary}
%	For any $\lambda\in \clisett{4}{t}$, let $p_\lambda$ denote the probability of $\lambda$ being observed on the stream by \algomulti{}. we  have:
%	\begin{equation*}
%	\displaystyle \minone{\prod_{i=0}^{3}\frac{M_e-i}{t-i}}\minone{\displaystyle\prod_{i=0}^{1}\frac{M_\Delta-i}{\ntris[t] -i}}
%	\end{equation*}
%\end{corollary}
%
\subsection{Analysis of the estimator}
We now present the analysis of the estimations obtained using \algomulti{}. First we show their \emph{unbiasedness} and we then provide a bound on their variance. We refer the reader to Appendix~\ref{app:probability computation} of the extended version of this work for the complete proofs~\cite{destefani2017}. In the following we denote as $t_\Delta$  the first time step at for which the number of triangles seen by \algomultione{} exceeds $M_\Delta$.

\begin{theorem}\label{thm:multiunbiased}
$\varkappa^{(t)}=|\clisett{k}{t}|$ if $t \leq \min \{M_e, t_\Delta\}$, and
  $\mathbb{E}\left[\varkappa^{(t)}\right]$ $=|\clisett{k}{t}|$ otherwise.
\end{theorem}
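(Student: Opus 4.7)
The plan is to express $\varkappa^{(t)}$ as a sum of per-clique contributions and verify that each contributes exactly $1$ in expectation. Fix a 4-clique $C = \{a,b,c,d\} \in \clisett{4}{t}$ and let $(u,v)$ denote its sixth (last) edge on the stream, with $\{w,z\} = C \setminus \{u,v\}$. A short argument shows that $C$ can influence $\sigma$ only at the step $t_6$ at which $(u,v)$ arrives: the detection requires three edges of a triangle plus two additional edges to have been observed, which is only possible when the sixth edge of $C$ arrives. Inside \textsc{Update4Cliques} at step $t_6$ there are exactly two checks capable of detecting $C$: let $A_u$ be the event ``$T_u := (u,w,z) \in \Samp_\Delta^{(t_6)}$ and $(v,w),(v,z) \in \Samp_e^{(t_6)}$'', and let $A_v$ be its analogue obtained by swapping $u$ with $v$. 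Whenever $A_u$ (resp.\ $A_v$) holds, the algorithm increments $\sigma$ by $\frac{1}{2\Pr(A_u)}$ (resp.\ $\frac{1}{2\Pr(A_v)}$), where these probabilities are what \textsc{ProbClique} is meant to return.

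For the deterministic regime $t \le \min\{M_e, t_\Delta\}$, Lemma~\ref{lem:reservoirhighorder} gives that every edge seen so far lies in $\Samp_e$ and every triangle observed by \algomultione{} lies in $\Samp_\Delta$. The three edges of $T_u$ all precede $(u,v)$ on the stream and are all in $\Samp_e$ when the latest of them arrives, so $T_u$ is observed and hence in $\Samp_\Delta^{(t_6)}$; the same holds for $T_v$, and $(v,w),(v,z),(u,w),(u,z)$ are all in $\Samp_e^{(t_6)}$. Thus $\Pr(A_u) = \Pr(A_v) = 1$ and both checks fire deterministically, giving a total contribution of $\tfrac{1}{2}+\tfrac{1}{2}=1$ per clique and $\varkappa^{(t)} = |\clisett{4}{t}|$ pointwise.

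For the general case I would apply linearity of expectation. The contribution of $C$ to $\varkappa^{(t)}$ is
\[
X_C \;=\; \frac{\mathbf{1}_{A_u}}{2\,\Pr(A_u)} \;+\; \frac{\mathbf{1}_{A_v}}{2\,\Pr(A_v)},
\]
so $\Exp{X_C} = \tfrac{1}{2}+\tfrac{1}{2} = 1$ \emph{provided} that the $p$ returned by \textsc{ProbClique} coincides with $\Pr(A_u)$ and $\Pr(A_v)$ exactly. Summing over $C \in \clisett{4}{t}$ then yields $\Exp{\varkappa^{(t)}} = |\clisett{4}{t}|$.

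The hard part is justifying the correctness of \textsc{ProbClique}. The event $T_u \in \Samp_\Delta^{(t_6)}$ is subtle because it requires both that $T_u$ was observed at all (its two earlier edges had to be in $\Samp_e$ at the step when its third edge arrived, in order for \textsc{UpdateTriangles} to detect it) and that it subsequently survived reservoir sampling in $\Samp_\Delta$; simultaneously $(v,w)$ and $(v,z)$ must remain in $\Samp_e^{(t_6)}$. These survival events are coupled across the two reservoirs, and their joint probability depends both on the order in which the first five edges of $C$ appeared on the stream and on $\ntris[t_6]$ relative to $M_\Delta$. I would dispatch this by case-splitting over the $5!$ orderings of those five edges, applying Lemma~\ref{lem:reservoirhighorder} within each case to obtain a closed form for the joint survival probability, and confirming that \textsc{ProbClique} returns the same value; since the authors defer this enumeration to the companion lemma in the extended version, I would invoke that computation rather than reproduce it here.
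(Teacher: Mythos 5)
Your proposal is correct and follows essentially the same route as the paper: decompose $\varkappa^{(t)}$ into two per-clique indicator contributions (one for each triangle sub-structure through which \algomultione{} can detect the clique), each weighted by half the inverse of its exact observation probability so that each contributes $1/2$ in expectation, then apply linearity of expectation, with the deterministic regime $t \le \min\{M_e, t_\Delta\}$ handled by noting that all edges and all observed triangles are retained with probability $1$. Your deferral of the correctness of \textsc{ProbClique} to the case analysis over the $5!$ edge orderings matches the paper's reliance on its Lemma~\ref{lem:probclique}.
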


We now show an \emph{upper bound} to the variance of the \algomulti{}
estimations. The proof relies on a very careful analysis of the order of arrival of the edge shared between a pair of two 4-cliques. We present here the most general result for $t\geq{M_e,t_\Delta}$. Note that if $t \leq \min \{M_e, t_\Delta\}$, from Theorem~\ref{thm:multiunbiased}, $\varkappa^{(t)}=|\clisett{k}{t}|$ and hence $\textrm{Var}\left[\varkappa^{(t)}\right] = 0$. For $\min \{M_e, t_\Delta\}<t\leq \max \{M_e, t_\Delta\}$, $\textrm{Var}\left[\varkappa^{(t)}\right]$ admits an upper bound similar to the one in Theorem~\ref{thm:variancemulti}.

\begin{theorem}\label{thm:variancemulti}
For any time $t>\min\{M_e,t_\Delta$\}, we have
  \begin{align*}
    \textrm{Var}\left[\varkappa^{(t)}\right]
    &\le
    |\clisett{4}{t}|\left(c\left(\frac{t-1}{M_e}\right)^4\left(\frac{\tau^{(t)}}{M_\Delta}\right)-1\right) \\
    &+2a^{(t)}\left(c\frac{t-1}{M_e}-1\right)\\ &+2b^{(t)}\left(c\left(\frac{t-1}{M_e}\right)^2\left(\frac{1}{4}\frac{\tau^{(t)}}{M_\Delta} + \frac{3}{4}\frac{t-1}{M_e} \right)-1\right),
  \end{align*}
  where $a^{(t)}$ (resp., $b^{(t)}$) denotes the number of unordered pairs of 4-cliques which share one edge (resp., three edges) in $G^{(t)}$, and $c\geq \frac{M_e^3}{(M_e-1)(M_e-2)(M_e-3)}$.
\end{theorem}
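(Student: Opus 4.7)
The plan is to expand $\varkappa^{(t)} = \sum_{\lambda \in \clisett{4}{t}} \hat{X}_\lambda$, where $\hat{X}_\lambda$ is the total contribution of the $4$-clique $\lambda$ to $\varkappa^{(t)}$, i.e.\ the sum of $p^{-1}/2$ over the detections that $\lambda$ triggers when its last edge is observed on the stream. Theorem~\ref{thm:multiunbiased} gives $\Exp{\hat{X}_\lambda}=1$, so
\[
\Var{\varkappa^{(t)}} \;=\; \sum_{\lambda} \Var{\hat{X}_\lambda} \;+\; \sum_{\lambda \neq \lambda'} \Cov{\hat{X}_\lambda, \hat{X}_{\lambda'}}.
\]
The three terms of the stated bound will come, respectively, from the diagonal piece and from grouping the off-diagonal pairs by the number $s$ of edges they share: any two distinct $4$-cliques share $s\in\{0,1,3\}$ edges, since $k$ shared vertices force $\binom{k}{2}$ shared edges.

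For the diagonal, I would write $\hat{X}_\lambda = \sum_{d \in \mathcal{D}_\lambda} Y_{\lambda,d}/(2 p_{\lambda,d})$, where $d$ ranges over the (at most two) triangle-plus-two-edge substructures of $\lambda$ that can trigger a detection at time $t$, $Y_{\lambda,d}$ is the indicator that detection $d$ fires, and $p_{\lambda,d}$ is its exact probability as computed by \textsc{ProbClique}. Then $\Exp{\hat{X}_\lambda^2}$ expands into diagonal pieces $1/(4p_{\lambda,d})$ and cross terms $\Pr(Y_{\lambda,d}Y_{\lambda,d'}=1)/(4 p_{\lambda,d} p_{\lambda,d'})$. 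Applying Lemma~\ref{lem:reservoirhighorder} together with the arrival-order case analysis behind \textsc{ProbClique}, each $1/p_{\lambda,d}$ can be bounded by $c\,((t-1)/M_e)^4\,(\tau^{(t)}/M_\Delta)$, where $c \geq M_e^3/((M_e-1)(M_e-2)(M_e-3))$ absorbs the gap between the exact falling-factorial reservoir probability and its factored approximation. Subtracting $\Exp{\hat{X}_\lambda}^2 = 1$ and summing over $\lambda$ produces the first term.

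For the off-diagonal sum I would split on $s$. When $s=0$ the two detections involve disjoint edge sets and, after conditioning on the (minimal) common randomness, the covariance is non-positive and can be dropped. When $s=1$ the two detections need the single shared edge in $\Samp_e$ only once, which saves a factor of $(t-1)/M_e$ in the joint probability relative to the product of marginals; summing yields the $2a^{(t)}(c(t-1)/M_e - 1)$ term. When $s=3$, $\lambda$ and $\lambda'$ share an entire triangle $T_0$, and one must sub-case on which triangle each clique uses for its own detection: with weight $1/4$ both detections use $T_0$ (saving a $\tau^{(t)}/M_\Delta$ factor, since a single triangle-reservoir slot serves both), and with weight $3/4$ they use distinct triangles (saving instead a $(t-1)/M_e$ from a shared non-triangle edge). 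Combining these sub-cases produces the mixture $\tfrac{1}{4}(\tau^{(t)}/M_\Delta) + \tfrac{3}{4}(t-1)/M_e$ inside the third term.

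The main obstacle, and the place where the $1/4$--$3/4$ split actually appears, is this joint analysis for pairs sharing three edges. Each of $\lambda$ and $\lambda'$ has its own ``last edge'', and one has to track the interleaving of (i) the three edges of the common triangle $T_0$, (ii) the two distinct last edges, and (iii) the remaining non-shared edges, because this interleaving decides both whether $T_0$ could have been stored in $\Samp_\Delta$ before either detection fires and whether the other required edges were simultaneously in $\Samp_e$. As in \textsc{ProbClique}, this becomes a finite but tedious case analysis over the orderings of the six edges spanning $\lambda \cup \lambda'$, and it is the bookkeeping of which reservoir events are conditionally independent given the ordering, together with the slack captured by $c$, that closes out the computation and yields the stated bound.
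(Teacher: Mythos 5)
Your proposal follows essentially the same route as the paper's proof: the same decomposition into the (at most two) triangle-plus-two-edges detection indicators per clique, the same classification of distinct pairs by $0$, $1$, or $3$ shared edges with the non-positive-covariance argument for disjoint pairs, the same $\tfrac14$--$\tfrac34$ split according to whether both detections use the shared triangle, and the same role for $c$ in absorbing the falling-factorial slack. The only difference is cosmetic (you group the two detection variables into a single $\hat{X}_\lambda$ before expanding, where the paper keeps $\delta_{\lambda_1},\delta_{\lambda_2}$ separate throughout), and you correctly flag the arrival-order case analysis for triangle-sharing pairs as the remaining bookkeeping, which is exactly where the paper's own proof concedes looseness by grouping orderings into macro-cases.
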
 

\subsection{Memory partition across layers}\label{sec:memorypartition}
In most practical scenario we assume that a certain amount of total available memory $M$ is available for algorithm \algomulti{}. A natural question concern what is the best way of spitting the available memory between $\Samp_e$ and $\Samp_\Delta$. While different heuristics are possible, in our work we chose to assign the available space in such a way that the dominant first term of upper bound of \algomulti{} in Theorem~\ref{thm:variancemulti} is minimized. For $0<\alpha<1$ let $M_e = \alpha M$ and $M_\Delta = \left(1-\alpha\right)M$. Then we have that $|\clisett{4}{t}|\left(c\left(\frac{t-1}{\alpha M}\right)^4\left(\frac{\tau^{(t)}-1}{(1-\alpha)M}\right)-1\right)$ is minimized for $\alpha = 4/5$. This convenient splitting rule works well in most cases, and is used in most of the paper. Section~\ref{sec:adaptive} we discuss a more sophisticated dynamic allocation of memory, \algodynamic{}, and present experimental results for this method in Section~\ref{sec:exada}.
\subsection{Concentration bound}\label{sec:concent}
We now show a concentration result on the estimation of \algomultione{}, which relies on Chebyshev's inequality~\citep[Thm.~3.6]{mitzenmacher2005probability}.

\begin{theorem}\label{thm:multioneconcentration}
Let $t>\min\{M_e,t_\Delta$\} and assume $|\clisett{4}{t}|>0$. For any
  $\varepsilon,\delta\in(0,1)$, if
  \begin{align*}
    M> \max\Bigl\{&\frac{5}{4}\sqrt[5]{\frac{12c(t-1)^4(\ntris[t])}{\delta
      \varepsilon^2|\clisett{4}{t}|}},\frac{15c a^{(t)}(t-1)}{\delta
    \varepsilon^2 |\clisett{4}{t}|^2},\\
   &\frac{5}{4}\sqrt[3]{\frac{3b^{(t)}c\left(t-1\right)^2\left(4\ntris[t]+3(t-1)\right)}{2\delta
    \varepsilon^2 |\clisett{4}{t}|^2}} \Bigr\}
  \end{align*}
  then $|\varkappa^{(t)}-|\clisett{4}{t}|<\varepsilon|\clisett{4}{t}|$ with
  probability $>1-\delta$.
\end{theorem}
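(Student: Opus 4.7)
The plan is to apply Chebyshev's inequality to the unbiased estimator $\varkappa^{(t)}$ (unbiasedness is provided by Theorem~\ref{thm:multiunbiased} once $t>\min\{M_e,t_\Delta\}$) and then solve for the smallest $M$ making the resulting bound at most $\delta$. Specifically, I would write
\begin{equation*}
\Prob{\abs{\varkappa^{(t)}-|\clisett{4}{t}|}\ge \varepsilon|\clisett{4}{t}|}\;\le\;\frac{\Var{\varkappa^{(t)}}}{\varepsilon^2|\clisett{4}{t}|^2},
\end{equation*}
so it suffices to guarantee $\Var{\varkappa^{(t)}}\le \delta\varepsilon^2|\clisett{4}{t}|^2$.

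Next I would invoke Theorem~\ref{thm:variancemulti}, which bounds $\Var{\varkappa^{(t)}}$ by a sum of three nonnegative terms (after dropping the harmless ``$-1$'' summands, which only makes each term larger). Using the union-style argument, it is enough to force each of the three terms to be at most $\delta\varepsilon^2|\clisett{4}{t}|^2/3$; any $M$ satisfying all three inequalities simultaneously then satisfies the Chebyshev condition. I would substitute the fixed memory partition from Section~\ref{sec:memorypartition}, namely $M_e=\tfrac{4}{5}M$ and $M_\Delta=\tfrac{1}{5}M$, so that powers of $M_e$ and $M_\Delta$ combine into a single power of $M$ with explicit constants $(5/4)^{k}5$ that, under a fifth/third/first root, collapse to the $5/4$ prefactors appearing in the statement.

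Concretely, the first term $|\clisett{4}{t}|\,c(t-1)^4\tau^{(t)}/(M_e^4 M_\Delta)\le \delta\varepsilon^2|\clisett{4}{t}|^2/3$ leads, after substituting the split and taking fifth roots, to the first branch $M\ge \tfrac{5}{4}\sqrt[5]{12c(t-1)^4\tau^{(t)}/(\delta\varepsilon^2|\clisett{4}{t}|)}$ (the factor $12$ emerges from $3\cdot 5^5/4^4 = 3\cdot 3125/256$ under the fifth root together with the $5/4$ pulled out front). The second term $2a^{(t)}c(t-1)/M_e\le \delta\varepsilon^2|\clisett{4}{t}|^2/3$ is linear in $1/M$ and gives directly the $15\,c\,a^{(t)}(t-1)/(\delta\varepsilon^2|\clisett{4}{t}|^2)$ branch (the $15$ being $2\cdot 3\cdot 5/2$ after clearing the $4/5$). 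The third term $2b^{(t)}c(t-1)^2\bigl(\tfrac{1}{4}\tau^{(t)}/M_\Delta+\tfrac{3}{4}(t-1)/M_e\bigr)/M_e^2\le \delta\varepsilon^2|\clisett{4}{t}|^2/3$ is cubic in $1/M$ once the memory split is inserted, so a cube root yields the third branch.

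Taking the maximum of the three lower bounds on $M$ produces exactly the condition in the statement; the three sufficient conditions together imply $\Var{\varkappa^{(t)}}\le\delta\varepsilon^2|\clisett{4}{t}|^2$, and Chebyshev completes the argument. I expect the main obstacle to be purely bookkeeping: tracking the constants through the substitution $M_e=4M/5,\,M_\Delta=M/5$ so that the fifth and cube roots produce precisely the $12$ and the $\tfrac{5}{4}$ that appear in the theorem, and verifying that the ``$-1$'' terms in Theorem~\ref{thm:variancemulti} can indeed be discarded (they can, since they are subtracted from positive quantities, and our sufficient conditions only become easier to satisfy). No new probabilistic content is required beyond Theorems~\ref{thm:multiunbiased} and~\ref{thm:variancemulti} and Chebyshev.
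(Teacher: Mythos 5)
Your proposal is correct and follows the intended route: the paper's concentration bound is exactly Chebyshev's inequality applied to the unbiased estimator (Theorem~\ref{thm:multiunbiased}) combined with the variance bound of Theorem~\ref{thm:variancemulti}, requiring each of the three variance terms to be at most $\delta\varepsilon^2|\clisett{4}{t}|^2/3$ under the split $M_e=4M/5$, $M_\Delta=M/5$; your first and third branches reproduce the stated constants exactly. The only quibble is in the second branch, where the exact computation gives $M\ge \tfrac{15}{2}ca^{(t)}(t-1)/(\delta\varepsilon^2|\clisett{4}{t}|^2)$ rather than $15$ (your ``$2\cdot3\cdot5/2$'' accounting is off; it should be $2\cdot 3\cdot\tfrac{5}{4}$), but since the theorem's hypothesis is the more conservative condition it still implies yours, so the argument goes through unchanged.
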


\subsection{Algorithm \algomultitwo{}}
\label{sec:algotwo}
%While the version of \algomulti{} presented in Algorithm~\ref{alg:ncliquecounter} detects 4-cliques by using one triangle sub-structure form $\Samp_\Delta$ and two edges from $\Samp_e$, it is possible to use different sub-structures to achieve the same goal.
%In particular we present a variation of \algomultione{} for which 
\algomultitwo{} detects
a 4-clique when the current observed edge completes a 4-clique with \emph{two triangles} in $\Samp_\Delta$. 
%In the following we refer to this version of the algorithm as \algomultitwo{}.
\algomultione{} and \algomultitwo{} differ  \emph{only} in the function \textsc{Update4Cliques}, the pseudocode for \algomultitwo{} is presented in Algorithm~\ref{alg:ncliquecounter2}.

\begin{algorithm}[t]
  \tiny% small font size for algorithms is required by ACM style
  \caption{\algomultitwo{} - Tiered Sampling for 4-Clique counting using 2 triangle sub-structures}
  \label{alg:ncliquecounter2}
  \begin{algorithmic}[0]
      \Function{Update4Cliques}{$(u,v), t$}
      \For{{\bf each} $(u,w,z) \in \Samp_\Delta\wedge (v,w,z)\in\Samp_\Delta$}
      \State $p \leftarrow \textsc{ProbClique}((u,w,z),(v,w,z)) $
      		\State $\sigma \leftarrow \sigma + p^{-1}/2$  
      \EndFor
    \EndFunction
   \end{algorithmic}
\end{algorithm}

The probability of detecting a 4-clique using \algomultitwo{} computed by \textsc{ProbClique} is  different from the correspondent probability computed in \algomultione{}. For the details we refer the reader to Lemma~\ref{lem:probclique2} in Appendix~\ref{app:probability computation2} of the extended version~\cite{destefani2017}.
\begin{lemma}\label{lem: probalgo2}
	Let $\lambda\in \clisett{4}{t}$ and let $p_\lambda$ denote the probability of $\lambda$ being observed on the stream by \algomultitwo{}. We have:
\begin{equation*}
	p_\lambda \leq \left(\frac{M_e}{t-1}\right)^4\left(\frac{M_\Delta}{\tau^{(t)}}\right)^2.
\end{equation*}
\end{lemma}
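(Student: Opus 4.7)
The plan is to decompose the detection event across the two reservoir tiers and apply Lemma~\ref{lem:reservoirhighorder} at each level.

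By inspection of \algomultitwo{}, a 4-clique $\lambda$ on vertices $\{u,v,w,z\}$ is detected only at the step at which its last-arriving edge is processed. I call this edge $(u,v)$ and this step $t$; the two triangles checked by \textsc{Update4Cliques} are then $T_1=(u,w,z)$ and $T_2=(v,w,z)$, which share the edge $(w,z)$ and together cover the five edges of $\lambda$ other than $(u,v)$. Hence $p_\lambda=\Pr[T_1,T_2\in\Samp_\Delta^{(t)}]$, and I factor
\[p_\lambda = \Pr[T_1,T_2\text{ both observed by step }t]\cdot\Pr[T_1,T_2\in\Samp_\Delta^{(t)}\mid\text{both observed}].\]

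For the second factor I apply Lemma~\ref{lem:reservoirhighorder} at the triangle reservoir: conditioned on $T_1$ and $T_2$ being among the $\tau^{(t)}$ elements of the triangle stream, their joint survival probability is $\tfrac{M_\Delta(M_\Delta-1)}{\tau^{(t)}(\tau^{(t)}-1)}\le(M_\Delta/\tau^{(t)})^2$, supplying the triangle factor of the bound.

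For the first factor I case-split on the arrival order of the five edges of $T_1\cup T_2$. The representative worst case is when the shared edge $(w,z)$ arrives last: both triangles are then observed at that same step, and joint observation requires the other four edges $(u,w),(u,z),(v,w),(v,z)$ all to lie in $\Samp_e$ immediately before that step. Lemma~\ref{lem:reservoirhighorder} applied at the edge reservoir bounds this by $\prod_{i=0}^{3}\tfrac{M_e-i}{t-1-i}\le (M_e/(t-1))^4$. In every other ordering $T_1$ and $T_2$ are observed at two distinct steps, and their joint observation imposes two pairs of edge-reservoir conditions at those steps; a careful coupling to a single high-order reservoir application at the later observation step again yields the same $(M_e/(t-1))^4$ upper bound. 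Multiplying the two factors gives the claim.

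The main obstacle is the case analysis for orderings in which $(w,z)$ is not last: the edge $(w,z)$ features in the required edge-reservoir conditions at both observation steps, so the joint probability does not factor as a product of independent high-order reservoir events. Resolving this cleanly requires the coupled application of Lemma~\ref{lem:reservoirhighorder} sketched above, and I expect the bookkeeping to constitute the bulk of the detailed computation referenced in Appendix~\ref{app:probability computation2} of the extended version.
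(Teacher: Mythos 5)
Your top-level structure matches the paper's: Lemma~\ref{lem:twocounttwo} identifies the unique pair of triangles $T_1=(u,w,z)$, $T_2=(v,w,z)$ sharing $(w,z)$, and the paper's proof (Lemma~\ref{lem:probclique2}) likewise factors $p_\lambda$ into ``both triangles observed'' times ``both survive in $\Samp_\Delta$'', with the triangle-tier factor $\frac{M_\Delta}{\ntris[t_6]}\frac{M_\Delta-1}{\ntris[t_6]-1}$ handled exactly as you handle it.

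The gap is in the edge-tier factor. The event ``$(u,w),(u,z),(v,w),(v,z)$ all lie in $\Samp_e$'' must be evaluated at the step $s$ at which the last of the five edges of $T_1\cup T_2$ arrives, and $s\le t-1$ strictly precedes the detection step (indeed \algomultitwo{}, unlike \algomultione{}, never requires any edge of $\lambda$ to still be in $\Samp_e$ at detection time). Lemma~\ref{lem:reservoirhighorder} then gives $\prod_{i=0}^{3}\frac{M_e-i}{s-1-i}$, which is at least --- not at most --- $\prod_{i=0}^{3}\frac{M_e-i}{t-1-i}$; in the extreme where all five edges arrive within the first $M_e$ steps, both triangles are observed with probability $1$. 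So your ``representative worst case'' bound and the unproven ``coupling to the later observation step'' for the remaining orderings assert an inequality in the wrong direction, and the step fails as written. The paper does not argue via a worst-case ordering at all: Lemma~\ref{lem:probclique2} computes $p_\lambda$ \emph{exactly} as $\minone{\frac{M_e}{t^M_{1,3,5}-1}\frac{M_e-1}{t^M_{1,3,5}-2}}\minone{\frac{M_\Delta}{\ntris[t_6]}\frac{M_\Delta-1}{\ntris[t_6]-1}}\,p'$, where $p'=\Prob{E_{S(T_1)}\mid E_{S(T_2)}}$ is obtained from a thirteen-case analysis of the arrival order of $e_1,\dots,e_5$, including conditional-survival ratios such as $\frac{t_1-1}{t_2-1}$, all expressed in the actual arrival times rather than in $t$; the stated bound of the lemma is then read off from that expression. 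To repair your argument you would need to reproduce this case-by-case computation (handling the shared edge $e_1$ explicitly in the conditioning), rather than pushing all four edge-membership events to time $t$, since that push is exactly what the reservoir probabilities do not permit.
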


Applying this lemma to an analysis similar to the one used in the proof of Theorem~\ref{thm:multiunbiased} we prove that the estimations obtained using \algomultitwo{} are \emph{unbiased}.
\begin{theorem}\label{thm:multiunbiased2}
 The estimator $\varkappa$ returned by \algomultitwo{} is \emph{unbiased}, that is:
  $\mathbb{E}\left[\varkappa^{(t)}\right]$ $=|\clisett{k}{t}|$.
\end{theorem}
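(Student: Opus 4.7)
\textbf{Proof proposal for Theorem~\ref{thm:multiunbiased2}.} The plan is to follow the standard ``indicator plus inverse probability weighting'' recipe that worked for \algomultione{} in Theorem~\ref{thm:multiunbiased}, but now tracking \emph{two} triangle substructures instead of one triangle and two edges. For each $\lambda \in \clisett{4}{t}$ on vertices $\{a,b,c,d\}$, let $e_\lambda$ denote the last of the six edges of $\lambda$ to arrive on the stream, let $t_\lambda \le t$ be its arrival time, and let $\{a,b\}$ be its endpoints (so that $\{a,c,d\}$ and $\{b,c,d\}$ are the two triangles of $\lambda$ that do \emph{not} contain $e_\lambda$). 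Define the indicator
\[
X_\lambda = \mathbf{1}\left[\{a,c,d\}\in \Samp_\Delta^{(t_\lambda)} \ \wedge\  \{b,c,d\}\in \Samp_\Delta^{(t_\lambda)}\right],
\]
i.e.\ $X_\lambda=1$ precisely when $\lambda$ is ``observed on the stream'' by \algomultitwo{}. Let $p_\lambda = \Pr[X_\lambda=1]$.

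Next, I would show that $\varkappa^{(t)} = \sum_{\lambda \in \clisett{4}{t}} X_\lambda\, p_\lambda^{-1}$. Any increment to $\sigma$ is triggered inside \textsc{Update4Cliques} when the arriving edge $(u,v)$ and a pair of triangles from $\Samp_\Delta$ jointly form a 4-clique. A given $\lambda$ can contribute to $\sigma$ only at its unique completion time $t_\lambda$, since the other five edges must already be present and the sixth arrives only once. At that time, the inner \texttt{for} loop ranges over ordered pairs $(w,z)$ satisfying $(u,w,z),(v,w,z)\in\Samp_\Delta$; for $\lambda$ this is satisfied by exactly the two ordered pairs $(c,d)$ and $(d,c)$, so the loop fires twice for $\lambda$ and each firing adds $p^{-1}/2$ to $\sigma$. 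Provided \textsc{ProbClique} returns $p = p_\lambda$, the net contribution of $\lambda$ is $X_\lambda\, p_\lambda^{-1}$, and the formula for $\varkappa^{(t)}$ follows by summing over all cliques.

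The proof then closes by linearity of expectation:
\[
\Exp{\varkappa^{(t)}} \;=\; \sum_{\lambda \in \clisett{4}{t}} \Exp{X_\lambda}\, p_\lambda^{-1} \;=\; \sum_{\lambda \in \clisett{4}{t}} p_\lambda \cdot p_\lambda^{-1} \;=\; \abs{\clisett{4}{t}},
\]
assuming every $p_\lambda>0$ (which holds because each of the two triangles of $\lambda$ has positive probability of being sampled, by Lemma~\ref{lem:reservoirhighorder} and the fact that the triangle reservoir admits every observed triangle with positive probability).

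The main obstacle is the inner step ``\textsc{ProbClique} returns exactly $p_\lambda$''. Lemma~\ref{lem: probalgo2} only gives the upper bound $p_\lambda \le (M_e/(t-1))^4 (M_\Delta/\ntris[t])^2$; the \emph{exact} value depends on the order in which the six edges of $\lambda$ appeared on the stream, because this order determines when the two completing triangles became visible to the triangle-reservoir sampling process and how many other triangles had accumulated by then. I would handle this by a case analysis over the $6!$ edge orderings (reduced, by symmetry between the two non-$e_\lambda$ triangles and between pairs of edges internal to each triangle, to a manageable number of cases), computing in each case the joint probability that both non-$e_\lambda$ triangles are first observed by \algomultitwo{} (which requires their two ``early'' edges to be in $\Samp_e$ at the right moment) and then retained in $\Samp_\Delta$ up to time $t_\lambda$, using Lemma~\ref{lem:reservoirhighorder} on each reservoir. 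This is exactly the computation deferred to Lemma~\ref{lem:probclique2} in Appendix~\ref{app:probability computation2} of the extended version~\cite{destefani2017}; granting that lemma, the unbiasedness argument above goes through.
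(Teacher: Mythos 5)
Your proposal matches the paper's proof in its essentials: both define, for each 4-clique $\lambda$, an inverse-probability-weighted indicator of the event that the two triangles of $\lambda$ not containing its last edge are simultaneously in $\Samp_\Delta$ at the completion time (the only way \algomultitwo{} can observe $\lambda$, per Lemma~\ref{lem:twocounttwo}), invoke Lemma~\ref{lem:probclique2} to guarantee that \textsc{ProbClique} returns the exact observation probability $p_\lambda$, and conclude by linearity of expectation that each clique contributes $1$ in expectation. Your extra bookkeeping about the loop firing on both ordered pairs so that the two $p^{-1}/2$ increments sum to $p^{-1}$, and the remark that $p_\lambda>0$, are fine refinements consistent with the paper's argument.
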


The analysis presents several complications due to the interplay of the probabilities of observing each of the two triangles that share an edge. For the details of these results we refer the reader to Appendix~\ref{app:probability computation2} of the extended version of this work~\cite{destefani2017}.
Following the same criterion discussed in Section~\ref{sec:memorypartition}, we use $|\Samp_e|=2M/3$ and $|\Samp_\Delta| = M/3$ as a general rule for assigning the available memory space between the two sample levels.
Although the difference between \algomultione{} and \algomultitwo{} may appear of minor interest, our experimental analysis show that it can lead to significantly different performances depending on the properties of the graph $G^{(t)}$. Intuitively, \algomultitwo{} emphasizes the importance of the triangle sub-structures compared to \algomultione{}, thus resulting in better performance when the input graph is very sparse with smaller frequencies of 3 and 4-cliques.

\section{Comparison with single sample approach}\label{sec:sing}
To quantify the advantage of our \tiesa{} approach we construct and fully analyze algorithm \algosingle{} that uses a single sample strategy.
We then compare the performance of \algosingle{} and \algomultione{} analytically, and through experiments on both synthetic and real-world data.
%a standard one reservoir sample algorithm \algomultione{} on both synthetic and real-world data.
%Given a certain fixed amount $M$ of available memory space, our \tiesa{} approach suggest that the user \emph{allocates} such memory into multiple tiers of reservoir samples, in order to exploit the sparsity of the sub-structures of the motif of interest. It is however only natural to wonder how this approach compares to  an alternative, somewhat simpler, strategy which maintains a \emph{single} sample of edges and relies \emph{only} on the edges in the sample to detect occurrences of the motif in $G^{(t)}$. 
%In this section first introduce and analyze algorithm \algosingle{}, that uses a single sample strategy, and we then compare the \algosingle{} and \algomultione{} on both synthetic and real-world data.

%we thoroughly compare the performance achieved by \algomulti{} with the performances of an algorithm, named \algosingle{}, that uses a single sample strategy. We analyze the estimator provided by \algosingle{} compare the analytical bound on the variance of the estimator and we then compare their performance on both synthetic and real world data (in Section~\ref{sec:experiments}).
\subsection{\algosingle{}}
 \algosingle{} is an extension of the reservoir sample triangle counting algorithm in \cite{destefani2017triest}, using one reservoir sample to store uniform random sample $\Samp$ of size $M$ of the edges observed over the stream. The pseudocode for \algosingle{} is presented in Algorithm~\ref{alg:ncliquecountersingle}.
 
% 
%  (FOUR clique ESTimation), maintains an uniform random sample $\Samp$ of size $M$ of the edges observed over the stream using the reservoir sampling scheme, in order to estimate the number of four cliques in $G^{(t)}$. This algorithm is a natural extension of the technique discussed in~\cite{destefani2017triest} for triangle counting. We refer the reader to Appendix~\ref{app:probability computation2} of the extended version of this work for the complete  proofs of the results in this section~\cite{destefani2017}.
%
%At each time step $t$, \algosingle{} maintains a \emph{running estimation} $\fcliest[t]$ of $|\clisett{4}{t}|$. Clearly $\fcliest[0]= 0$. Every time a new edge $e_t=(u,v)$ is observed on the stream, \algosingle{}  verifies whether $e_{(t)}$ completes any 4-cliques with the edges currently in $\Samp_e^{(t)}$. If that is the case, the estimator $\fcliest$ is increased by the reciprocal of the probability $p = \minone{\prod_{i=0}^4 \frac{M-i}{t-1-i}}$ of observing that same 4-clique (from Lemma~\ref{lem:reservoirhighorder}) using \algosingle{}. Finally, the algorithm updates $\Samp_e$ according to the reservoir sampling scheme discussed in Section~\ref{sec:preliminaries}. The pseudocode for \algosingle{} is presented in Algorithm~\ref{alg:ncliquecountersingle}.
%
\begin{algorithm}[ht]
\tiny
\caption{\algosingle{}}
  \label{alg:ncliquecountersingle}
   \begin{algorithmic}[0]
	\Statex{\textbf{Input:} Edge stream $\Sigma$, integer $M\ge6$}
	\Statex{\textbf{Output:} Estimation of the number of 4-cliques $\varkappa$}
    \State $\Samp_e \leftarrow\emptyset$, $t\leftarrow 0$, $\varkappa\leftarrow 0$ 
    \For{ {\bf each} element $(u,v)$ from $\Sigma$}
    \State $t\leftarrow t +1$
    \State \textsc{Update4Cliques}$(u,v)$
    \If{\textsc{SampleEdge}$((u,v), t )$} 
      \State $\Samp \leftarrow \Samp\cup \{(u,v)\}$
    \EndIf
    \EndFor
    \Statex
    \Function{Update4Cliques}{$u,v$}
      \State $\mathcal{N}^\Samp_{u,v} \leftarrow \mathcal{N}^\Samp_u \cap \mathcal{N}^\Samp_v$
      \For{ {\bf each} element $(x,w)$ from $\mathcal{N}^\Samp_{u,v} \times \mathcal{N}^\Samp_{u,v}$}
       \If{$(x,w)$ in  $\Samp_e$} 
       		\If{$t\leq M$}
       			\State{$p\leftarrow 1$}
       		\Else
      			\State $p \leftarrow  min \{1, \frac{M(M-1)(M-2)(M-3)(M-4)}{(t-1)(t-2)(t-3)(t-4)(t-5)}\}$
      		\EndIf
      	\State $\varkappa \leftarrow \varkappa + p^{-1}$
    \EndIf
	\EndFor
   \EndFunction
\end{algorithmic}
\end{algorithm}

The proofs of the following results can be found in the extended version of this work~~\cite{destefani2017}.
\begin{theorem}\label{thm:singleunbiased}
  Let $\varkappa^{(t)}$ the estimated number of 4-cliques in $G^{(t)}$ computed by \algosingle{} using memory of size $M$. $\varkappa^{(t)}=|\clisett{4}{t}|$ if $t\le M+1$ and
  $\mathbb{E}\left[\varkappa^{(t)}\right]$ $=|\clisett{4}{t}|$ if $t> M+1$.
\end{theorem}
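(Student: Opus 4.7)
The plan is to prove both claims via indicator random variables, using Lemma~\ref{lem:reservoirhighorder} to pin down the exact probability that any given 4-clique is detected. Fix a 4-clique $\lambda \in \clisett{4}{t}$ and let $t_\lambda \le t$ denote the time at which the last of its six edges arrives on the stream. Because \textsc{Update4Cliques} is invoked \emph{before} the sample update at step $t_\lambda$, the algorithm detects $\lambda$ if and only if the other five edges of $\lambda$ are all present in $\Samp_e$ at the end of time $t_\lambda-1$. Let $X_\lambda$ be the indicator of this event and $p_\lambda$ the value assigned by \textsc{Update4Cliques}. Then $\varkappa^{(t)} = \sum_{\lambda \in \clisett{4}{t}} X_\lambda / p_\lambda$, so by linearity it suffices to show that $\mathbb{E}[X_\lambda] = p_\lambda$ for every $\lambda$ counted up to time $t$.

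For the deterministic case $t \le M+1$, I would observe that every edge seen during $1, \ldots, t-1$ is in $\Samp_e$ with probability one (reservoir sampling is deterministic while $t-1 \le M$), so each $\lambda$ with $t_\lambda \le t$ satisfies $X_\lambda = 1$ deterministically. On the algorithm's side, when $t_\lambda \le M$ the code sets $p = 1$ directly; when $t_\lambda = M+1$ the $\min\{1,\cdot\}$ clause likewise produces $p = 1$ because the ratio $M(M-1)(M-2)(M-3)(M-4)/[(t_\lambda-1)(t_\lambda-2)(t_\lambda-3)(t_\lambda-4)(t_\lambda-5)]$ equals $1$ at $t_\lambda = M+1$. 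Hence $\varkappa^{(t)} = |\clisett{4}{t}|$ exactly.

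For $t > M+1$, I would split on $t_\lambda$. If $t_\lambda \le M+1$, the argument above gives $X_\lambda = 1$ and $p_\lambda = 1$, so $\mathbb{E}[X_\lambda/p_\lambda] = 1$. If $t_\lambda > M+1$, then at the end of step $t_\lambda - 1$ the reservoir is in its randomized regime with $t_\lambda - 1 > M \ge 6$, and Lemma~\ref{lem:reservoirhighorder} applied to the set of the five other edges of $\lambda$ gives
\begin{equation*}
\Pr(X_\lambda = 1) = \prod_{i=0}^{4} \frac{M-i}{(t_\lambda-1)-i} = \frac{M(M-1)(M-2)(M-3)(M-4)}{(t_\lambda-1)(t_\lambda-2)(t_\lambda-3)(t_\lambda-4)(t_\lambda-5)}.
\end{equation*}
Since $t_\lambda > M+1$ makes the displayed ratio strictly less than $1$, the $\min\{1,\cdot\}$ in the algorithm is inactive, and the $p_\lambda$ computed by \algosingle{} matches $\Pr(X_\lambda = 1)$ exactly. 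Thus $\mathbb{E}[X_\lambda/p_\lambda] = 1$ in this case as well, and summing over $\lambda \in \clisett{4}{t}$ yields $\mathbb{E}[\varkappa^{(t)}] = |\clisett{4}{t}|$.

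The only subtle point, and the one I would be most careful about, is the timing convention: one must argue that detection depends on the sample state at the end of step $t_\lambda - 1$ rather than $t_\lambda$, because the update to $\Samp_e$ for the current edge happens \emph{after} \textsc{Update4Cliques} returns. This timing is precisely what lines up the denominator $(t_\lambda-1)(t_\lambda-2)\cdots(t_\lambda-5)$ in the algorithm's $p$ with the hypothesis ``at the end of time step $t_\lambda-1$'' in Lemma~\ref{lem:reservoirhighorder}, and without it the probabilities would be off by one step and the estimator would be biased.
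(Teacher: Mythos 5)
Your proposal is correct and follows essentially the same route as the paper: the paper also writes $\varkappa^{(t)}$ as a sum of per-clique random variables with expectation $1$ (its $X_\lambda$ taking value $p_\lambda^{-1}$ is your $X_\lambda/p_\lambda$), derives the detection probability $\prod_{i=0}^{4}\frac{M-i}{t_\lambda-1-i}$ from Lemma~\ref{lem:reservoirhighorder} (packaged as Lemma~\ref{lem:singlecount}), and handles $t\le M+1$ by noting the reservoir holds the whole graph deterministically. Your explicit treatment of the timing convention and of cliques with $t_\lambda\le M+1$ when $t>M+1$ only makes explicit points the paper leaves implicit.
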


We now show an \emph{upper bound} to the variance of the \algosingle{}
estimations for $t>M$ (for $t\leq M$ we have $\varkappa^{(t)}=|\clisett{4}{t}|$ and thus the variance of $\varkappa^{(t)}$ is zero).

\begin{theorem}\label{thm:variancesingle}
  For any time $t>M+1$, we have
  \begin{align*}
    \textrm{Var}\left[\varkappa^{(t)}\right]
    &\le
    |\clisett{4}{t}|\left(\left(\frac{t-1}{M}\right)^5-1\right) \\
    &+a^{(t)}\left(\frac{t-1}{M}-1\right)+b^{(t)}\left(\left(\frac{t-1}{M_e}\right)^3-1\right).
  \end{align*}
\end{theorem}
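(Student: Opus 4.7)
The plan is to express the estimator as a sum over 4-cliques, expand the variance into individual variances and pairwise covariances, and control each covariance by the number of edges shared by the pair.

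For each $\lambda \in \clisett{4}{t}$, let $\delta_\lambda$ be the indicator that $\lambda$ is observed (i.e., when the last-arriving edge $e_\lambda^\ast$ of $\lambda$ is processed, the other five edges of $\lambda$ are in $\Samp$), and let $p_\lambda = \Pr(\delta_\lambda = 1)$. By Lemma~\ref{lem:reservoirhighorder}, $p_\lambda = \prod_{i=0}^{4} (M-i)/(t_\lambda - 1 - i)$ where $t_\lambda$ is the time of $e_\lambda^\ast$, which matches the quantity computed by \textsc{Update4Cliques}. Setting $X_\lambda = \delta_\lambda/p_\lambda$ yields $\mathbb{E}[X_\lambda] = 1$, $\mathbb{E}[X_\lambda^2] = 1/p_\lambda$, and hence $\mathrm{Var}(X_\lambda) = 1/p_\lambda - 1 \le \left((t-1)/M\right)^5 - 1$ (up to standard reservoir-sampling constants absorbed in the statement). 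Summing over all $\lambda$ gives the first term of the bound.

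For the pairwise covariances, I would first observe that two distinct 4-cliques can share only $0$, $1$, or $3$ edges: two shared edges always force a shared triangle (hence three shared edges), and four or more shared edges force the two cliques to coincide. Pairs sharing $0$ edges contribute a non-positive covariance, which follows from the negative association of reservoir sampling: writing the joint detection probability with Lemma~\ref{lem:reservoirhighorder} at the later detection time and comparing with $p_\lambda p_\mu$, the product of falling factorials in the numerators is dominated by the corresponding product for the disjoint event. These pairs are therefore absorbed by the $-1$ in the individual variance terms and can be discarded.

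Pairs sharing exactly one edge and pairs sharing exactly three edges produce the $a^{(t)}$ and $b^{(t)}$ terms. For these I would write $\mathbb{E}[X_\lambda X_\mu] = \Pr(\delta_\lambda = \delta_\mu = 1)/(p_\lambda p_\mu)$ and apply Lemma~\ref{lem:reservoirhighorder} to the union of the edges that must simultaneously be in $\Samp$ at the corresponding times, using the chain rule of reservoir sampling to advance from $\min(t_\lambda,t_\mu)-1$ to $\max(t_\lambda,t_\mu)-1$. With one shared edge the relevant union of ``non-last'' edges has size $9$, so the joint probability scales as $\prod_{i=0}^{8}(M-i)/(t-1-i)$ and, divided by $p_\lambda p_\mu$ of combined order $(M/(t-1))^{10}$, yields an $O\bigl((t-1)/M - 1\bigr)$ contribution per pair. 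With three shared edges the union has size $7$ (the triangle itself is double-counted in $p_\lambda p_\mu$), giving $O\bigl(((t-1)/M)^3 - 1\bigr)$ per pair.

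The main obstacle will be the covariance bookkeeping in the $1$- and $3$-shared-edge cases: for each sharing pattern one must identify, as a function of the arrival order of the shared edges relative to $e_\lambda^\ast$ and $e_\mu^\ast$, exactly which edges are required in $\Samp$ at which step, and then verify that, after telescoping the product of ratios produced by Lemma~\ref{lem:reservoirhighorder}, the resulting expression is bounded by the clean $(t-1)/M$ factors stated in the theorem, uniformly over the adversarial stream order. Establishing the non-positivity of the $0$-shared-edge covariance (despite the two detection times being different) is the second delicate point; once handled, the three contributions combine into the displayed inequality.
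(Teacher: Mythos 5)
Your proposal follows essentially the same route as the paper's proof: decompose $\mathrm{Var}[\varkappa^{(t)}]$ into per-clique variances plus pairwise covariances, use the fact (Lemma~\ref{lem:sharingclique}) that two distinct 4-cliques share $0$, $1$, or $3$ edges, discard the $0$-shared pairs via a negative-correlation argument, and bound the $1$- and $3$-shared cases by counting the union of edges that must co-reside in $\Samp$ (9 vs.\ 10 and 7 vs.\ 10 factors, giving the $(t-1)/M$ and $((t-1)/M)^3$ terms respectively), exactly as the paper does via conditioning on the shared edges being sampled. One small simplification you could note: for $t>M+1$ each factor $\frac{t-1-i}{M-i}\le\frac{t-1}{M}$, so no absorbed constants are actually needed in the first term.
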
 

\subsection{Variance comparison}\label{sec:varcompa}
Although the upper bounds obtained in Theorems~\ref{thm:variancemulti} and~\ref{thm:variancesingle} cannot be compared directly, they still provide some useful insight on which algorithm may be performing better according to the properties of $G^{(t)}$. 

Let us consider the  first, dominant, terms of each of the variance bounds, that is $|\clisett{4}{t}\left(\left(\frac{5}{4}\frac{t}{M}\right)^4\frac{5\tau^{(t)}}{M}-1\right)|$ for \algomultione{} (assigning $M_e=4M/5$ and $M_\Delta = M/5$),  and $|\clisett{4}{t}|\left(\left(\frac{t-1}{M}\right)^5-1\right)$ for \algosingle{}.  
%
%The two bounds share the a similar dependence from the number  of pairs of cliques that share one or three edges in the second and third term. Due to the fact that \algomultione{} splits the memory into two levels (in particular, with $M_e= 4M/5$) we have a higher overall contribution for these terms. 
While \algomultione{} exhibits a slightly higher constant multiplicative term a cost due to the splitting of the memory in the \tiesa{} approach, the most relevant difference is however given by the term $\frac{\tau^{(t)}}{M}$ appearing in the bound for \algomultione{} compared with an additional $\frac{t-1}{M}$ appearing in the bound fro \algosingle{}. Recall that $\tau^{(t)}$ denotes here the number of triangles \emph{observed} by the algorithm up to time $t$. Due to the fact that the probability of observing a triangle decreases quadratically with respect to the size of the graph $t$, we expect that $\tau<t$ and, for sparser graphs for which 3 and 4-cliques are indeed ``\emph{rare patterns}, we actually expect $\tau^{(t)}<<t$. Under these circumstances we would therefore expect $M/5\tau >> M/t$.

 This is the critical condition for the success of the \tiesa{} approach. If the sub-structure selected as a tool for counting the motif of interest is not ``\emph{rare enough}'' then there is no benefit in devoting a certain amount of the memory budget to maintaining a sample of occurrences of the sub-structure. Such problem would for instance arise when using the \tiesa{} approach for counting triangles using \emph{wedges} (i.e., two-hop paths) as a sub-structure, as attempted in~\cite{jha2015}, as in most real-world graph the number of wedges is much greater of the number of edges themselves making them not suited to be used as a sub-structure.

\subsection{Experimental evaluation over random graphs}\label{sec:expran}

In this section we compare the performances of our \tiesa{} algorithms of \algomultione{} and \algomultitwo{} with the performance of the a single sample approach \algosingle{}, on randomly generated graphs. In particular, we analyze random Bar\'abasi-Albert graphs~\cite{albert2002} with $n$ nodes and $mn$ total edges, as they exhibit the same \emph{scale-free} property observed in many real-world graphs of interest such as social networks. 

In our experiments, we set $n=20000$ and we consider various values for $m$ from 50 to 2000 in order to compare the performances of the two approaches as the number of edges (and thus triangles) increases. The algorithms use a memory space whose size corresponds to 5\% of the number of edges in the graph $nm$. In columns 2,3 and 5 of Table~\ref{tab:mapereduction} we present the average of the MAPEs of the ten runs for \algosingle{}, \algomultione{} and \algomultitwo{}. In column 4 (resp., 6) of Table~\ref{tab:mapereduction} we report the percent reduction/increase in terms of the average MAPE obtained by \algomultione{} (resp., \algomultitwo{}) with respect to \algosingle{}.  
%\begin{figure}[htp]
%  \centering
% \includegraphics[scale=0.5]{mape_barabasi}\quad
%   \caption{Average MAPE on Bar\'abasi-Albert random graphs.}
%    \label{fig:maperandom}
%\end{figure}
%By considering different parameters of such random graphs we aim to give further insight on the comparison between the performances of  the various algorithms.

%We compare the accuracy of \tiesa{} and \algosingle{} approaches on random graphs using the standard Mean Average Percentage Error \textit{MAPE} \cite{hyndman2006} that measures the average of the prediction error with respect to the ground truth. 
%
%For $t = 1, ..., T$, let $\fcliest$ and $\overline{\fcliest}$ be the exact number and estimator of the number of cliques at time t, the \textit{MAPE} is defined as $\frac{1}{T}\sum_{t=1}^{T}\abs{\frac{\fcliest^(t) - \overline{\fcliest}^(t)}{\fcliest^(t)}}$. 
%We evaluate MAPE of the various algorithms when run on random Bar\'abasi-Albert graph \cite{albert2002} with $n$ nodes and various different 
%values of $m$, where $m$ represent the number of existing nodes that a new node is connected to. The goal of this experiment is to compare the performances of the two approaches as the number of edges, and thus triangles, increases and the generated graph grows denser.  
%In Figure~\ref{fig:maperandom}, we compare the average MAPE of \algosingle{}, \algomultione{} and \algomultitwo{} for Bar\'abasi-Albert random graphs with 20000 nodes and various values of $m$ ranging from 50 to 2000.
Both \tiesa{} algorithms consistently outperform \algosingle{} for values of $m$ up to 400, that is for fairly sparse graphs for which we expect 3 and 4 cliques to be rare patterns. The advantage over \algosingle{} is particularly strong for values of $m$ up to 200 with reductions of the average MAPE up to 30\%. For denser graphs, i.e. $m\geq 750$, \algosingle{} outperforms \emph{both} \tiesa{} algorithms. This in consistent with the intuition discussed in Section~\ref{sec:varcompa}, as for denser graphs triangles are not ``\emph{rare enough}'' to be worth saving over edges. Note however that in these cases the quality of \emph{all} the estimators is very high (i.e., MAPE$\leq$ 1\%). 

We can also observe that \algomultitwo{} outperforms \algomultione{} for $m\leq 200$. Vice versa, \algomultione{} outperforms \algomultitwo{} (and \textsc{FourEst}) for $300\leq m\leq 750$. Since, as discussed in Section~\ref{sec:algotwo}, \algomultitwo{} gives ``\emph{more importance}'' to triangles, it works particularly well when the graph is very sparse, and triangles are particularly rare.  As the graph grows denser (and the number of triangles increases), \algomultione{} performs better until, for highly dense graphs \algosingle{} produces the best estimates.

\begin{table}[]
\centering
\tiny
\renewcommand{\arraystretch}{1.3}
\begin{tabular}{l|c|cc|cc}
\hline
m    & \algosingle{} & \algomultione{} & \begin{tabular}{@{}c@{}}Change \\ \algomultione{}\end{tabular} & \algomultitwo{}& \begin{tabular}{@{}c@{}}Change \\ \algomultitwo{}\end{tabular} \\
\hline
50   & 0.8775    & 0.5862   & -33.19\%   & 0.5222 & -40.49\%   \\
100  & 0.3054    & 0.1641   & -46.27\%   & 0.1408 & -53.90\%   \\
150  & 0.1521    & 0.0937   & -38.34\%   & 0.0917 & -39.70\%   \\
200  & 0.0899    & 0.0599   & -33.39\%   & 0.0549 & -38.95\%   \\
300  & 0.0486    & 0.0346   & -28.80\%   & 0.0417 & -14.19\%   \\
400  & 0.0289    & 0.0249   & -13.87\%   & 0.0261 & -9.32\%   \\
500  & 0.0221    & 0.0197   & -11.28\%   &0.0239 & 8.08\%    \\
750  & 0.0134    & 0.0132   & -1.57\%    & 0.0181 & 34.90\%    \\
1000 & 0.0088    & 0.0099   & 13.14\%    & 0.0146 & 66.36\%   \\
\hline
\end{tabular}
\caption{Comparison of MAPE of \algosingle{}. \algomultione{} and \algomultitwo{} for Bar\'abasi-Albert graphs.}\vskip -.3in 
\label{tab:mapereduction}
\end{table}

\section{Adaptive Tiered Sampling Algorithm}
\label{sec:adaptive}
An appropriate partition of the available memory between the layers used in the \tiesa{} approach is crucial for the success of the algorithm: while assigning more memory to the triangle sample allows maintain more sub-patterns, removing too much space from the edge sample reduces the probability of observing new triangles. 
While in Section~\ref{sec:concent} we provide a general rule to decide  how to split the memory budget  for \algomultione{} and \algomultitwo{}, such partition may not always lead to the best possible results. For instance, if the graph being observed is particularly sparse, assigning a large portion of the memory to the triangles would result in a considerable waste of memory space due to the low probability of observing triangles. Further, as discussed in Section~\ref{sec:sing}, depending on the properties of $G^{(t)}$ a single edge approach could perform better than the \tiesa{} algorithms. As in the graph streaming setting these properties are generally not known a priori, nor stable through the graph evolution, a fixed memory allocation policy appears not to be the ideal solution. 
In this section we present \algodynamic{}, an \emph{adaptive} variation of our \algomultitwo{} algorithm, which \emph{dynamically} analyzes the properties of $G^{(t)}$ through time and consequently decides how to allocate the available memory. 
\paragraph{Algorithm description}\algodynamic{} has two main ``\emph{execution regimens}'': \textbf{(R1)} for which it behaves exactly as \algosingle{}, and \textbf{(R2)} for which it behaves similarly to \algomultitwo{}. 
\textbf{(R1)} is the initial regimen for \algodynamic{}. Recall that, from Lemma~\ref{lem:reservoirhighorder} (resp., Lemma~\ref{lem: probalgo2}) the probability of a 4-clique being observed by  \algosingle{} (resp., \algomultitwo{})is upper bounded by   
$p_s=(\min\{1,M/t\})^5$ (resp., $p_\alpha=(\min\{1,\alpha M/t\})^4(\min\{1,(1-\alpha) M/\ntris[t]\})^2$, where $0<\alpha<1$). Every $M$ time steps the algorithm evaluates, based on the number of triangles observed so far, whether to switch to \textbf{(R2)}. \algodynamic{} $\alpha^* = \textrm{argmax}_{\alpha \in [2/3,1)} p_\alpha$. If $p_s>p_{\alpha^*}$, \algodynamic{} remains in \textbf{(R1)}. Vice versa \algodynamic{} transitions to \textbf{(R2)}: the triangle reservoir $\Samp_\Delta$ is assigned $(1-\alpha^*)M$ memory space, and it is filled with the triangles composed by the edges \emph{currently} in the edge reservoir, using the reservoir sampling scheme. Finally the edge reservoir $\Samp_e$ is constructed by selecting $\alpha^* M$ of the edges in the current sample uniformly at random, thus ensuring that $\Samp_e$ is an \emph{uniform sample}. Once \algodynamic{} switches to \textbf{(R2)} it never goes back to \textbf{(R1)}.
During (R2), as long as $|\Samp_\Delta|< M/3$, every $M$ time steps \algodynamic{} evaluates whether it is opportune to assign more memory to $\Samp_\Delta$. Let $t= iM$, rather than just using the information of the number of triangles seen so far $\ntris[iM]$, \algodynamic{} computes a ``\emph{prediction}'' of the total number of triangles seen until $(i+1)M$ assuming that the number of triangles seen during the next $M$ steps will equals the number of triangles seen during the last $M$ steps, that is $\tilde{\ntris[(i+1)M]} = 2\ntris[iM]-\ntris[(i-)M]$. \algodynamic{} then computes $\alpha^* = \textrm{argmax}_{a\in [2/3, 1)}(\min\{1,\alpha M/(i+1)M)\})^4(\min\{1,(1-\alpha) M/\tilde{\ntris[(i+1)M]}\})^2$. Let $\alpha$ denote the split being used by \algodynamic{} at $t=iM$, if $\alpha^*>\alpha$ the algorithm continues its execution with no further operations (\algodynamic{} never reduces the memory space assigned to $\Samp_\Delta$). Vice versa, if  $\alpha^*<\alpha$, \algodynamic{} removes $(\alpha - \alpha^*)M$ edges from $\Samp_e$ selected uniformly at random, and the freed space is assigned to $\Samp_\Delta$. Let us denote this space as $\Samp'_\Delta$. As \algodynamic{} progresses and observes new triangles it fills $\Samp'_\Delta$ using the reservoir sampling scheme. $\Samp_\Delta$ and $\Samp'_\Delta $ are then merged at the first time step for which the probability $p_\Delta$ of a triangle seen before the creation of $\Samp'_\Delta$ being in $\Samp_\Delta$ becomes lower then the probability $p_{\Delta'}$ of a triangle observed after the creation of $\Samp'_\Delta$ being in it. The merged triangle sample contains all the triangles in $\Samp'_\Delta$, while the triangles in $\Samp_\Delta$ are moved to it with probability $p_{\Delta'}/p_{\Delta}$. This ensures that after the merge all the triangles seen on the stream are kept in the triangle reservoir with probability $p_{\Delta'}$. After the merge, \algodynamic{} operates the samples as described in \algomultitwo{}. Finally, \algodynamic{} increases the memory space for $\Samp_\Delta$ only if all the currently assigned space is used. 

The analysis of \algodynamic{} is much more complicated than the one for our previous algorithms as it requires to keep track of the probabilities according to which the observed triangles appear in $\Samp_\Delta$. We claim however (without explicit proof) that the estimation provided by \algodynamic{} is unbiased. 
Via the experimental analysis in Section~\ref{sec:exada}, we show how \algodynamic{} succeeds in merging the advantages of the single sample approach and of an adaptive splitting of the in tiers.

\section{Experimental Evaluation}
\label{sec:experiments}
In this section, we evaluate through extensive experiments the performance of our proposed \tiesa{} method when applied for counting 4 and 5-cliques in large graphs observed as streams. We use several real-world graphs with size ranging form $10^6$ to $10^8$ edges (see Table~\ref{tab:graphs} for a complete list).  All graphs are treated as undirected. The edges are observed on the stream according to the values of the associated timestamps if available, or in random order otherwise.  In order to evaluate the accuracy of our algorithms, we compute the ``\emph{ground truth}'' exact number of 4-cliques (resp., 5-cliques) for each time step using an exact algorithm which maintains the entire $G^{(t)}$ in memory.
Our algorithms are implemented in Python. The experiments were run on the Brown University CS department cluster\footnote{https://cs.brown.edu/about/system/services/hpc/grid/}, where each run employed a single core and used at most 4GB of RAM.
\begin{table}[t]
\tiny
%% increase table row spacing, adjust to taste
\renewcommand{\arraystretch}{1.3}
% if using array.sty, it might be a good idea to tweak the value of
% \extrarowheight as needed to properly center the text within the cells
\centering
\begin{tabular}{l c c c}
\hline
Graph & Nodes & Edges & Source\\
\hline
DBLP & 986,324 & 3,353,618 & \cite{boldi2011} \\
Patent (Cit) &  2,745,762 & 13,965,132 & \cite{destefani2017triest}\\
LastFM &  681,387 & 30,311,117 & \cite{destefani2017triest}\\
Live Journal & 5,363,186 & 49,514,271 & \cite{boldi2011}\\
Hollywood & 1,917,070 & 114,281,101 & \cite{boldi2011}\\
Orkut & 3,072,441& 117,185,083 & \cite{mislove2007} \\
%Yahoo! Answers & 2,432,573 & $1.08 \cdot 10^9$ & \cite{destefani2017triest}  \\
Twitter & 41,652,230 & $1.20 \cdot 10^9$ & \cite{destefani2017triest, boldi2011} \\
\hline
\end{tabular}
\caption{Graphs used in the experiments}
\vskip -.3in 
\label{tab:graphs}
\end{table}
The section is organized as follows: we first evaluate the performance of \algomultione{} and \algomultitwo{} when run on several massive graphs and we compare them with the estimations obtained using \algosingle{}. We then present practical examples that motivate the necessity for the adaptive version of our \tiesa{} approach, and we show how our \algodynamic{} manages to capture the best of the single and multi-level approach.
Finally, we show how the \tiesa{} approach can be generalized in order to count  structures other than 4-cliques.% In particular, we present  a variation of \algomultitwo{} called \algomultifive{} and we show that it allows to successfully estimate the number of 5-cliques in graph streams.

\subsection{Counting 4-Cliques}
\label{sec:estimation}
We estimate the global number of 4-cliques on insertion-only streams, starting as empty graphs and for which an edge is added at each time step, using algorithms \algosingle{}, \algomultione{} and \algomultitwo{}. As discussed in Section~\ref{sec:memorypartition} (resp., Section~\ref{sec:algotwo}),in \algomultione{} (resp., \algomultitwo{})  we split the total available memory space $M$ as $|\Samp_e|=4M/5$ and $|\Samp_\Delta| = M/5$ (resp., $|\Samp_e|=2M/3$ and $|\Samp_\Delta| = M/3$). 

The experimental results show that these fixed memory splits perform well for most cases. We then experiment with an adaptive splitting mechanism that handles the remaining cases. 
In Figure~\ref{fig:evolution} we present the estimation obtained by averaging 10 runs of respectively \algomultione{}, \algomultitwo{} and \algosingle{} using total memory space $M=5\times 10^5$ for the LiveJournal and Hollywood graphs (i.e., respectively using  less than 1\% and 0.5\% of the graph size).  While the average of the runs for \algomultione{} and \algomultitwo{} are almost \emph{indistinguishable} from the ground truth, that is clearly not the case for \algosingle{} for which the quality of the estimator considerably worsens as the graph size increases.

In Table~\ref{tab:mape}, we report the average MAPE performance over 10 runs for \algomultione{}, \algomultitwo{} and \algosingle{} for several graphs of different size. Depending on the size of the input graph we assign different total memory space (as reported in Table~\ref{tab:mape}), which in most cases amounts to at most 3\% ($\sim8$\%M for  Patent(Cit)) of the input graph size. Except for LastFM, our \tiesa{} algorithms clearly outperform \algosingle{}, with the average MAPE reduced by up to 30\%. Both $\textsc{TS4C}_1$ and $\textsc{TS4C}_2$ return very accurate estimations of the number of 4-cliques on the majority of these graphs, with average MAPE lower than 10\% (16\% for LastFM and Orkut). 
%	Note that \algomultitwo{} outperforms \algomultione{} on the (very) sparse Patent (Cit) graph for which $|E|/|V|<5$, while the two algorithms have similar performances in the remaining graphs. 
	LastFM is the only graph for which \algosingle{} (considerably) outperforms the \tiesa{} algorithms. This is due to the high density of the graph $|E|/|V| > 500$ and to the fact that in this case triangles are not a rare enough sub-structure to justify the choice of maintaining them over simple edges. 
\begin{table}[]
\tiny
\centering
\renewcommand{\arraystretch}{1.3}
\begin{tabular}{l|c|c|c|c}
\hline
Dataset    & $M$ &  \algosingle{} & \algomultione{}  & \algomultitwo{} \\
\hline
Patent (Cit)   & $10^6$& 0.0963    & 0.0921      & 0.0474   \\
LastFM      &$10^6$ & 0.0118    & 0.1258     & 0.0777   \\
LiveJournal & $10^6$&  0.1521    & 0.0543     & 0.0560   \\
Hollywood   & $2\cdot10^6$& 0.0355    & 0.0207     & 0.0194   \\
Orkut       &$2\cdot10^6$& 0.4674    & 0.1590    & 0.1417   \\
Twitter\tablefootnote{Ground truth computed for first $3\cdot10^8$ edges on the stream.} &$5\cdot10^6$ &   0.2503 &	0.0749  & 0.0742  \\
\hline
\end{tabular}
\caption{Average MAPE of various approaches for all graphs.}
\vskip -.2in 
\label{tab:mape}
\end{table} 

\begin{figure}[t]
  \centering
   \subfigure[Live Journal]{\includegraphics[scale=0.295]{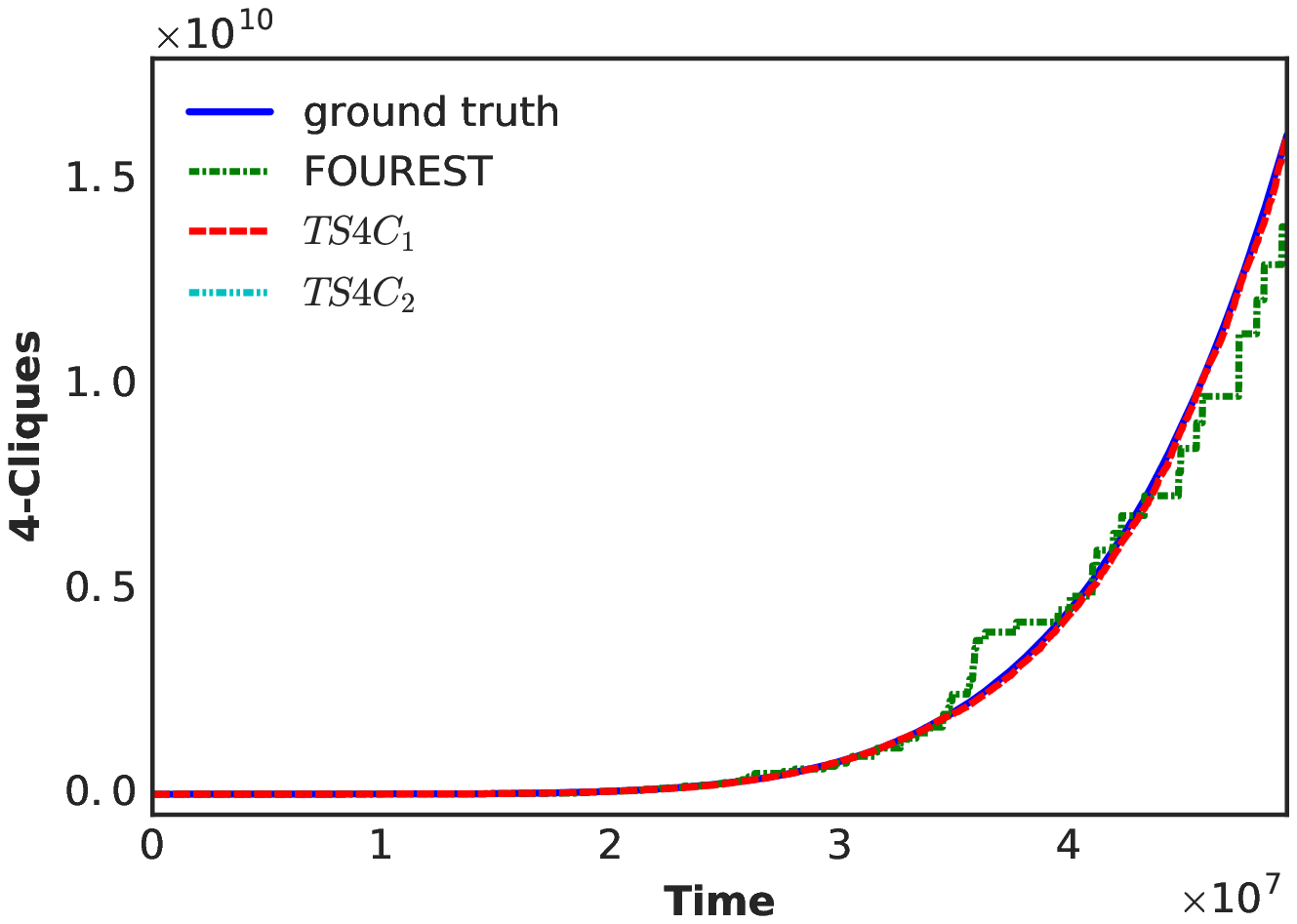}}\quad
   \subfigure[Hollywood]{\includegraphics[scale=0.295]{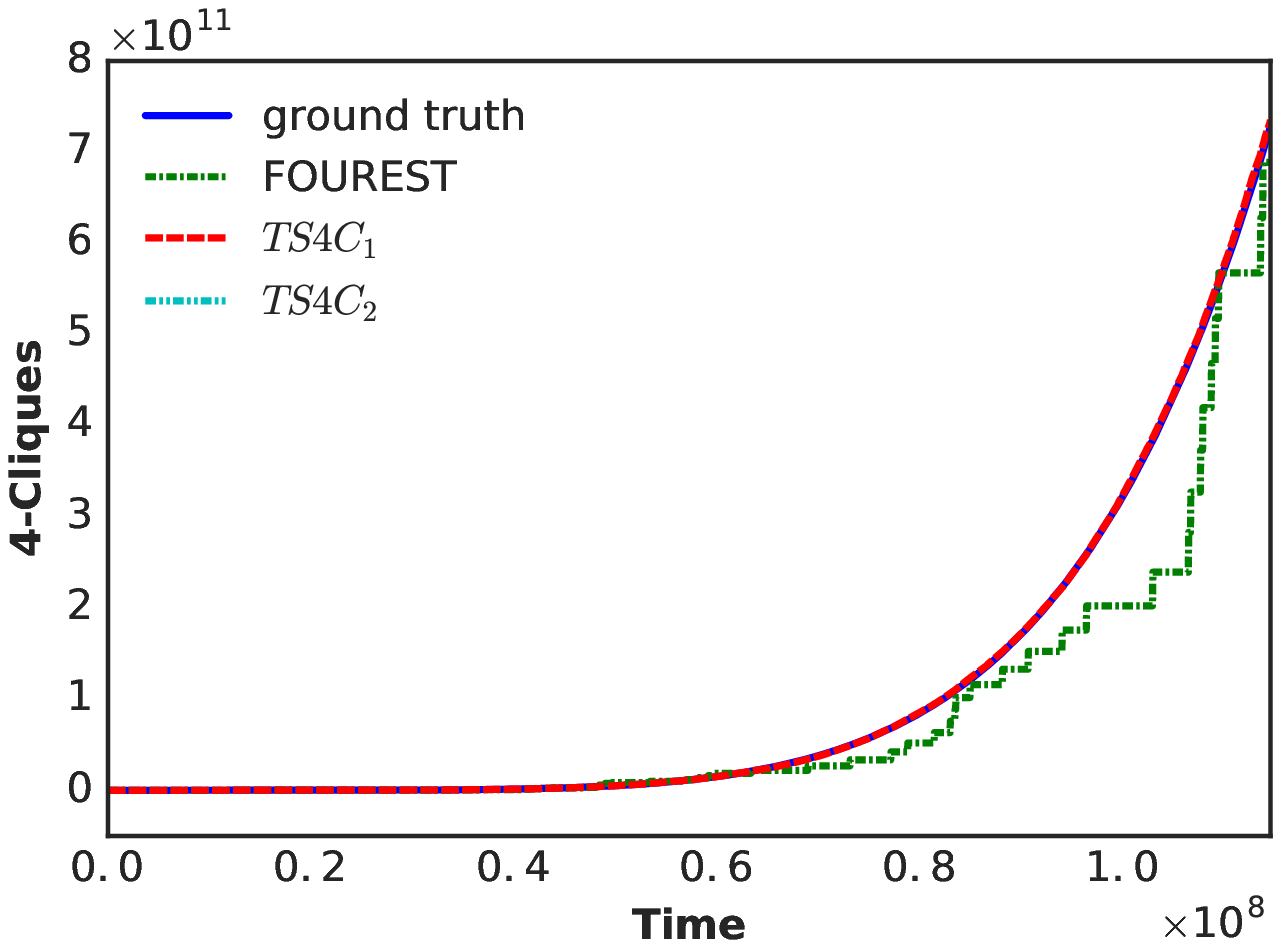}}\quad
   \caption{Comparison of $|\clisett{4}{t}|$ estimates obtained using \algomultione{}, \algomultitwo{} and \algosingle{} with $M=5 \times 10^5$,}
    \label{fig:evolution}
    \vskip -.2in 
\end{figure}
We analyze the variance reduction achieved using our \tiesa{} algorithms by comparing the empirical variance observed over forty runs on the Hollywood graph using $M= 5\times 10^5$. The results are reported in Figure~\ref{fig:variance}. While for both \algomultione{} and \algomultitwo{} the minimum and maximum estimators are close to the ground truth throughout the evolution of the graph, \algomultione{} estimators exhibit very high variance especially towards the end of the stream. %, the difference between the maximum and minimum estimation is close to double the quantity being estimated itself).

Our experiments not only verify that \algomultione{} and \algomultitwo{} allow to obtain good quality estimations which are in most cases superior to the ones achievable using a single sample strategy, but also validate the general intuition underlying the \tiesa{} approach.% outlined in Sections~\ref{sec:multi} and~\ref{sec:expran}.

Both \tiesa{} algorithms are extremely scalable, showing average update times in the order of hundreds microseconds for all graphs.% (complete results in Table~\ref{tab:update_time} in Appendix~\ref{sec:extraexp} in~\cite{destefani2017}).

\begin{figure*}[t]
  \centering
   \subfigure{\includegraphics[scale=0.4]{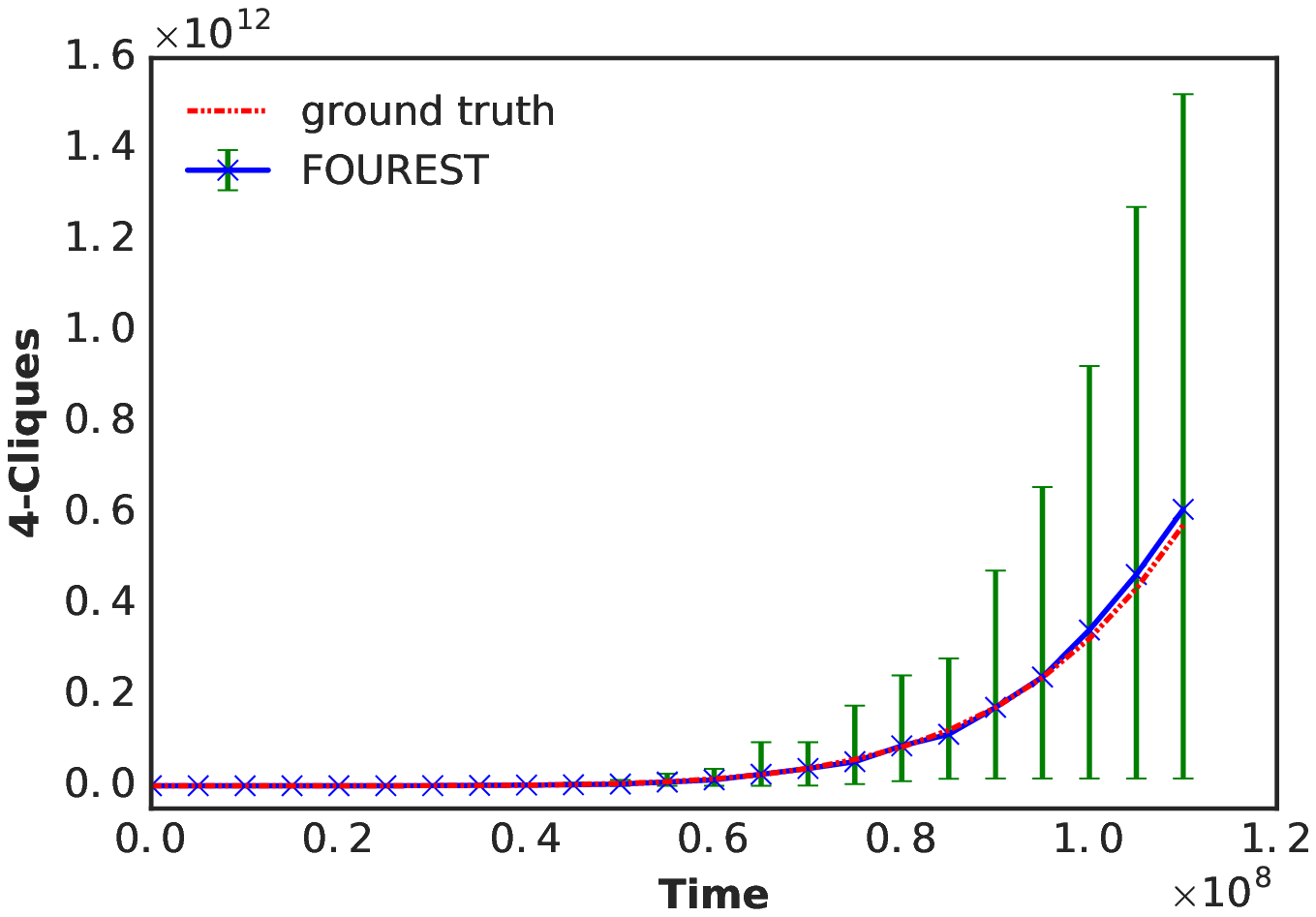}}\quad
    \subfigure{\includegraphics[scale=0.4]{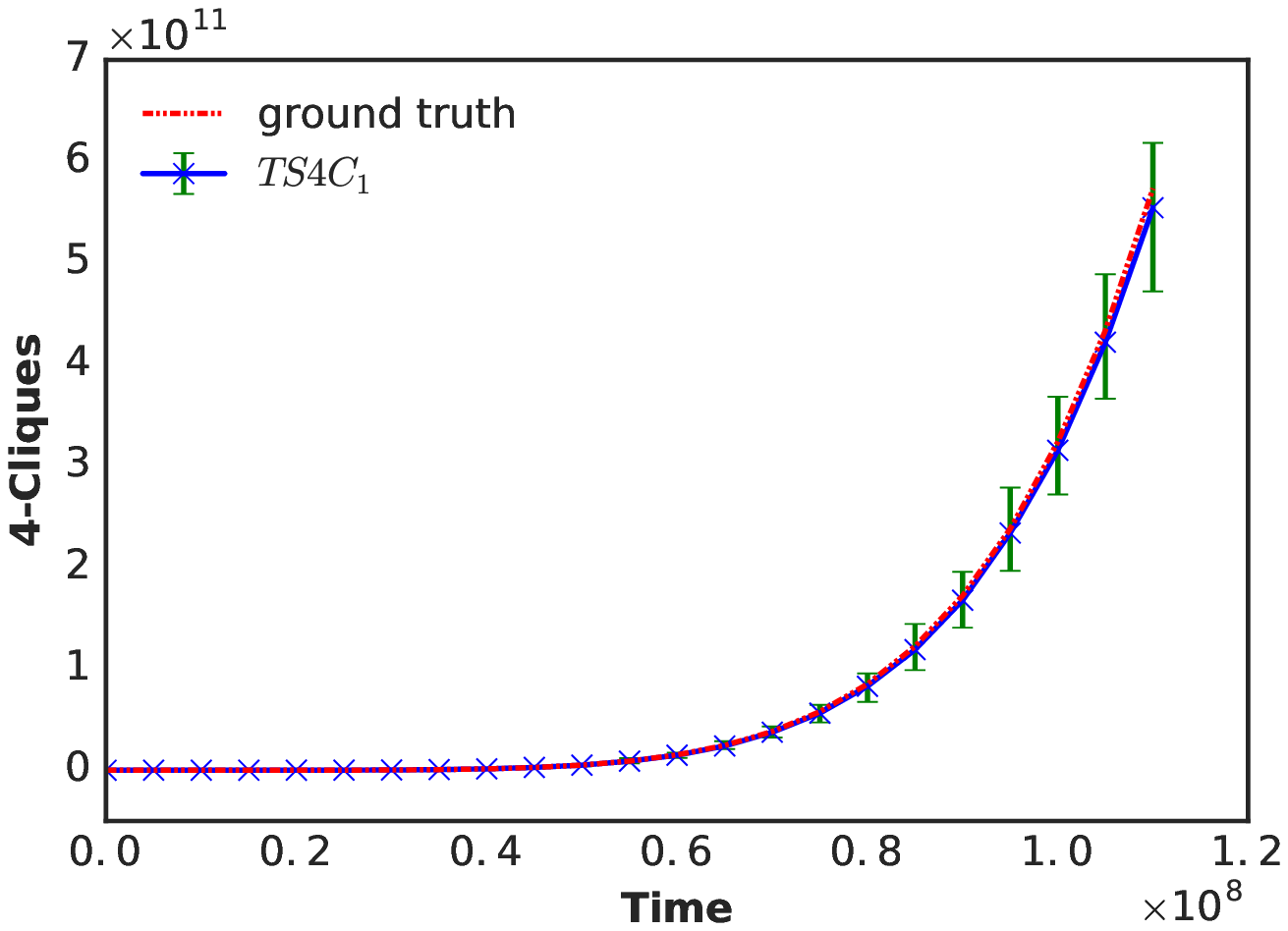}}\quad
     \subfigure{\includegraphics[scale=0.4]{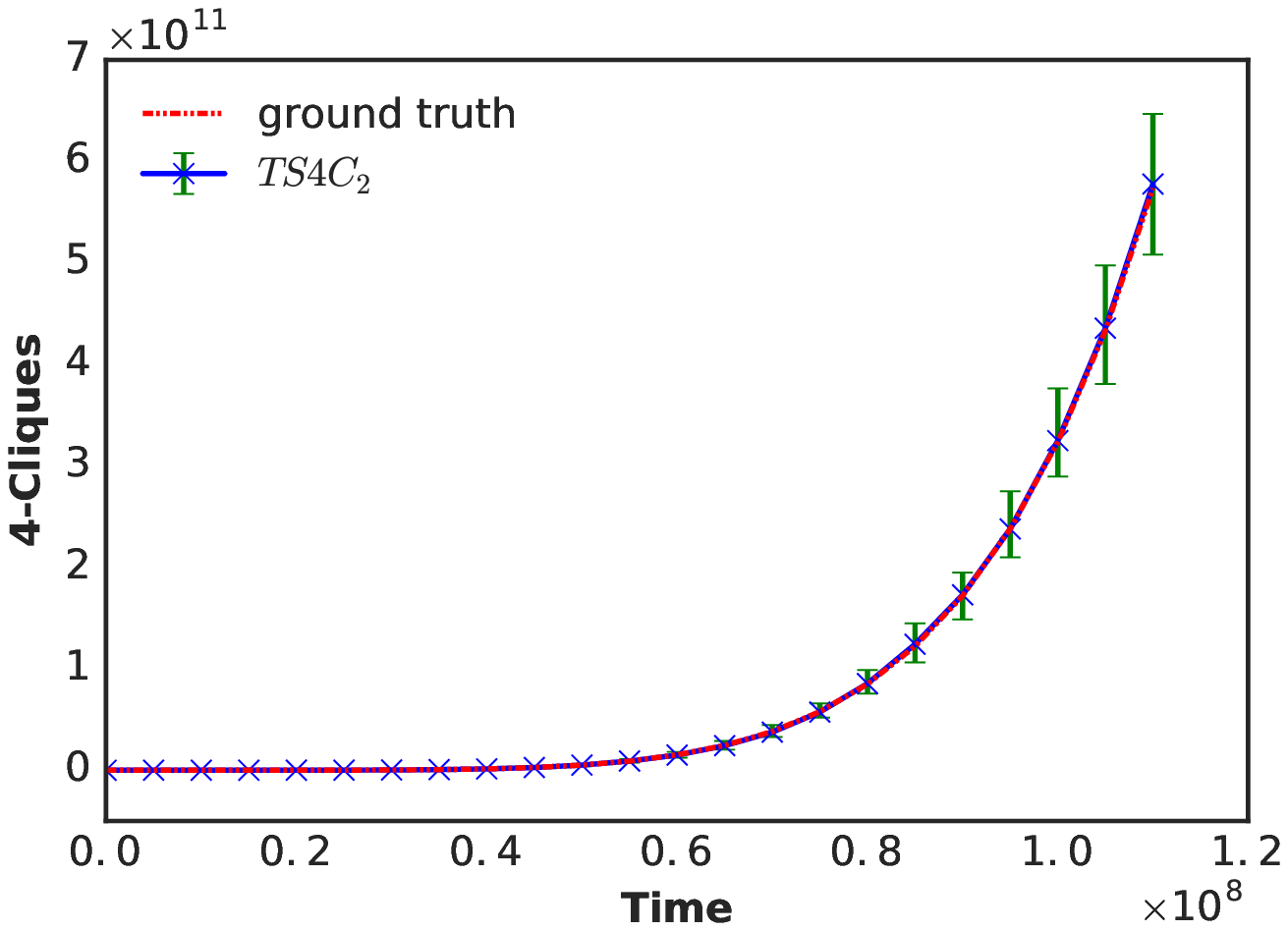}}\quad
    
   \caption{Variance of \algosingle{}, \algomultione{} and \algomultitwo{} for Hollywood graph with $M=5\times 10^5$.}
   \vskip -.1in 
    \label{fig:variance}
\end{figure*}

\subsection{Adaptive Tiered Sampling}\label{sec:exada}
In Section~\ref{sec:estimation}, we showed that \algomultitwo{}, allows to obtain high quality estimations for the number of 4-cliques outperforming in most cases both \algomultione{} and \algosingle{}. These results were obtained splitting the available memory such that $|\Samp_e|=2M/3$ and $|\Samp_\Delta| = M/3$. As discussed in Section~\ref{sec:adaptive}, while this is an useful general rule, depending on the properties of the graph different splitting rules may yield better results. We verify this fact by evaluating the performance of \algomultitwo{} when run on the Patent(Cit) graph using different assignments of the total space $M=5\times 10^5$ to the two levels. The results in Figure~\ref{fig:differentsplitting} show that decreasing the space assigned to the triangle sample from $M/3$ to $M/9$ allows to achieve estimations which are closer to the ground truth leading to a 31\% reduction in the average MAPE. 
Due to the sparsity of the Patent(Cit) graph, \algomultitwo{} observes a very small number of triangles for large part of the stream. Assigning a large fraction of the memory space to $\Samp_\Delta$ is thus inefficient as the probability of observing new triangles is reduced, and the space assigned to $\Samp_\Delta$ is not fully used.
\begin{figure}[t]
\centering
\includegraphics[scale=0.4]{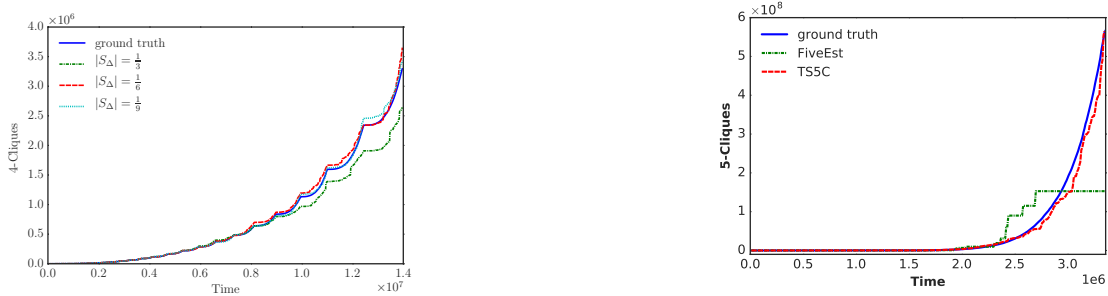}
\caption{$|\clisett{4}{t}|$ estimations for Patent (Cit)  using \algomultitwo{} with $M = 5\times 10^5$ and different memory space assignments among tiers.}
\vskip -.2in 
\label{fig:differentsplitting}
\end{figure}

To overcome such difficulties, in Section~\ref{sec:adaptive} we introduced \algodynamic{}, an \emph{adaptive} version of \algomultitwo{}, which allows to dynamically adjust the use of the available memory space based on the properties of the graph being  observed. We experimentally evaluate the performance of \algodynamic{} over 10 runs on the Patent(Cit) and the LastFM graphs and we compare it with \algomultitwo{} and \algosingle{} using $M=5\times 10^5$. 
%Each experiments is repeated ten times and the average estimators are presented in Figure~\ref{fig:dynamic}.
\begin{figure}[t]
  \centering
   \subfigure[Patent(Cit)]{\includegraphics[scale=0.295]{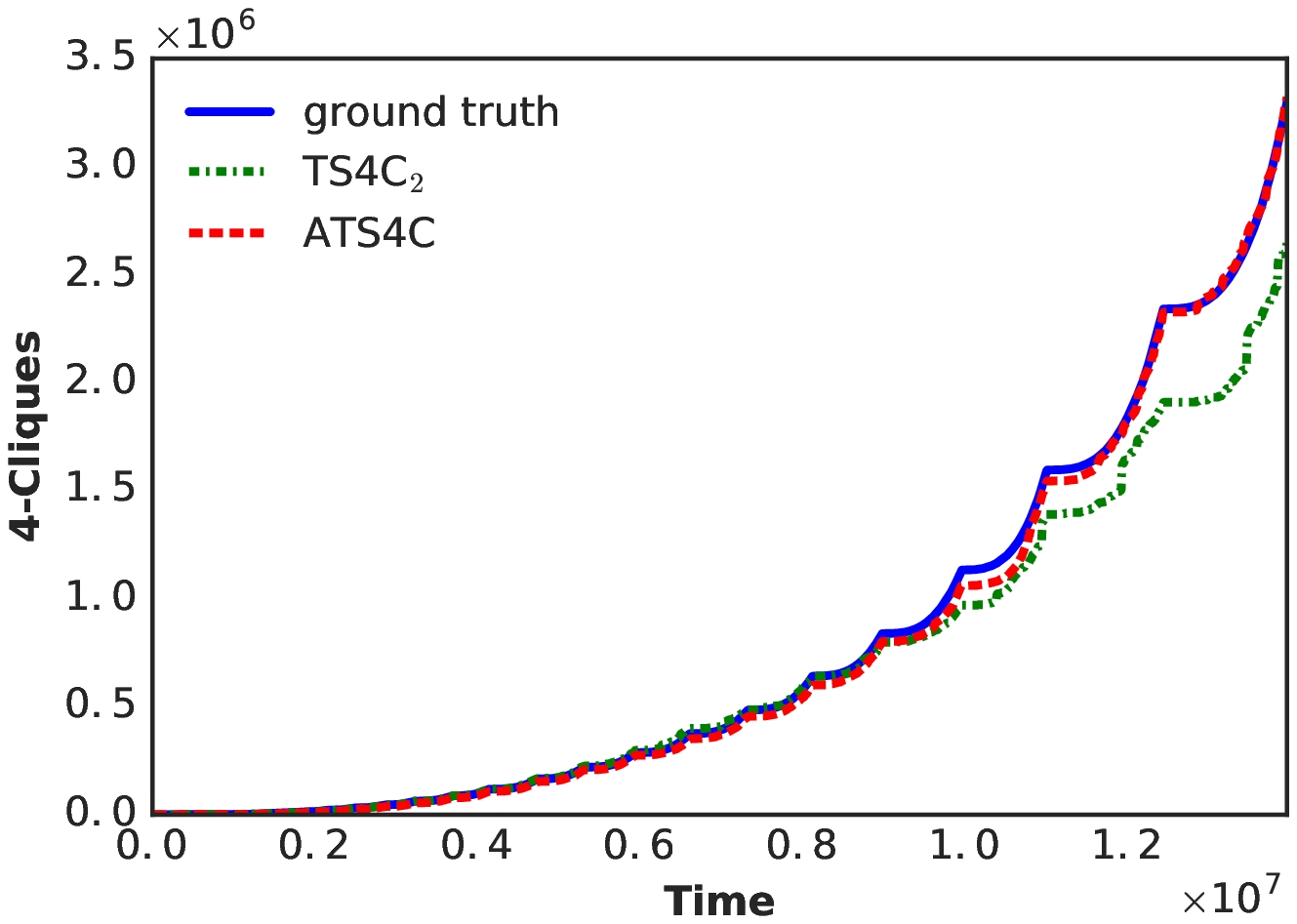}}\quad
   \subfigure[LastFM]{\includegraphics[scale=0.295]{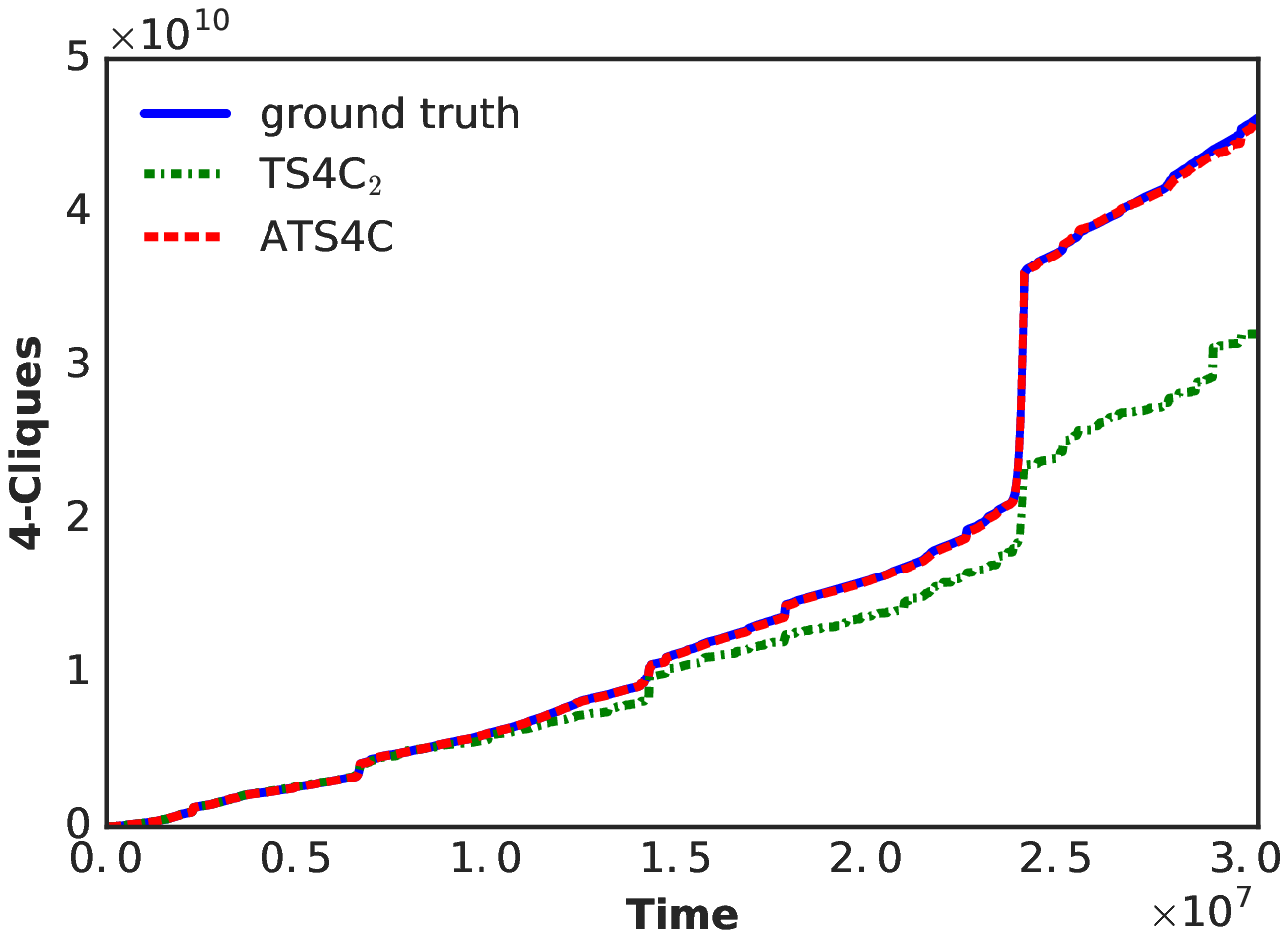}}
   \caption{Comparison of $|\clisett{4}{t}|$ estimates obtained using \algodynamic{}, \algomultitwo{} and \algosingle{} with $M=5 \times 10^5$.}
   \vskip -.224in 
    \label{fig:dynamic}
\end{figure}

As shown in Figure~\ref{fig:dynamic},  for both graphs, \algodynamic{} produces estimates that are nearly indistinguishable from the ground truth. \algodynamic{} clearly outperforms \algomultitwo{} (with $|\Samp_\Delta|=\frac{M}{3}$) on Patent(Cit) where triangles are sparse motifs achieving an $\sim85$\% reduction in the average MAPE compared to \algomultitwo{}. \algodynamic{} returns high quality estimations even for the LastFM graph, for which the single level approach \textsc{FourEst} outperforms the \tiesa{} algorithms.

\subsection{Counting 5-Cliques}

To demonstrate the generality of our \tiesa{} approach we present \algomultifive{}, a one pass counting algorithms for 5-clique in a stream. The algorithm maintains two reservoir samples, one for edges and one for 4-cliques. When the current observed edge completes a 4-clique with edges in the edge sample the algorithm attempts to insert it to the 4-cliques reservoir sample. A 5-clique is counted when the current observed edge completes a 5-clique using one 4-clique in the reservoir sample and 3 edges in the edge sample. 

%While in this paper we focused on the 4-clique counting problem, our \tiesa{} approach can be generalized to the analysis of other graphs motifs. As an example we present \algomultifive{}, an application of our \tiesa{} approach for estimating the number of 5-cliques in graph streams. 
%\algomultifive{} uses a strategy similar to the one used by \algomultione{} by using 4-cliques in order to increase the probability of observing 5-cliques. \algomulti maintains an edge sample $\Samp_e$ and a 4-clique sample $\Samp_4$ both maintained using the reservoir sampling scheme.  When the currently observed edge completes a 4-clique with five other edges in $\Samp_e$, this 4-clique becomes a candidate for inclusion in $\Samp_4$. \algomultifive{} detects a 5-clique when the currently observed edge completes a 5-clique with one 4-clique in $\Samp_4$ and edges in $\Samp_e$. \algomultifive maintains an estimation of $|\clisett{5}{t}|$ trough the graph evolution which is increased every time a 5-clique is observed by the algorithm by the reverse of the probability of observing such clique. Rather than presenting a detailed analysis of \algomultifive{} (which follows the same steps in the analysis of \algomultione{}), this section aims to provide evidence of the the actual practical value of this algorithm.
%
Our experiments compare the performance of \algomultifive{} to that of a standard one tier edge reservoir sample algorithm \algosinglefive{}, similar to  \algosingle. We evaluate the average performance over 10 runs of the two algorithms on the DBLP graph using $M= 3\times 10^5$. The results are presented in Figure~\ref{fig:fivecliques}. \algomultifive{} clearly outperforms \algosinglefive{} in obtaining much better estimations of the ground truth value $|\clisett{5}{t}|$  achieving an $\sim 56$\% reduction in the average MAPE.
\begin{figure}[t]
\centering
\includegraphics[scale=0.4]{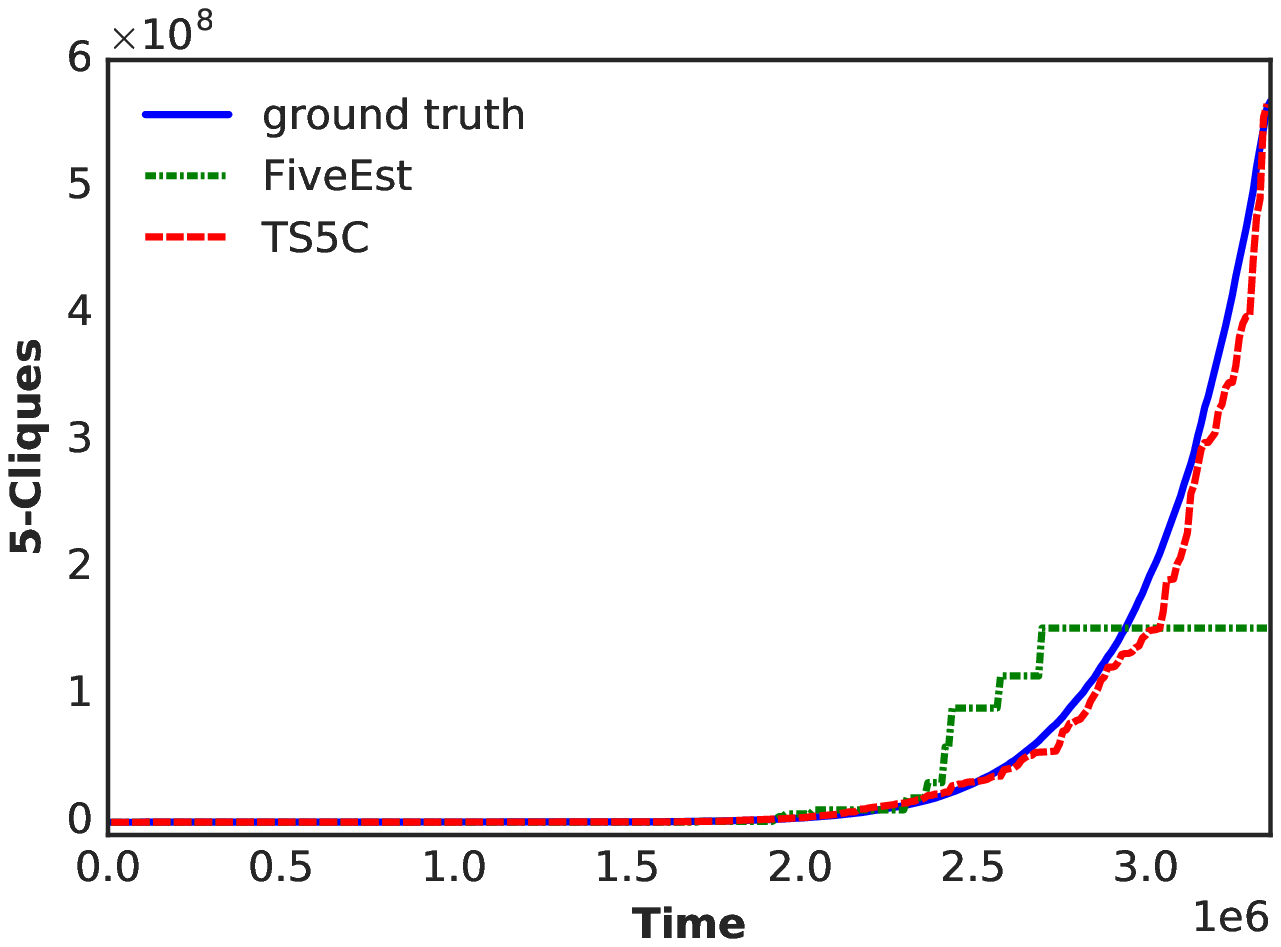}
\caption{Comparison of $|\clisett{5}{t}|$ estimates for the DBLP graph obtained using \algomultifive{} and \algosinglefive{} with $M=3\times10^5$.}
\vskip -.2in 
\label{fig:fivecliques}
\end{figure}

\section{Conclusions}
We developed \tiesa{}, a novel technique for approximate counting sparse motifs in massive graphs whose edges are observed in one pass stream. We studied application of this technique for the problems of counting 4 and 5-cliques. We present both analytical proofs and experimental results, demonstrating the advantage of our method in counting sparse motifs compared to the standard methods of using just edge reservoir sample. With the growing interest in discovering and analyzing large motifs in massive scale graphs in social networks, genomics, and neuroscience, we expect to see further applications of our technique.

% use section* for acknowledgement
\section*{Acknowledgement}
This work was partially supported by NSF grant IIS-1247581,  by a Google Focused Research Award, by the Sapienza Grant C26M15ALKP, by the SIR Grant RBSI14Q743, and by the ERC Starting Grant DMAP 680153.

\bibliographystyle{IEEEtran}

\bibliography{main}

\newpage
\clearpage
\newpage
\onecolumn
\appendices
\section{Additional theoretical results for \algomultione{}}\label{app:probability computation}
Before presenting the proof for the main analytical results for \algomultione{} discussed in Section~\ref{sec:multi}, we introduce some technical lemmas.

\begin{lemma}\label{lem:twocount}
	Let $\lambda\in \clisett{4}{t}$ with $\lambda = \{e_1,e_2,e_3,e_4,e_5,e_6\}$ using Figure~\ref{fig:4clidetection} as reference. Assume further, without loss of generality, that the edge $e_i$ is observed at $t_i$ (not necessarily consecutively) and that $t_6>\max\{t_i,1\leq i\leq5\}$. $\lambda$ can be observed by \algomultione{} at time $t_6$ either as a combination of triangle $T_1=\{e_1,e_2, or e_4\}$ and edges $e_3=(v,w)$ and $e_5=(v,z)$, or as a combination of triangle $T_2=\{e_1,e_3,e_5\}$ and edges $e_2=(u,w)$ and $e_4=(u,z)$.
\end{lemma}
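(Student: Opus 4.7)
The plan is to reduce the claim to a finite enumeration of the sub-triangles of $\lambda$ combined with an inspection of the order of operations inside the main loop of \algomultione{}.

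First, I would observe that the main loop of \algomultione{} invokes \textsc{Update4Cliques} \emph{before} \textsc{UpdateTriangles}. Consequently, at the moment \textsc{Update4Cliques} is executed on the edge $e_6=(u,v)$, the triangle reservoir $\Samp_\Delta$ cannot yet contain any triangle whose edge set includes $e_6$: such a triangle would need to have been observed at some earlier time step, which is impossible since $e_6$ itself has not yet been processed by \textsc{UpdateTriangles}. Any sub-triangle of $\lambda$ that the algorithm could discover in $\Samp_\Delta$ at time $t_6$ must therefore be edge-disjoint from $e_6$.

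Second, the 4-clique $\lambda$ on vertex set $\{u,v,w,z\}$ contains exactly ${4 \choose 3}=4$ sub-triangles, namely $\{u,v,w\}$, $\{u,v,z\}$, $\{u,w,z\}$, and $\{v,w,z\}$. The first two contain the edge $(u,v)=e_6$ and are ruled out by the previous paragraph. The only surviving candidates are therefore $T_1=\{u,w,z\}$, whose edge set is $\{e_1,e_2,e_4\}$, and $T_2=\{v,w,z\}$, whose edge set is $\{e_1,e_3,e_5\}$.

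Third, I would match each surviving sub-triangle to the corresponding branch of \textsc{Update4Cliques}. The first loop iterates over triangles in $\Samp_\Delta$ that share the endpoint $u$ with $e_6$ and, for each such triangle $\{u,w,z\}$, checks whether the two complementary edges $(v,w)=e_3$ and $(v,z)=e_5$ are in $\Samp_e$; this branch is exactly the detection of $\lambda$ via $T_1$. The second loop plays the symmetric role for triangles sharing the endpoint $v$, detecting $\lambda$ via $T_2$ by verifying that $(u,w)$ and $(u,z)$ are in $\Samp_e$. No other branch of the algorithm inspects triangles of the graph, so these are exhaustively the only two ways in which $\lambda$ can be observed at time $t_6$, establishing the lemma.

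The argument carries no serious technical obstacle; the only step requiring care is the control-flow observation that \textsc{Update4Cliques} precedes \textsc{UpdateTriangles} in each iteration, since that is what rules out sub-triangles of $\lambda$ containing $e_6$. Everything else is a direct case analysis over the four sub-triangles of $K_4$ and a pattern match against the two branches of \textsc{Update4Cliques}. This two-way multiplicity is precisely what justifies the factor $1/2$ applied to $p^{-1}$ each time a 4-clique is detected in the algorithm.
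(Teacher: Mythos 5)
Your proposal is correct and follows essentially the same argument as the paper's proof: the decisive point in both is that \textsc{Update4Cliques} runs before the triangle reservoir is updated with $e_6$, so only the two sub-triangles of $\lambda$ not containing $e_6$ (namely $T_1=\{e_1,e_2,e_4\}$ and $T_2=\{e_1,e_3,e_5\}$) can reside in $\Samp_\Delta$ at time $t_6$, and these correspond exactly to the two branches of \textsc{Update4Cliques}. Your explicit enumeration of the four sub-triangles of $K_4$ is just a slightly more detailed rendering of the same case analysis.
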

\begin{proof}
	As presented in Algorithm~\ref{alg:ncliquecounter}, \algomultione{}  can detect $\lambda$ only when its last edge is observed on the stream (hence, $t_6$). When $e_6$ is observed, the algorithm first evaluates wether there is any triangle in $\Samp_\Delta^{(t_6)}$ that shares one of the two endpoint $u$ or $v$ form $e_6$. Since a this step (i.e., the execution of fucntion \textsc{Update4Cliques}) the triangle sample is jet to be updated based on the observation of $e_6$, the only triangle sub-structures of $\lambda$ which may have been observed on the stream, and thus included in $\Samp_\Delta^{(t_6)}$ are $T_1=\{e_1,e_2,e_4\}$ and $T_2=\{e_1,e_3,e_5\}$.
		If any of these is indeed in $\Samp_\Delta^{(t_6)}$, \algomultione{} proceeds to check wether the remaining two edges required to complete $\lambda$ (resp., $e_3,e_5$ for $T_1$, or $e_2,e_4$ for $T_2$) are in $\Samp_e$. $\lambda$ is thus observed either once or twice depending on which just one or both of these conditions are verified.
\end{proof}
\bigskip
\begin{lemma}\label{lem:probclique}
Let $\lambda\in \clisett{4}{t}$ with $\lambda = \{e_1,e_2,e_3,e_4,e_5,e_6\}$ using Figure~\ref{fig:4clidetection} as reference. Assume further, without loss of generality, that the edge $e_i$ is observed at $t_i$ (not necessarily consecutively) and that $t_6>\max\{t_i,1\leq i\leq5\}$. Let $t_{1,2,4}=\max{t_1,t_2,t_4,M_e+1}$. The probability $p_\lambda$ of $\lambda$ being observed on the stream by \algomulti{} using the triangle $T_1=\{e_1,e_2,e_4\}$ and the edges $e_3, e_5$, is computed by \textsc{ProbClique} as:
\begin{align*}\label{eq:probclique}
	p_\lambda = \frac{M_e}{t^M_{1,2,4}-1}\frac{M_e-1}{t^M_{1,2,4}-2}\minone{\frac{M_\Delta}{\ntris[t_6]}}p'
\end{align*}
where if $t_6\leq M_e$
\begin{equation*}
	p' = 1,
\end{equation*}
if $\min\{t_3,t_5\}>t_{1,2,4}$
\begin{equation*}
	p' = \frac{M_e}{t_6-1}\frac{M_e-1}{t_6-2},
\end{equation*}
if $\max\{t_3,t_5\}>t_{1,2,4}>\min\{t_3,t_5\}$
\begin{equation*}
	p'= \frac{M_e-1}{t_6-2}\frac{M_e-2}{t_{1,2,4}-3}\frac{t_{1,2,4}-1}{t_6-1}
\end{equation*}
otherwise
\begin{equation*}
	p'=\frac{M_e-2}{t_{1,2,4}-3}\frac{M_e-3}{t_{1,2,4}-4}\frac{t_{1,2,4}-1}{t_6-1}\frac{t_{1,2,4}-2}{t_6-2}.
\end{equation*}
\end{lemma}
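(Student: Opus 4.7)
The plan is to decompose the event ``$\lambda$ is observed via triangle $T_1=\{e_1,e_2,e_4\}$ and edges $e_3,e_5$'' into three sub-events whose joint probability can be computed by two independent applications of Lemma~\ref{lem:reservoirhighorder}, one to the edge reservoir and one to the triangle reservoir. Detection fires inside \textsc{Update4Cliques} exactly when $e_6$ arrives, so the event is the conjunction of: (i) $T_1$ was \emph{observed} on the stream at time $\tilde t:=\max\{t_1,t_2,t_4\}$, which by the definition of \textsc{UpdateTriangles} happens iff the two \emph{earlier} edges of $T_1$ lie in $\Samp_e^{(\tilde t)}$; (ii) $T_1$ persists in $\Samp_\Delta$ through step $t_6$; and (iii) $e_3,e_5\in\Samp_e^{(t_6)}$. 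The definition $t_{1,2,4}=\max\{\tilde t,M_e+1\}$ in the lemma is a notational device that forces the edge-reservoir factors to collapse to $1$ during the deterministic fill-up phase.

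Event (ii) is handled immediately by applying Lemma~\ref{lem:reservoirhighorder} to the triangle reservoir (with ``time'' parameter $\ntris[t_6]$ and $k=1$); since the triangle reservoir uses coins independent of those of the edge reservoir, this factors off as $\minone{M_\Delta/\ntris[t_6]}$, accounting for the middle factor of $p_\lambda$. The remaining work is to compute $\Pr((\text{i})\cap(\text{iii}))$, a joint edge-reservoir probability involving queries at the two times $\tilde t$ and $t_6$. The structural fact that drives the case analysis is that reservoir sampling never re-admits an evicted edge, so if a clique edge needed at $t_6$ arrived before $\tilde t$ then it must \emph{also} lie in $\Samp_e^{(\tilde t)}$.

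I would then follow the same four cases as the lemma. In the $t_6\le M_e$ case the edge reservoir is still deterministic and (i),(iii) hold with probability $1$; the convention $t_{1,2,4}=M_e+1$ makes the prefactor equal $1$. In the case $\min\{t_3,t_5\}>t_{1,2,4}$, events (i) and (iii) concern disjoint edges observed in disjoint time windows, so by two applications of Lemma~\ref{lem:reservoirhighorder} with $k=2$ (and the Markov property of reservoir sampling, which makes the behaviour of edges arriving after $\tilde t$ independent of $\Samp_e^{(\tilde t)}$) they factor as a product. In the case where exactly one of $\{t_3,t_5\}$ arrives before $\tilde t$ (wlog $t_3$), the no-re-admission fact forces $\{e_a,e_b,e_3\}\subseteq\Samp_e^{(\tilde t)}$, and the remaining contribution is obtained by multiplying (a) the telescoping survival probability of $e_3$ from $\tilde t$ to $t_6$, equal to $(t_{1,2,4}-1)/(t_6-1)$, with (b) the conditional probability $\Pr(e_5\in\Samp_e^{(t_6)}\mid e_3\in\Samp_e^{(t_6)})=(M_e-1)/(t_6-2)$. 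Finally, if both $t_3,t_5<\tilde t$, we instead need all four of $\{e_a,e_b,e_3,e_5\}$ in $\Samp_e^{(\tilde t)}$, and the two-edge joint survival from $\tilde t$ to $t_6$ telescopes to $(t_{1,2,4}-1)(t_{1,2,4}-2)/((t_6-1)(t_6-2))$.

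The main obstacle is justifying the conditional factorizations when the same edge reservoir is queried at two different times. The cleanest way is to use two elementary facts about reservoir sampling: (a) at step $t>M_e$ a fixed edge already in the reservoir is evicted with probability $1/t$, and two fixed edges are both retained with probability $(t-2)/t$ because at most one eviction can occur per step; and (b) conditional on $\Samp_e^{(\tilde t)}$ the future dynamics of the edge reservoir form a Markov chain whose transitions depend only on the remaining stream and fresh coins. Combining (a) and (b) with the uniform-subset characterisation provided by Lemma~\ref{lem:reservoirhighorder} reduces each case to a product of elementary factors that reassemble into the closed form stated for $p_\lambda$.
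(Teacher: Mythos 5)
Your proposal is correct and follows essentially the same route as the paper's proof: the same decomposition into (a) $T_1$ being observed when its last edge arrives at $t_{1,2,4}$, (b) $T_1$ being retained in the triangle reservoir until $t_6$ (the $\minone{M_\Delta/\ntris[t_6]}$ factor), and (c) $e_3,e_5\in\Samp_e^{(t_6)}$, with the identical four-case split on how $t_3,t_5$ interleave with $t_{1,2,4}$. The only cosmetic difference is that you evaluate the joint edge-reservoir probabilities directly via explicit per-step eviction/survival (telescoping) factors, whereas the paper phrases $p'$ as a conditional probability given the event that $T_1$ was observed; the resulting factors coincide in every case.
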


\bigskip
\begin{proof}
Let us define the event $E_\lambda =\lambda$\emph{ is observed on the stream by \algomultione{} using triangle $T_1=\{e_1,e_2,e_4\}$ and edges $e_3,e_5$}. Further let $E_{T_1} = T_1\in \Samp_{\Delta}^{(t_6)}$, and $E_{3,5}= \{e_3,e_5\}\subseteq \Samp_e^{(t_6)}$.
Given the definition of \algomultione{} we have:
\begin{equation*}
	E_\lambda = E_{T_1} \wedge E_{3,5},
\end{equation*}
and hence:
\begin{equation*}
	p_\lambda = \Prob{E_\lambda} = \Prob{E_{T_1} \wedge E_{3,5}} = \Prob{E_{3,5}|E_{T_1}}\Prob{E_{T_1}}.
\end{equation*}
In oder to study $\Prob{E_{T_1}}$ we shall introduce event $E_{S(T_1)}=$``\emph{triangle $T_1$ is observed on the stream by \algomultione{}}''. From the definition of \algomultione{} we know that $T_1$ is observed on the stream iff when the last edge of $T_1$ is observed on the stream at $\max\{t_1,t_2,t_4\}$ the remaining to edges are in the edge sample. Applying Bayes's rule of total probability we have:
\begin{equation*}
	\Prob{E_{T_1}} = \Prob{E_{T_1}|E_{S(T_1)}}\Prob{E_{S(T_1)}}.
\end{equation*} 
and thus:
\begin{equation}\label{eq:unb2}
	p_\lambda = \Prob{E_{3,5}|E_{T_1}}\Prob{E_{T_1}|E_{S(T_1)}}\Prob{E_{S(T_1)}}.
\end{equation}
Let $t_{1,2,4}=\max\{t_1,t_2,t_4,M+1\}$, if order for $T_1$ to be observed by \algomultione{} it is required that when the last edge of $T_1$ is obseved on the stream at $t_{1,2,4}$ its two remaining edges are kept in $\Samp_e$. From Lemma~\ref{lem:reservoirhighorder} we have:
\begin{equation}\label{eq:t1seen}
	\Prob{E_{S(T_1)}} = \frac{M_e}{t_{1,2,4}-1}\frac{M_e-1}{t_{1,2,4}-2},
\end{equation}
and:
\begin{equation}\label{eq:t1stored}
	\Prob{E_{T_1}|E_{S(T_1)}}=\frac{M_e}{\tau^{t_6}}.
\end{equation}
Let us now consider $\Prob{E_{3,5}|E_{T_1}}$. While the content of $\Samp_\Delta$ itself does not influence the content of $\Samp_e$, the fact that $T_1$ is maintained in $\Samp\Delta^{(t_6)}$ implies that is has been observed on the stream at a previous time and hence, that two of its edges have been maintained in $\Samp_e$ \emph{at least} until the last of its edges has been observed on the stream. We thus have $\Prob{E_{3,5}|E_{T_1}}=\Prob{E_{3,5}|E_{S(T_1})}$. In order to study $p'=\Prob{E_{3,5}|E_{S(T_1})}$ it is necessary to distinguish the possible ($5!$) different arrival orders for edges $e_1,e_2,e_3,e_4$ and $e_5$. An efficient analysis we however reduce the number of cases to be considered to just four:
\begin{itemize}
\item $t_6\leq M+e$: in this case \emph{all} the edges observed on the stream up until $t_6$ are deterministically inserted in $\Samp_e$ and thus $p'=1$.
\item $\min\{t_3,t_5\}>t_{1,2,4}$: in this cases both edges $e_3$ and $e_5$ are observed after \emph{all} the edges composing $T_1$ have already been observed on the stream. As for any $t>t_{1,2,4}$ the event $E_{S(T_1)}$ does not imply that any of the edges of $T_1$ is \emph{still} in $\Samp_e$ we have:
\begin{equation*}
	p' = \Prob{E_{3,5}|E_{T_1}} = \Prob{\{e_3,e_5\}\subseteq \Samp_e^{(t_6)}},
\end{equation*}
and thus, from Lemma~\ref{lem:reservoirhighorder}:
\begin{equation*}
	p' = \frac{M_e}{t_6-1}\frac{M_e-1}{t_6-2}.
\end{equation*}
\item $\max\{t_3,t_5\}>t_{1,2,4}>\min\{t_3,t_5\}$: in this case only one of the edges $e_3,e_5$ is observed after all the edges in $T_1$ are observed. We need to therefore take into consideration that the two edges of $T_1$ are kept in $S_e$ until $t_{1,2,4}$. Let $e^M_{3,5}$ (resp., $e^m_{3,5}$) denote the last (resp., first) edge observed on the stream between $e_3$ and $e_5$.
\begin{align*}
	p'&= \Prob{e^M_{3,5}\in \Samp_e^{(t_6)}|e^m_{3,5}\in \Samp_e^{(t_6)}}\Prob{e^m_{3,5}\in \Samp_e^{(t_6)}},\\
	&= \frac{M_e-1}{t_6-2}\Prob{e^m_{3,5}\in \Samp_e^{(t_6)}|e^m_{3,5}\in \Samp_e^{(t_{1,2,4})}}\Prob{e^m_{3,5}\in \Samp_e^{(t_{1,2,4})}},\\
	&= \frac{M_e-1}{t_6-2}\frac{t_{1,2,4}-1}{t_6-1}\frac{M_e-2}{t_{1,2,4}-3}.
\end{align*}
\item $t_{1,2,4}>\max\{t_2,t_4\}$: in all the remaining cases both $e_3$ and $e_5$ are observed before the last edge of $T_1$ has been observed. Hence:
\begin{align*}
	p'&= \Prob{\{e_3,e_5\}\subseteq \Samp_e^{(t_6)}|\{e_3,e_5\}\subseteq \Samp_e^{(t_{1,2,4})}}\Prob{\{e_3,e_5\}\subseteq \Samp_e^{(t_{1,2,4})}},\\
	&= \frac{M_e-2}{t_{1,2,4}-3}\frac{M_e-3}{t_{1,2,4}-4}\frac{t_{1,2,4}-1}{t_6-1}\frac{t_{1,2,4}-2}{t_6-2}.
\end{align*}
\end{itemize}
The lemma follows combining the result for the values of $p'$ with~\eqref{eq:t1seen} and~\eqref{eq:t1stored} in~\eqref{eq:unb2}. 
\end{proof}
\bigskip
Using this result we can proceed to the proof of the unbiasedness of the estimator returned by \algomultione{}.
\bigskip
\begin{proof}[Proof of Theorem~\ref{thm:multiunbiased}]
From Lemma~\ref{lem:twocount} we have that \algomultione{} can detect any 4-clique $\lambda\in\clisett{4}{t}$ in exactly two ways: either using triangle $T_1$ and edges $e_3,e_5$, or by using triangle $T_2$ and edges $e_2,e_4$ (use Figure~\ref{fig:4clidetection} as a reference).

For each $\lambda\in\clisett{4}{t}$ let us consider the random variable $\delta_{\lambda_1}$ (resp. $\delta_{\lambda_2}$) which takes value $p^{-1}_{\lambda_1}/2$ (resp., $p^{-1}_{\lambda_2}/2$) if the 4-clique $\lambda$ is observed by \algomultione{} using triangle $T_1$ (resp., $T_2$) or zero otherwise. Let $p_{\lambda_1}$ (resp., $p_{\lambda_2}$) denote the probability of such event: we then have $\Exp{\delta_{\lambda_1}}=\Exp{\delta_{\lambda_2}}=1/2$. 
	
From Lemma~\ref{lem:probclique} we the estimator $\fcliest[t]$ computed using \algomultione{} has can be expressed as $\fcliest[t] = \sum_{\lambda\in \clisett{4}{t}} \left(
	\delta_{\lambda_1} + \delta_{\lambda_2} \right)$. From the previous discussion and by applying linearity of expectation we thus have:
\begin{equation*}
	\Exp{\fcliest[t]} = \Exp{\sum_{\lambda\in \clisett{4}{t}} \left(
	\delta_{\lambda_1} + \delta_{\lambda_2} \right)} =\sum_{\lambda\in \clisett{4}{t}} \left(
	\Exp{\delta_{\lambda_1}} + \Exp{\delta_{\lambda_2}} \right)= \sum_{\lambda\in \clisett{4}{t}} 1 = |\clisett{4}{t}|.
\end{equation*}

Finally, let $t^*$ denote the first step at for which the number of triangles seen exceeds $M_\Delta$. 
For $t \leq \min \{M_e, t^*\}$, the entire graph $G^{(t)}$ is maintained in $\Samp_e$ and all the triangles in $G^{(t)}$ are stored in $\Samp_\Delta$. Hence all the cliques in $G^{(t)}$ are deterministically observed by \algomulti \emph{in both ways} and we therefore have $\varkappa^{(t)}= |\clisett{4}{t}|$.
\end{proof}
\bigskip
\begin{lemma}\label{lem:sharingclique}
	Any pair $(\lambda,\gamma)$ of distinct 4-cliques in $G^{(t)}$ can share either one, three or no edges. If $\lambda$ and $\gamma$ share three edges, those three edges compose a triangle.
\end{lemma}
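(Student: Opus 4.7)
The plan is to reduce the counting of shared edges to counting shared vertices, exploiting the fact that a clique is determined by its vertex set.

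First I would observe that if an edge $e=(x,y)$ is shared between two 4-cliques $\lambda$ and $\gamma$, then both endpoints $x$ and $y$ must lie in $V(\lambda)\cap V(\gamma)$, simply because every edge has two endpoints and those must belong to the vertex set of any clique containing the edge. Conversely, since a 4-clique contains \emph{every} edge between its four vertices, any pair of vertices in the intersection $W:=V(\lambda)\cap V(\gamma)$ automatically yields a shared edge. Hence the set of edges shared between $\lambda$ and $\gamma$ is exactly the set of all $\binom{|W|}{2}$ pairs of vertices in $W$.

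Then I would enumerate cases on $k:=|W|\in\{0,1,2,3,4\}$. For $k\in\{0,1\}$, there are $\binom{k}{2}=0$ shared edges. For $k=2$, there is exactly $\binom{2}{2}=1$ shared edge. For $k=3$, there are $\binom{3}{2}=3$ shared edges, and by the correspondence above these three edges are precisely the three pairs among the three common vertices, which form a triangle. The only remaining case, $k=4$, would force $V(\lambda)=V(\gamma)$; but a 4-clique is uniquely determined by its vertex set (all $\binom{4}{2}=6$ edges among those four vertices must be present and are the edges of the clique), so this contradicts the hypothesis that $\lambda$ and $\gamma$ are distinct.

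Combining the cases, the number of shared edges lies in $\{0,1,3\}$, and in the case of $3$ shared edges those edges form a triangle, establishing the lemma.

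I do not anticipate a real obstacle here: the only subtle point is ruling out $k=4$ via the identification of a $4$-clique with its vertex set, and noting that shared edges cannot exist without shared endpoints, which rules out pathological configurations like sharing two non-adjacent edges whose endpoints are not all common. Once the bijection between shared vertices and shared edges is stated, the result reduces to inspecting $\binom{k}{2}$ for $k\le 4$.
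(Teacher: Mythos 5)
Your proof is correct and rests on the same key fact as the paper's: since a 4-clique contains every edge among its four vertices, the shared edges are exactly the pairs of common vertices, so the count is $\binom{|W|}{2}$ with $|W|=4$ excluded by distinctness. The paper packages this as a contradiction argument ruling out two shared edges (which would force three common vertices and hence three shared edges) and four or five shared edges (which would force identical vertex sets), whereas you enumerate directly over $|W|$; this is only a presentational difference, and your version has the minor merit of making the triangle claim for the three-shared-edge case explicit.
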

\begin{proof}
	Suppose that $\lambda$ and $\gamma$ share exactly two distinct edges. This implies that they share at least three distinct nodes, and thus must share the three edges connecting each pair out of said three nodes. This constitutes a contradiction. 
	Suppose instead that $\lambda$ and $\gamma$ share four or five edges while being distinct. This implies that they must share four vertices, hence they cannot be distinct cliques. This leads to a contradiction.
\end{proof}
\begin{figure}
\begin{center}
	\begin{tikzpicture}[line cap=round,line join=round,>=triangle 45,x=1cm,y=1cm, scale=1, every node/.style={scale=1}]
	\fill[line width=1.2pt,color=zzttqq,fill=zzttqq,fill opacity=0.10000000149011612] (0,3) -- (0,0) -- (3,0) -- cycle;
	
	\fill[line width=0pt,color=qqqqff,fill=qqqqff,fill opacity=0.1] (3,3) -- (0,0) -- (3,0) -- cycle;
	
	\draw [line width=1.2pt] (0,3)-- (0,0);
	
	\draw [line width=1.2pt] (0,0)-- (3,0);
	
	\draw [line width=1.2pt,dash pattern=on 1pt off 3pt] (3,0)-- (3,3);
	
	\draw [line width=1.2pt] (3,3)-- (0,3);
	
	\draw [line width=1.2pt] (3,3)-- (0,0);
	
	\draw [line width=1.2pt] (0,3)-- (3,0);
	
	\draw [line width=1.2pt] (6,0)-- (6,3);
	\draw [line width=1.2pt] (3,0)-- (6,0);
	\draw [line width=1.2pt] (3,0)-- (6,3);
	\draw [line width=1.2pt] (3,3)-- (6,3);
	\draw [line width=1.2pt] (3,3)-- (6,0);
	\begin{scriptsize}
	\draw [fill=xdxdff] (0,0) circle (2pt);
	\draw [fill=xdxdff] (0,3) circle (2pt);
	\draw [fill=qqqqff] (3,3) circle (2pt);
	\draw [fill=xdxdff] (3,0) circle (2pt);
	\draw [fill=xdxdff] (6,3) circle (2pt);
	\draw [fill=xdxdff] (6,0) circle (2pt);
	\draw[color=black] (-0.2,1.5) node {$e_{2}$};
	\draw[color=black] (1.5,-0.2) node {$e_{1}$};
	\draw[color=black] (3.2,1.5) node {$e^{*}$};
	\draw[color=black] (1.5,3.2) node {$e_{6}$};
	\draw[color=black] (0.95,2.3) node {$e_{4}$};
	\draw[color=black] (2.05,2.3) node {$e_{5}$};
	\draw[color=zzttqq] (0.68,1.5) node {$T_1$};
	\draw[color=qqqqff] (2.44,1.5) node {$T_2$};
	
	\draw[color=black] (4.5,-0.2) node {$g_{1}$};
	\draw[color=black] (6.2,1.5) node {$g_{3}$};
	\draw[color=black] (4.5,3.2) node {$g_{6}$};
	\draw[color=black] (4.05,2.3) node {$g_{4}$};
	\draw[color=black] (5.05,2.3) node {$g_{5}$};
	\end{scriptsize}
	\end{tikzpicture}
	\caption{Cliques sharing one edge.}
	\label{fig:share1edge}
	\end{center}
\end{figure}
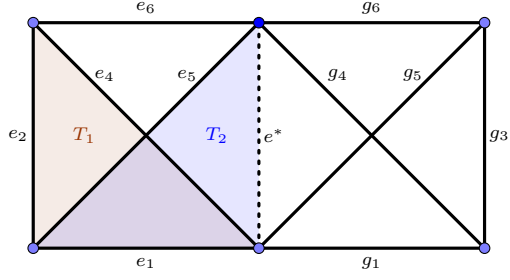

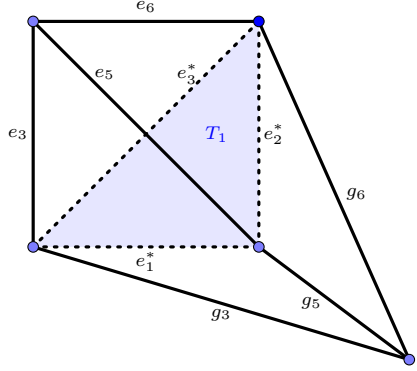
\begin{figure}
\begin{center}
	\begin{tikzpicture}[line cap=round,line join=round,>=triangle 45,x=1cm,y=1cm, scale=1, every node/.style={scale=1}]
	%\fill[line width=1.2pt,color=zzttqq,fill=zzttqq,fill opacity=0.10000000149011612] (0,3) -- (0,0) -- (3,0) -- cycle;
	
	\fill[line width=0pt,color=qqqqff,fill=qqqqff,fill opacity=0.1] (3,3) -- (0,0) -- (3,0) -- cycle;
	
	\draw [line width=1.2pt] (0,3)-- (0,0);
	
	\draw [line width=1.2pt,dash pattern=on 1pt off 3pt] (0,0)-- (3,0);
	
	\draw [line width=1.2pt,dash pattern=on 1pt off 3pt] (3,0)-- (3,3);
	
	\draw [line width=1.2pt] (3,3)-- (0,3);
	
	\draw [line width=1.2pt,dash pattern=on 1pt off 3pt] (3,3)-- (0,0);
	
	\draw [line width=1.2pt] (0,3)-- (3,0);
	
	\draw [line width=1.2pt] (3,0)-- (5,-1.5);
	\draw [line width=1.2pt] (3,3)-- (5,-1.5);
	\draw [line width=1.2pt] (0,0)-- (5,-1.5);
	\begin{scriptsize}
	\draw [fill=xdxdff] (0,0) circle (2pt);
	\draw [fill=xdxdff] (0,3) circle (2pt);
	\draw [fill=qqqqff] (3,3) circle (2pt);
	\draw [fill=xdxdff] (3,0) circle (2pt);
	\draw [fill=xdxdff] (5,-1.5) circle (2pt);
	\draw[color=black] (-0.2,1.5) node {$e_{3}$};
	\draw[color=black] (1.5,-0.2) node {$e_{1}^*$};
	\draw[color=black] (3.2,1.5) node {$e_2^{*}$};
	\draw[color=black] (1.5,3.2) node {$e_{6}$};
	\draw[color=black] (0.95,2.3) node {$e_{5}$};
	\draw[color=black] (2.05,2.3) node {$e_{3}^*$};
	%\draw[color=zzttqq] (0.68,1.5) node {$T_1$};
	\draw[color=qqqqff] (2.44,1.5) node {$T_1$};
	\draw[color=black] (2.5,-0.9) node {$g_{3}$};
	\draw[color=black] (3.7,-0.75) node {$g_{5}$};
	\draw[color=black] (4.3,0.7) node {$g_{6}$};

	\end{scriptsize}
	\end{tikzpicture}
	\caption{Cliques sharing three edges.}
	\label{fig:share3edges}
	\end{center}
\end{figure}

We now proceed to the proof of the bound on the variance of \algomultione{} in Theorem~\ref{thm:variancemulti}. We remark that the bound presented here is loose as it sacrifices its precision for presentation purposes. A stronger bound would require a case by case analysis of all the 12! possible relative ordering according to which the edges of any pair of 4-cliques in $G^{(t)}$ are observed. Our proof technique simplifies the analysis by grouping these cases into macro-cases with a resulting loss in tightness. 
%\begin{lemma}
%	Let $\lambda=\{e_1,e_2,e_3,e_4,e_5,e_6\}$ and $\gamma=\{g_1,g_2,g_3,g_4,g_5,g_6\}$ be two
%  \emph{disjoint} 4-cliques in $\clisett{4}{t}$, where the edges are numbered in order of
%  appearance on the stream, and assume, w.l.o.g., that the last edge of
%  $\lambda$ is on the stream at time $t_\lambda$, \emph{before} the last edge of $\gamma$ is observed at time $t_\gamma$. 
%  \[
%    \Pr(D_\gamma~|~D_\lambda) \le \Pr(D_\gamma)\enspace.
%  \]
%\end{lemma}
\bigskip
\begin{proof}[Proof of Theorem~\ref{thm:variancemulti}]
	Assume $|\clisett{4}{t}|>0$, otherwise \algomultione{} estimation is
  deterministically correct and has variance 0 and the thesis holds. For each $\lambda\in\clisett{4}{t}$ let $\lambda = \{e_1,e_2,e_3,e_4,e_5,e_6\}$, withouth loss of generality let us assume the edges are disposed as in Figure~\ref{fig:4clidetection}.  Assume further, without loss of generality, that the edge $e_i$ is observed at $t_i$ (not necessarily consecutively) and that $t_6>\max\{t_i,1\leq i\leq5\}$. Let $t_{1,2,4}=\max\{t_1,t_2,t_4,M_e+1\}$. Let us consider the random variable $\delta_{\lambda_1}$ (resp. $\delta_{\lambda_2}$) which takes value $p^{-1}_{\lambda_1}/2$ (resp., $p^{-1}_{\lambda_2}/2$) if the 4-clique $\lambda$ is observed by \algomultione{} using triangle $T_1 = \{e_1,e_2,e_4\}$ (resp., $T_2$) and edges $e_3,e_5$ (resp., $e_2,e_4$) or zero otherwise. Let $p_{\lambda_1}$ (resp., $p_{\lambda_2}$) denote the probability of such event.
  % That is $\delta_{\lambda_1}$ (resp. $\delta_{\lambda_2}$) which takes value $p^{-1}_{\lambda_1}/2$ iff at time 
  
   Since, from Lemma~\ref{lem:probclique} we know:
  \begin{align*}
  	 \Var{\delta_{\lambda_1}} &= \frac{p_{\lambda_1}^{-1}}{4}-\frac{1}{4}\leq \frac{1}{4}\left(\frac{\ntris[t]}{M_\Delta}\prod_{i=0}^3 \frac{t-1-i}{M_e-i}-1\right),\\
    \Var{\delta_{\lambda_2}} &= \frac{p_{\lambda_2}}{4}^{-1}-\frac{1}{4}\leq \frac{1}{4}\left(\frac{\ntris[t]}{M_\Delta}\prod_{i=0}^3 \frac{t-1-i}{M_e-i}-1\right).\\
  \end{align*}  
  we have:
  \begin{align}\label{eq:variacemul1n1}
    \Var{\varkappa^{(t)}} &=
    \Var{\sum_{\lambda\in\clisett{4}{t}} \delta_{\lambda_1}+\delta_{\lambda_2}}=
    \sum_{\lambda\in \clisett{4}{t}} \sum_{\gamma \in
    \clisett{4}{t}}\sum_{i\in\{1,2\}}\sum_{j\in\{1,2\}}\text{Cov}\left[\delta_{\lambda_i},\delta_{\gamma_j}\right]
    \nonumber\\
    &= \sum_{\lambda\in \lambda\in \clisett{4}{t}}\left(
    \Var{\delta_{\lambda_1}}+\Var{\delta_{\lambda_2}}\right) +\sum_{\lambda\in \clisett{4}{t}}\left(
    \Cov{\delta_{\lambda_1},\delta_{\lambda_2}}+\Cov{\delta_{\lambda_2},\delta_{\lambda_1}}\right)\nonumber \\
    &+\sum_{\substack{\lambda\in \clisett{4}{t}\\\lambda \neq \gamma}} \left(\Cov{\delta_{\lambda_1},\delta_{\gamma_1}}+\Cov{\delta_{\lambda_1},\delta_{\gamma_2}}+\Cov{\delta_{\lambda_2},\delta_{\gamma_1}}+\Cov{\delta_{\lambda_2},\delta_{\gamma_2}}\right)
    \nonumber\\
   &= \frac{|\clisett{4}{t}|}{2}\left(\frac{\ntris[t]}{M_\Delta}\prod_{i=0}^3 \frac{t-1-i}{M_e-i}-1\right) +\sum_{\lambda\in \clisett{4}{t}}\left(
    \Cov{\delta_{\lambda_1},\delta_{\lambda_2}}+\Cov{\delta_{\lambda_2},\delta_{\lambda_1}}\right)\nonumber \\
    &+\sum_{\substack{\lambda\in \clisett{4}{t}\\\lambda \neq \gamma}} \left(\Cov{\delta_{\lambda_1},\delta_{\gamma_1}}+\Cov{\delta_{\lambda_1},\delta_{\gamma_2}}+\Cov{\delta_{\lambda_2},\delta_{\gamma_1}}+\Cov{\delta_{\lambda_2},\delta_{\gamma_2}}\right)\\
    &\leq \frac{|\clisett{4}{t}|}{2}\left(\frac{\ntris[t]}{M_\Delta}c\left(\frac{t-1}{M_e}\right)^4-1\right) +\sum_{\lambda\in \clisett{4}{t}}\left(
    \Cov{\delta_{\lambda_1},\delta_{\lambda_2}}+\Cov{\delta_{\lambda_2},\delta_{\lambda_1}}\right)\nonumber \\
    &+\sum_{\substack{\lambda\in \clisett{4}{t}\\\lambda \neq \gamma}} \left(\Cov{\delta_{\lambda_1},\delta_{\gamma_1}}+\Cov{\delta_{\lambda_1},\delta_{\gamma_2}}+\Cov{\delta_{\lambda_2},\delta_{\gamma_1}}+\Cov{\delta_{\lambda_2},\delta_{\gamma_2}}\right)
    \enspace.
  \end{align}

We now proceed to analyze the various covariance terms appearing in~\eqref{eq:variacemul1n1}. In the following we refer to 

\begin{itemize}
	\item The second summation in~\eqref{eq:variacemul1n1}, $\sum_{\lambda\in \clisett{4}{t}}\left(
    \Cov{\delta_{\lambda_1},\delta_{\lambda_2}}+\Cov{\delta_{\lambda_2},\delta_{\lambda_1}}\right)$, concerns the sum of the covariances of pairs of random variables each corresponding to one of the two possible ways of detecting a 4-clique using \algomultione{}. Let us consider one single element of the summation:
    \begin{equation*}
    	\Cov{\delta_{\lambda_1},\delta_{\lambda_2}} = \Exp{\delta_{\lambda_1}\delta_{\lambda_2}} -\Exp{\delta_{\lambda_1}}\Exp{\delta_{\lambda_2}} = \Exp{\delta_{\lambda_1}\delta_{\lambda_2}} -1/4.
    \end{equation*}
    Let us now focus on $\Exp{\delta_{\lambda_1}\delta_{\lambda_2}}$, according to the definition of $\delta_{\lambda_1}$ and $\delta_{\lambda_2}$ we have:
    \begin{align*}
    	\Exp{\delta_{\lambda_1}\delta_{\lambda_2}} &= \frac{p_{\lambda_1}^{-1}p_{\lambda_2}^{-1}}{4} \Prob{\delta_{\lambda_1}=p^{-1}_{\lambda_1}\wedge\delta_{\lambda_2}=p^{-1}_{\lambda_2}} \\
    	&= \frac{p_{\lambda_1}^{-1}p_{\lambda_2}^{-1}}{4} \Prob{\delta_{\lambda_1}=p^{-1}_{\lambda_1}|\delta_{\lambda_2}=p^{-1}_{\lambda_2}}\Prob{\delta_{\lambda_2}=p^{-1}_{\lambda_2}}\\
    	&\leq \frac{p_{\lambda_1}^{-1}p_{\lambda_2}^{-1}}{4} \Prob{\delta_{\lambda_2}=p^{-1}_{\lambda_2}}\\
    	&\leq \frac{p_{\lambda_1}^{-1}p_{\lambda_2}^{-1}}{4} p_{\lambda_2}\\
    	&\leq \frac{p_{\lambda_1}^{-1}}{4}.
    \end{align*}
    
    We can therefore conclude:
    \begin{equation}\label{eq:varmulti1p1}
    	\sum_{\lambda\in \clisett{4}{t}}\left(
    \Cov{\delta_{\lambda_1},\delta_{\lambda_2}}+\Cov{\delta_{\lambda_2},\delta_{\lambda_1}}\right) \leq \frac{|\clisett{4}{t}|}{2}\left(\frac{\ntris[t]}{M_\Delta}\prod_{i=0}^3 \frac{t-1-i}{M_e-i}-1\right) \leq \frac{|\clisett{4}{t}|}{2}\left(\frac{\ntris[t]}{M_\Delta}c\left(\frac{t-1}{M_e}\right)^4-1\right)
    \end{equation}
   \item The third summation in~\eqref{eq:variacemul1n1}, includes the covariances of  all $|\clisett{4}{t}|\left(2|\clisett{4}{t}|-1\right)$ unordered pairs of random variables corresponding each to one of the two possible ways of counting distinct 4-cliques in $\clisett{4}{t}$. In order to provide a significant bound it is necessary to divide the possible pairs of 4-cliques depending on how many edges they share (if any). From Lemma~\ref{lem:sharingclique} we have that any pair of 4-cliques $\lambda$ and $\gamma$ can share either one, three or no edges. In the remainder of our analysis we shall distinguishing three group of pairs of 4-cliques based on how many edges they share. In the following, we present, without loss of generality, bounds for $\Cov{\delta_{\lambda_1},\delta_{\gamma_2}}$. The results steadily holds for the other possible combinations $\Cov{\delta_{\lambda_1},\delta_{\gamma_1}},\ \Cov{\delta_{\lambda_2},\delta_{\gamma_1}},\ \Cov{\delta_{\lambda_2},\delta_{\gamma_2}},\ \Cov{\delta_{\gamma_1},\delta_{\lambda_1}},\ \Cov{\delta_{\gamma_1},\delta_{\lambda_2}},\ \Cov{\delta_{\gamma_2},\delta_{\lambda_1}}$, and $\Cov{\delta_{\gamma_2},\delta_{\lambda_2}}$
   \begin{enumerate}
   	\item $\lambda$ and $\gamma$ do not share any edge:
   	\begin{align*}
    	\Exp{\delta_{\lambda_1}\delta_{\gamma_2}} &= \frac{p_{\lambda_1}^{-1}p_{\gamma_2}^{-1}}{4} \Prob{\delta_{\lambda_1}=p^{-1}_{\lambda_1} \wedge \delta_{\gamma_2}=p^{-1}_{\gamma_2}} \\
    	&= \frac{p_{\lambda_1}^{-1}p_{\gamma_2}^{-1}}{4} \Prob{\delta_{\lambda_1}=p^{-1}_{\lambda_1}|\delta_{\gamma_2}=p^{-1}_{\gamma_2}}\Prob{\delta_{\lambda_2}=p^{-1}_{\lambda_2}}\\
    \end{align*}
    The term $\Prob{\delta_{\lambda_1}=p^{-1}_{\lambda_1}|\delta_{\gamma_2}=p^{-1}_{\gamma_2}}$ denotes the probability of \algomultione{} observing $\lambda$ using $T_1$ and edges $e_3$ and $e_5$ conditioned of the fact that $\gamma$ was observed by the algorithm using $T_3$ and edges $g_3$ and $g_5$. Note that as $\lambda$ and $\gamma$ do not share any edge, no edge of $\gamma$ will be used by \algomultione{} to detect $\lambda$. Rather, if any edge of $\gamma$ is included in $\S_e$ or if $T_3$ is included in $\Samp_\Delta$, this would lessen the probability of \algomultione{} detecting $\lambda$ using $T1,e_3$ and $e_5$ as some of space in $\Samp_e$ or $\Samp_\Delta$ may be occupied by edges or triangle sub-structures for $\gamma$. Therefore we have $ \Prob{\delta_{\lambda_1}=p^{-1}_{\lambda_1}|\delta_{\gamma_2}=p^{-1}_{\gamma_2}}\leq  \Prob{\delta_{\lambda_1}=p^{-1}_{\lambda_1}}$ and thus:
    \begin{align*}
    	\Exp{\delta_{\lambda_1}\delta_{\gamma_2}} &= \frac{p_{\lambda_1}^{-1}p_{\gamma_2}^{-1}}{4} \Prob{\delta_{\lambda_1}=p^{-1}_{\lambda_1}|\delta_{\gamma_2}=p^{-1}_{\gamma_2}}\Prob{\delta_{\lambda_2}=p^{-1}_{\lambda_2}}\\
    	&\leq \frac{p_{\lambda_1}^{-1}p_{\gamma_2}^{-1}}{4} \Prob{\delta_{\lambda_1}=p^{-1}_{\lambda_1}}\Prob{\delta_{\lambda_2}=p^{-1}_{\lambda_2}}\\
    	&\leq \frac{p_{\lambda_1}^{-1}p_{\gamma_2}^{-1}}{4} p_{\lambda_1}p_{\lambda_2}\\
    	&\leq \frac{1}{4}
    \end{align*}
    We therefore have $\Cov{\delta_{\lambda_1},\delta_{\gamma_2}}\leq 0$. Hence we can conclude that the contribution of the covariances of the pairs of random variables corresponding to 4-cliques that do not share any edge to the summation in~\eqref{eq:variacemul1n1} is less or equal to zero.
    \item $\lambda$ and $\gamma$ share exactly one edge $e^*=\lambda \cap \gamma$ as shown in Figure~\ref{fig:share1edge}  Let us consider the event $E*=$``$e^*\cap T_1\cap E^{(t_{1,2,4}-1)}\in \Samp_{e}^{t_{1,2,4}}$, and $e^*\cap \{e_3,e_5\}\cap E^{(t_6-1)}\in S_e^{(t_6)}$''. Clearly $\Prob{\delta_{\lambda_1}=p^{-1}_{\lambda_1}|\delta_{\gamma_2}=p^{-1}_{\gamma_2}}\leq \Prob{\delta_{\lambda_1}=p^{-1}_{\lambda_1}|E*}$. Recall from Lemma~\ref{lem:probclique} that $\Prob{\delta_{\lambda_1}|E^*} = \Prob{\{e_3,e_5\}\subseteq \Samp_e^{(t_6)}|E^*}\Prob{T_1 \in \Samp_\Delta^{(t_6)}|E*}\Prob{S(T_1)|E^*}$, where $S(T_1)$ denotes the event ``$T_1$ is observed on the stream by \algomultione{}''. By applying the law of total probability e have that $\Prob{T_1 \in \Samp_\Delta^{(t_6)}|E*}\leq \Prob{T_1 \in \Samp_\Delta^{(t_6)}}$. The remaining two terms are influenced differently depending on whether $e^*\in T_1$ or $e^*\in\{e_3,e_5\}$
%    recall that $\Prob{\delta_{\lambda_1}=p^{-1}_{\lambda_1}|\delta_{\gamma_2}=p^{-1}_{\gamma_2}}$ denotes the probability of the event $E=$``\algomultione{} observes $\lambda$ using $T_1$ and edges $e_3$ and $e_5$ conditioned of the fact that $\gamma$ was observed by the algorithm using $T_3$ and edges $g_3$ and $g_5$''. Let us now consider now a slightly different event $E^* =$ ``\algomultione{} observes $\lambda$ using $T_1$ and edges $e_3$ and $e_5$ conditioned of the fact that the edge $e^*$, shared with $\gamma$, is \emph{always} kept in $\Samp_e$ without occupying any of the available space once it is observed on the stream''. Clearly whenever $E$ is verified so is $E^*$, and thus $\Prob{E}\leq \Prob{E^*}$. In order to bound from above $\Prob{E^*}$ we shall refer the computation in Lemma~\ref{lem:probclique}. Depending wether $e^*$ is part of $T_1$ or of it corresponds to one among $e_3$ or $e_5$, different parts of the computations are influenced. 
    \begin{itemize}
    \item if $e^*\in T_1$: we then have $\Prob{\{e_3,e_5\}\subseteq \Samp_e^{(t_6)}|E^*}\leq \Prob{\{e_3,e_5\}\subseteq \Samp_e^{(t_6)}}$. This follows from the properties of the reservoir sampling scheme as the fact that the edge $e^*$ is in $\Samp_e$ means that one unit of the available memory space required to hold $e_3$ or $e_5$ is occupied, at least for some time, by $e^*$.
    If $e^*$ is the last edge of $T_1$ observed in the stream we then have:
    \begin{align*}
    	\Prob{S(T_1)|E^*} &=\Prob{\{e_1,e_2,e_4\}\setminus \{e^*\} \in\Samp_e^({t_1,2,4})|E^*}\\
    	&=\Prob{\{e_1,e_2,e_4\}\setminus \{e^*\} \in\Samp_e^({t_1,2,4})}\\
    	&\leq \frac{M_e}{t_{1,2,4}-1}\frac{M_e-1}{t_{1,2,4}-2}
    \end{align*}
    Suppose instead that $e*$ is \emph{not} the last edge of $T_1$. Assume further , without loss of generality, that $t_2>\max\{t_1,t_4\}$.  \algomultione{} observes $T_1$ iff $\{e_1,e_4\} \in\Samp_e^({t_2})$. As in $E^*$ we assume that once observed $e^*$ is \emph{always} maintained in $\Samp_e$ until $t_{1,2,4}$ we have:
    \begin{align*}
    	\Prob{S(T_1)|E^*} &=\Prob{\{e_1,e_4\}\setminus \{e^*\} \in\Samp_e^({t_1,2,4})|E^*}\\
    	&\leq \frac{M_e-1}{t_{1,2,4}-2}
    \end{align*}
    \item if $e^*\in \{e_3,e_5\}$: we then have $\Prob{S(T_1)|E^*}\leq \Prob{S(T_1)}$. This follows from the properties of the reservoir sampling scheme as the fact that the edge $e^*$ is in $\Samp_e$ means that one unit of the available memory space required to hold the first two edges of $T_1$ until $t_{1,2,4}$  is occupied, at least for some time, by $e^*$. Further, using Lemma~\ref{lem:probclique} we have:
    \begin{itemize}
    	\item if $t_6\leq M_e$, then $\Prob{\{e_3,e_5\}\subseteq \Samp_e^{(t_6)}|E^*}=1$
		\item if $\min\{t_3,t_5\}>t_{1,2,4}$, then $\Prob{\{e_3,e_5\}\subseteq \Samp_e^{(t_6)}|E^*}\geq \frac{M_e}{t_6-1}$,
	\item if $\max\{t_3,t_5\}>t_{1,2,4}>\min\{t_3,t_5\}$, then $\Prob{\{e_3,e_5\}\subseteq \Samp_e^{(t_6)}|E^*}\geq \max\{\frac{M_e-1}{t_6-2},\frac{M_e-2}{t_{1,2,4}-3}\frac{t_{1,2,4}-1}{t_6-1}\}$,
	\item otherwise $\Prob{\{e_3,e_5\}\subseteq \Samp_e^{(t_6)}|E^*}\geq 
	\frac{M_e-2}{t_{1,2,4}-3}\frac{t_{1,2,4}-1}{t_6-1}$.
    \end{itemize}
    \end{itemize}
    Putting together these various results we have that $p_\lambda^{(-1)}\Prob{\delta_{\lambda_1}=p^{-1}_{\lambda_1}|\delta_{\gamma_2}=p^{-1}_{\gamma_2}}\leq p_\lambda^{(-1)}\Prob{\delta_{\lambda_1}=p^{-1}_{\lambda_1}|E*}\leq c\frac{t_6-1}{M_e}$ with $c=s$. Hence we have $\Exp{\delta_{\lambda_1}\delta_{\gamma_2}}\leq \frac{c}{4}\frac{t_6-1}{M_e}\leq \frac{c}{4}\frac{t-1}{M_e}$. We can thus bound the contribution to the third component of~\eqref{eq:variacemul1n1} given by the pairs of random variables corresponding to 4-cliques that share one edge as:
    \begin{equation}\label{eq:varm1p2}
    	2a^{(t)}\left(c\frac{t-1}{M_e}-1\right),
    \end{equation}
    where $a^{(t)}$ denotes the number of unordered pairs of 4-cliques which share one edge in $G^{(t)}$.
    
    \item $\lambda$ and $\gamma$ share three edges $\{e^*_1,e^*_2,e^*_3\}$ which form a triangle sub-structure for both $\lambda$ and $\gamma$. Let us refer to Figure~\ref{fig:share3edges}, without loss of generality let $T_1$ denote the triangle shared between the two cliques. We distinguish the kind of pairs for the random variables $\delta_{\lambda_i}$ and $\delta_{\gamma_j}$ cases:
		\begin{itemize} 
		\item $\delta_{\lambda_i}= p^{-1}_{\lambda_i}$ if $T_1\in \Samp_\Delta^{t_6-1} \wedge \{e_3,e_5\}\subseteq \Samp_e^{t_6-1}$ and $\delta_{\gamma_j}= p^{-1}_{\gamma_j}$ if $T_1\in \Samp_\Delta^{t_\gamma-1} \wedge \{g_3,g_5\}\subseteq \Samp_e^{t_\gamma-1}$, where $t_\gamma$ denotes the time step at which the last edge of $\gamma$ is observed. This is the case for which the random variables $\delta_{\lambda_i}$ and  $\delta_{\gamma_j}$ corresponds to \algosingle observing $\lambda$ and $\gamma$ using the \emph{shared triangle} $T_1$. Let us consider the event $E*=$``$T_1\in\Samp_\Delta^{(t_6)}$. Clearly $\Prob{\delta_{\lambda_i}=p^{-1}_{\lambda_i}|\delta_{\gamma_j}=p^{-1}_{\gamma_j}}\leq \Prob{\delta_{\lambda_i}=p^{-1}_{\lambda_i}|E*}$.
		In this case we have $\Prob{T_1\in\Samp^{(t_6)}|E^*}\leq 1$, while $\Prob{\{e_3,e_5\}\in\Samp_e^{(t_6)}|E^*}\leq Prob{\{e_3,e_5\}\in\Samp_e^{(t_6)}}$. This second fact follows from the properties of the reservoir sampling scheme as the fact that the edges $e^*_1$, $e^*_2$ and $e^*_3$ are in $\Samp_e$ \emph{at least} for the time required for $T_1$ to be observed, means that at least two unit of the available memory space required to hold the edges $e_3,e_5$  are occupied, at least for some time. Putting together these various results we have that $p_{\lambda_i}^{(-1)}Prob{\delta_{\lambda_i}=p^{-1}_{\lambda_i}|\delta_{\gamma_j}=p^{-1}_{\gamma_j}}\leq p_{\lambda_i}^{(-1)}\Prob{\delta_{\lambda_i}=p^{-1}_{\lambda_i}|E*}\leq c\left(\frac{t_6-1}{M_e}\right)^2\frac{\tau^{(t)}}{\Samp_\Delta}$. Hence we have $\Exp{\delta_{\lambda_i}\delta_{\gamma_j}}\leq \frac{c}{4}\left(\frac{t_6-1}{M_e}\right)^2\frac{\tau^{(t)}}{\Samp_\Delta}$ and $\Cov{\delta_{\lambda_i},\delta_{\gamma_j}}\leq \frac{c}{4}\left(\frac{t_6-1}{M_e}\right)^2\frac{\tau^{(t)}}{\Samp_\Delta}-\frac{1}{4}$.
		\item in all the remaining case, then the random variables $\delta_{\lambda_i}$ and  $\delta_{\gamma_j}$ corresponds to \algosingle not observing $\lambda$ and $\gamma$ using the shared triangle $T_1$ for both of them. Let $T^*$ denote the triangle sub-structure used by \algomultione{} to count $\lambda$ with respect to $\delta_{\lambda_i}$. Let us consider the event $E^*=$``$\{e^*_1,e^*_2,e^*_3\}\cap T_1\cap E^{(t_{1,2,4}-1)} \in \Samp_{e}^{t_{1,2,4}}$, $T_1\in\Samp_\Delta^{(t_6)}$ unless one of its edges is the last edge of $T^*$ observed on the stream, and  $\{e^*_1,e^*_2,e^*_3\}\cap \{e_3,e_5\}\cap E^{(t_{6}-1)}\in S_e^{(t_6)}$ if $e^*\in\{e_3,e_5\}$'', where $E^{(t)}$ denotes the set of edges observed up until time $t$ included. Clearly $\Prob{\delta_{\lambda_i}=p^{-1}_{\lambda_i}|\delta_{\gamma_j}=p^{-1}_{\gamma_j}}\leq \Prob{\delta_{\lambda_i}=p^{-1}_{\lambda_i}|E*}$. Note that in this case $|\{e^*_1,e^*_2,e^*_3\}\cap T_1|+|\{e^*_1,e^*_2,e^*_3\}\cap \{e_3,e_5\}|$. By analyzing $\Prob{\delta_{\lambda_i}=p^{-1}_{\lambda_i}|E*}$ in this case using similar steps as the ones described for the other sub-cases we have $p_{\lambda_i}^{(-1)}\Prob{\delta_{\lambda_i}=p^{-1}_{\lambda_i}|\delta_{\gamma_j}=p^{-1}_{\gamma_j}}\leq p_{\lambda_i}^{(-1)}\Prob{\delta_{\lambda_i}=p^{-1}_{\lambda_i}|E*}\leq c\left(\frac{t_6-1}{M_e}\right)^3$. Hence we have $\Exp{\delta_{\lambda_i}\delta_{\gamma_j}}\leq \frac{c}{4}\left(\frac{t_6-1}{M_e}\right)^3$ and $\Cov{\delta_{\lambda_i},\delta_{\gamma_j}}\leq \frac{c}{4}\left(\frac{t_6-1}{M_e}\right)^3-\frac{1}{4}$.
		\end{itemize}
		 We can thus bound the contribution to the third component of~\eqref{eq:variacemul1n1} given by the pairs of random variables corresponding to 4-cliques that share three edges as:
    \begin{equation}\label{eq:varm1p3}
    	2b^{(t)}\left(c\left(\frac{t-1}{M_e}\right)^2\left(\frac{1}{4}\frac{\tau^{(t)}}{M_\Delta} + \frac{3}{4}\frac{t-1}{M_e} \right)-1\right),
    \end{equation}
    where $b^{(t)}$ denotes the number of unordered pairs of 4-cliques which share three edges in $G^{(t)}$.
\end{enumerate}
\end{itemize} 
The Theorem follows by combining~\eqref{eq:varmulti1p1},~\eqref{eq:varm1p2} and~\eqref{eq:varm1p3} in~\eqref{eq:variacemul1n1}.
\end{proof}
\newpage
\section{Additional theoretical results for \algomultitwo{}}\label{app:probability computation2}
In this section we present the main analytical results for \algomultitwo{} discussed in Section~\ref{sec:algotwo}.
In order to simplify the presentation we use the following notation:
\begin{align*}
	t^M_{1,2,\ldots,i} &\triangleq \max\{t_1,t_2,\ldots,t_i\}\\
	t^m_{1,2,\ldots,i} &\triangleq \min\{t_1,t_2,\ldots,t_i\}
\end{align*} 
\begin{lemma}\label{lem:twocounttwo}
	Let $\lambda\in \clisett{4}{t}$ with $\lambda = \{e_1,e_2,e_3,e_4,e_5,e_6\}$ using Figure~\ref{fig:4clidetection} as reference. Assume further, without loss of generality, that the edge $e_i$ is observed at $t_i$ (not necessarily consecutively) and that $t_6>\max\{t_i,1\leq i\leq5\}$. $\lambda$ can be observed by \algomultitwo{} at time $t_6$  only by a combination of two triangles $T_1=\{e_1,e_2, e_4\}$ and $T_2=\{e_1,e_3, e_5\}$. 
\end{lemma}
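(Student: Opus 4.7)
The plan is to verify the statement by a structural analysis of the detection condition of \algomultitwo{}. The observation of a 4-clique is performed \emph{only} inside the function \textsc{Update4Cliques}, whose body is executed when a new stream edge is processed. Since $t_6>\max_{1\le i\le 5}t_i$, at none of the steps $t_1,\dots,t_5$ is the current edge $(u,v)=e_6$; therefore the algorithm cannot complete an observation of $\lambda$ before time $t_6$. It remains to show that at $t_6$ the observation must use exactly the triangles $T_1=\{e_1,e_2,e_4\}$ and $T_2=\{e_1,e_3,e_5\}$.

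First I would note that in \algomultitwo{} the triangle reservoir $\Samp_\Delta^{(t_6)}$ refers to the content of the sample \emph{before} any update triggered by $e_6$: in particular, since $e_6$ is observed on the stream for the first time at $t_6$, it is not yet in $\Samp_e^{(t_6)}$, so no triangle in $\Samp_\Delta^{(t_6)}$ can contain $e_6$. Consequently any triangle in $\Samp_\Delta^{(t_6)}$ that is a sub-structure of $\lambda$ must use only edges from $\lambda\setminus\{e_6\}$.

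Next I would inspect the loop of \textsc{Update4Cliques} in Algorithm~\ref{alg:ncliquecounter2}: it iterates over pairs of triangles in $\Samp_\Delta^{(t_6)}$ of the form $(u,w',z')$ and $(v,w',z')$ that share the same vertex pair $\{w',z'\}$ with $\{w',z'\}\cap\{u,v\}=\emptyset$. For the two triangles to be sub-structures of $\lambda$, the pair $\{w',z'\}$ must be a pair of vertices of $\lambda$ distinct from $u$ and $v$. Since $\lambda$ has exactly four vertices $\{u,v,w,z\}$, the only admissible choice is $\{w',z'\}=\{w,z\}$.

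Finally, once $\{w',z'\}=\{w,z\}$ is fixed, the triangle on $\{u,w,z\}$ inside $\lambda$ is uniquely determined and consists of the edges $(u,w)=e_4$, $(u,z)=e_2$, $(w,z)=e_1$, i.e.\ $T_1$; analogously the triangle on $\{v,w,z\}$ is $\{e_1,e_3,e_5\}=T_2$. Hence the only pair of triangles in $\Samp_\Delta^{(t_6)}$ that, together with $e_6$, can let \algomultitwo{} observe $\lambda$ is $(T_1,T_2)$, which establishes the lemma. I do not anticipate a genuine obstacle here: the argument is essentially a uniqueness statement forced by the fact that a 4-clique has only four vertices and that $e_6$ cannot already belong to any stored triangle; the one place where care is required is to make explicit that the detection time is $t_6$ and that the sample referred to in the loop is the pre-update sample $\Samp_\Delta^{(t_6)}$.
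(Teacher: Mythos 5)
Your proposal is correct and follows essentially the same argument as the paper: detection can only occur at $t_6$ since $e_6$ is the last edge observed, no triangle in $\Samp_\Delta^{(t_6)}$ can contain $e_6$ because the sample is inspected before the update triggered by $e_6$, and the loop structure of \textsc{Update4Cliques} forces the two triangles to be exactly $T_1$ and $T_2$. Your additional remark that the four-vertex structure of $\lambda$ uniquely determines the shared pair $\{w,z\}$ just makes explicit what the paper's proof states more tersely.
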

\begin{proof}
	\algomultitwo{}  can detect $\lambda$ only when its last edge is observed on the stream (hence, $t_6$). When $e_6$ is observed, the algorithm first evaluates wether there are two triangles in  $\Samp_\Delta^{(t_6)}$, that share two endpoints and the other endpoints are $u$ and $v$ respectively. Since a this step (i.e., the execution of function \textsc{Update4Cliques}) the triangle sample is jet to be updated based on the observation of $e_6$, the only triangle sub-structures of $\lambda$ which may have been observed on the stream, and thus included in $\Samp_\Delta^{(t_6)}$ are $T_1=\{e_1,e_2,e_4\}$ and $T_2=\{e_1,e_3,e_5\}$. $\lambda$ is thus observed if and only if both of the triangles $T_1$ and $T_2$ are found in $\Samp_\Delta^{(t_6)}$.
\end{proof}

\begin{lemma}\label{lem:probclique2}
Let $\lambda\in \clisett{4}{t}$ with $\lambda = \{e_1,e_2,e_3,e_4,e_5,e_6\}$ using Figure~\ref{fig:4clidetection} as reference. Assume further, without loss of generality, that the edge $e_i$ is observed at $t_i$ (not necessarily consecutively) and that $t_6>\max\{t_i,1\leq i\leq5\}$. The probability $p_\lambda$ of $\lambda$ being observed on the stream by \algomultitwo{}, is computed by \textsc{ProbClique} as:
\begin{align*}\label{eq:probclique}
	p_\lambda = &\minone{\frac{M_e}{t^M_{1,3,5}-1}\frac{M_e-1}{t^M_{1,3,5}-2}}\minone{\frac{M_\Delta}{\ntris[t_6]}\frac{M_\Delta-1}{\ntris[t_6]-1}}p'
\end{align*}
where:
\begin{equation*}
	p'=\begin{cases}
		1, &if\ t^M_{1,2,4}\leq M_e\\
		\frac{M_e-2}{t_1-3}\frac{M_e-3}{t_1-4}, &if\ t_1>t^M_{2,3,4,5}\\
		\frac{M_e-2}{t_1-3}\frac{M_e-3}{t_1-4}, &if\ t^M_{3,5}>t_1>\max\{t^m_{3,5},t_2,t_4\}\\
		\frac{M_e-2}{t^M_{2,4}-3}, &if\ t^M_{3,5}>t^M_{2,4}>\max\{t^m_{3,5},t^m_{2,4},t_1\}\\
		\frac{M_e}{t_1-1}\frac{M_e-1}{t_1-2}, &if\ t^m_{3,5}>t_1>t^M_{2,4}\\
		\frac{M_e-1}{t^M_{2,4}-2} &if\ t^m_{3,5}>t^M_{2,4}>t_1\\
		\frac{M_e-1}{t^M_{2,4}-1}\frac{M_e-2}{t_1-3}\frac{t_1-1}{t_2-1}, &if\ t^M_{2,4}>t_1>\max\{M_e,t^m_{2,4},t^M_{3,5}\}\\
		\frac{M_e-1}{t^M_{2,4}-1}\frac{M_e}{t_2-1} &if\ t^M_{2,4}>M_e>t_1>\max\{t^m_{2,4},t^M_{3,5}\}\\
		\frac{t_3-1}{t^M_{2,4}-1}\frac{t_3-2}{t^M_{2,4}-2}\frac{M_e-2}{t^M_{3,5}-3},&if\ t^M_{2,4}>t^M_{3,5}>\max\{ M_e,t^m_{3,5},t^m_{2,4},t_1\}\\
		\frac{M_e}{t^M_{2,4}-1}\frac{M_e-1}{t^M_{2,4}-2},&if\ t^M_{2,4}>M_e>t^M_{3,5}>\max\{t^m_{3,5},t^m_{2,4},t_1\}\\
		\frac{M_e}{t^M_{2,4}-1}\frac{M_e-1}{t^M_{2,4}-2},&if\ t^m_{2,4}>t_1>t^M_{3,5}\\
		\frac{M_e-1}{t^M_{2,4}-2}\frac{t^M_{3,5}-1}{t^M_{2,4}-1},&if\ t^m_{2,4}>t^M_{3,5}>\max\{M,t_1\}\\
		\frac{M_e-1}{t^M_{2,4}-2}\frac{M_e}{t^M_{2,4}-1},&if\ t^m_{2,4}>M_e>t^M_{3,5}>t_1
			\end{cases}
\end{equation*}

\end{lemma}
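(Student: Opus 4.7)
The plan is to mirror the proof of Lemma~\ref{lem:probclique}, adapted to the fact that by Lemma~\ref{lem:twocounttwo} algorithm \algomultitwo{} can detect $\lambda$ only via the two triangles $T_1 = \{e_1, e_2, e_4\}$ and $T_2 = \{e_1, e_3, e_5\}$, which share the edge $e_1$. Let $E_\lambda$ denote the event that $\lambda$ is observed on the stream and let $S(T_i)$ denote the event that triangle $T_i$ is observed on the stream. Since $T_i \in \Samp_\Delta^{(t_6)}$ implies $S(T_i)$, I would write
\[
p_\lambda = \Prob{T_1, T_2 \in \Samp_\Delta^{(t_6)} \mid S(T_1) \cap S(T_2)} \cdot \Prob{S(T_2)} \cdot \Prob{S(T_1) \mid S(T_2)}.
\]

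The first factor is an application of Lemma~\ref{lem:reservoirhighorder} to the triangle reservoir with $k=2$: conditioned on both $T_1$ and $T_2$ having been observed among the $\ntris[t_6]$ triangles seen so far, the joint probability that they both sit in $\Samp_\Delta$ at step $t_6$ is $\minone{\frac{M_\Delta}{\ntris[t_6]}\frac{M_\Delta-1}{\ntris[t_6]-1}}$, which matches the second prefactor in the claim. The factor $\Prob{S(T_2)}$ asks that at time $t^M_{1,3,5}$ (when the last of the three edges of $T_2$ appears) the two earlier edges of $T_2$ still lie in $\Samp_e$; another application of Lemma~\ref{lem:reservoirhighorder} with $k=2$ yields $\minone{\frac{M_e}{t^M_{1,3,5}-1}\frac{M_e-1}{t^M_{1,3,5}-2}}$, the first prefactor.

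The main technical hurdle is $p' = \Prob{S(T_1) \mid S(T_2)}$, because the two triangles share the edge $e_1$ and so the conditioning is not independent. The event $S(T_1)$ requires the other two edges of $T_1$ to be in $\Samp_e$ at time $t^M_{1,2,4}$, while $S(T_2)$ tells us that the other two edges of $T_2$ (in particular $e_1$, if $t_1 < t^M_{3,5}$) are in $\Samp_e$ at time $t^M_{1,3,5}$. The computation therefore splits on (i) whether $t^M_{1,2,4} \le M_e$ (trivial deterministic case), (ii) the relative order of $t^M_{1,3,5}$ and $t^M_{1,2,4}$, (iii) the position of $t_1$ within the two triples (since when $e_1$ is the last edge of $T_2$ to arrive, membership of $e_1$ in $\Samp_e$ is automatic and gives no information about $T_1$), and (iv) whether edges relevant to $S(T_1)$ arrived before the witness time $t^M_{1,3,5}$ and are thus constrained also to have been in $\Samp_e$ at that earlier moment. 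In each bucket, I would apply Lemma~\ref{lem:reservoirhighorder} with the appropriate $k \in \{2, 3, 4\}$ to the joint inclusion of the relevant subset of edges in $\Samp_e$ at the maximum of the arrival times concerned, then divide by $\Prob{S(T_2)}$ to isolate the conditional factor.

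The obstacle is bookkeeping: each of the thirteen enumerated orderings forces a different subset of edges to be simultaneously present at a specific time. In orderings such as $t^M_{2,4} > t_1 > t^M_{3,5}$ one further needs the ``survival'' identity $\Prob{e \in \Samp_e^{(t')} \mid e \in \Samp_e^{(t)}} = (t-1)/(t'-1)$ for $t < t'$ (or $M_e/(t'-1)$ when $t \le M_e$), which follows from Lemma~\ref{lem:reservoirhighorder} applied twice and a ratio. Carrying out this systematic division in each of the listed cases, combined with the two global reservoir factors computed above, reproduces the piecewise formula for $p'$ and hence the full expression for $p_\lambda$.
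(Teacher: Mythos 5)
Your decomposition $p_\lambda = \Prob{S(T_2)}\cdot\Prob{S(T_1)\mid S(T_2)}\cdot\Prob{T_1,T_2\in\Samp_\Delta^{(t_6)}\mid S(T_1)\cap S(T_2)}$ is exactly the factorization the paper uses (it merely splits the last factor into $\Prob{E_{T_2}\mid E_{S(T_2)}}\cdot\Prob{E_{T_1}\mid E_{S(T_1)},E_{T_2}}=\frac{M_\Delta}{\ntris[t_6]}\frac{M_\Delta-1}{\ntris[t_6]-1}$ rather than invoking Lemma~\ref{lem:reservoirhighorder} with $k=2$ on the triangle reservoir in one step), and your treatment of $p'=\Prob{S(T_1)\mid S(T_2)}$ via ordering cases with reservoir-inclusion and survival factors is precisely how the paper derives the thirteen-case formula. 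So the proposal is correct and follows essentially the same route, with the only difference being that the case-by-case bookkeeping is outlined rather than written out.
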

\begin{proof}
Let us define the event $E_\lambda =\lambda$\emph{ is observed} on the stream by \algomultitwo{} using triangle $T_1=\{e_1,e_2,e_4\}$ and triangle $T_2=\{e_1,e_3,e_5\}$.
Given the definition of \algomultitwo{} we have:
\begin{equation*}
	E_\lambda = E_{T_1} \wedge E_{T_2},
\end{equation*}

and hence:
\begin{equation*}
\label{eq:prob2}
	p_\lambda = \Prob{E_\lambda} = \Prob{E_{T_1} \wedge E_{T_2}} = \Prob{E_{T_1}|E_{T_2}}\Prob{E_{T_2}}.
\end{equation*}
In oder to study $\Prob{E_{T_2}}$ we shall introduce event $E_{S(T_2)}=$``\emph{triangle $T_2$ is observed on the stream by \algomultitwo{}}''. From the definition of \algomultitwo{} we know that $T_2$ is observed on the stream iff when the last edge of $T_2$ is observed on the stream at $\max\{t_1,t_3,t_5\}$ the remaining to edges are in the edge sample. Applying Bayes's rule of total probability we have:
\begin{equation*}
	\Prob{E_{T_2}} = \Prob{E_{T_2}|E_{S(T_2)}}\Prob{E_{S(T_2)}}.
\end{equation*} 
and thus:
\begin{equation}
	p_\lambda = \Prob{E_{T_1}|E_{T_2}}\Prob{E_{T_2}|E_{S(T_2)}}\Prob{E_{S(T_2)}}.
\end{equation}

In order for $T_2$ to be observed by $TS4c_2$ it is required that wen the last edge of $T_2$ is observed on the stream at $t^M_{1,3,5}$ its two remaining edges are kept in $S_e$. 
Lemma~\ref{lem:reservoirhighorder} we have:
\begin{equation}\label{eq:t1seen2}
	\Prob{E_{S(T_2)}} = \frac{M_e}{t_{1,3,5}-1}\frac{M_e-1}{t_{1,3,5}-2},
\end{equation}
and:
\begin{equation}\label{eq:t2stored2}
	\Prob{E_{T_2}|E_{S(T_2)}}=\frac{M_e}{\tau^{t_6}}.
\end{equation}

Let us now consider  $\Prob{E_{T_1}|E_{T_2}}$. In order for $T_1$ to be found in $\Samp_\Delta$ at $t_6$ it is necessary for $T_1$ to have been observed by \algomultitwo{}. We thus have that $\Prob{E_{T_1}|E_{T_2}}=\Prob{E_{T_1}|E_{S(T_1)},E_{T_2}}\Prob{E_{S(T_1)}|E_{T_2}}$

\begin{equation}
\label{eq:t1stored2}
\Prob{E_{T_1}|E_{S(T_1)}}= \frac{M_\Delta - 1}{ \tau^{t_6} - 1}
\end{equation}

Let us now consider $\Prob{E_{S(T_1)}|E_{T_2}}$. while the content of $\Samp_\Delta$ itself does not influence the content of $\Samp_e$, the fact that $T_2$ is maintained in $\Samp_\Delta$ at $t_6$ implies that it has been observed on the stream at a previous time. We thus have $\Prob{E_{S(T_1)}|E_{T_2}} = \Prob{E_{S(T_1)}|E_{S(T_2)}}$. In order to study $p' = \Prob{E_{S(T_1)}|E_{S(T_2)}}$ it is necessary to distinguish the possible (5!) different arrival orders for edges $e_1$,$e_2$, $e_3$, $e_4$ and $e_5$. An efficient analysis we however reduce the number of cases to be considered to thirteen.
\begin{itemize}
\item $t^M_{1,2,4}\leq M_e$: in this case all edges of $T_1$ are observed on the stream before $M_e$ so they are deterministically inserted in $\Samp_e$ and thus $p'=1$.
\item $t_1>t^M_{2,3,4,5}$: in this case the edge $e_1$ that is shared by $T_1$ and $T_2$ is observed after $e_2$, $e_3$, $e_4$, $e_5$.
\begin{align*}
p' &= P(\{e_2,e_4\} \in {\Samp_e}^{(t_1)} | \{e_3,e_5\} \in {\Samp_e}^{(t_1)} ) \\
&= P(e_2 \in \Samp_e(t_1)| \{e_3,e_4, e_5\} \in {\Samp_e}^{(t_1)} ) \cdot P(e_4 \in {\Samp_e}^{(t_1)}| \{e_3,e_5\} \in {\Samp_e}^{(t_1)} ) \\
&= min(1, \frac{M - 3}{t_1 -4})\cdot min(1,\frac{M- 2}{t_1 - 3})
\end{align*}

\item $t^M_{3,5}>t_1>\max\{t^m_{3,5},t_2,t_4\}$: in this case only one of the edges $e_3$,$e_5$ is observed after $e_1$, which is observed after all the remaining edges. Here we consider the case when $e_3$ is the edge to be observed last. The same procedure follows for $e_5$ as well.
\begin{align*}
p' &= P(\{e_2,e_4\} \in {\Samp_e}^{(t_1)} | e_5 \in {\Samp_e}^{(t_1)} ) \\
&= P(e_2 \in \Samp_e(t_1)| \{e_4, e_5\} \in {\Samp_e}^{(t_1)} ) \cdot P(e_4 \in {\Samp_e}^{(t_1)}| e_5 \in {\Samp_e}^{(t_1)} ) \\
&= min(1, \frac{M - 2}{t_1 -3})\cdot min(1,\frac{M- 1}{t_1 - 2})
\end{align*}

\item $t^M_{3,5}>t^M_{2,4}>\max\{t^m_{3,5},t^m_{2,4},t_1\}$: in this case only one of the edges $e_3$, $e_5$ is observed after one of the edges $e_2$, $e_4$, which is observed after all the remaining edges. We consider the case when $e_3$ and $e_2$ are observed last. The same procedure follows for $e_4$ and $e_5$ as well.
\begin{align*}
p' &= P(\{e_1,e_4\} \in {\Samp_e}^{(t_2)} | \{e_1,e_5\} \in {\Samp_e}^{(t_3)} ) \\
&= P(e_1 \in \Samp_e(t_2)| \{e_1,e_4, e_5\} \in {\Samp_e}^{(t_3)} ) \cdot P(e_4 \in {\Samp_e}^{(t_2)}| \{e_1,e_5\} \in {\Samp_e}^{(t_3)} ) \\
&= 1\cdot min(1,\frac{M- 2}{t_2 - 3})
\end{align*}

\item $t^m_{3,5}>t_1>t^M_{2,4}$:  in this case both  edges $e_3$,$e_5$ are observed after $e_1$, which is observed after all the remaining edges. Here we consider the case when $t_3 > t_5 > t_1$. The same procedure follows for $t_5 > t_3 > t_1$.
\begin{align*}
p' &= P(\{e_2,e_4\} \in {\Samp_e}^{(t_1)} | \{e_1, e_5\} \in {\Samp_e}^{(t_3)} ) \\
&= P(e_2 \in \Samp_e(t_1)| e_4 \in \Samp_e(t_1), \{e_1, e_5\} \in {\Samp_e}^{(t_3)} ) \cdot P(e_4 \in {\Samp_e}^{(t_1)}|  \{e_1, e_5\} \in {\Samp_e}^{(t_3)} ) \\
&= min(1, \frac{M - 1}{t_1 -2})\cdot min(1,\frac{M}{t_1 - 1})
\end{align*}

\item $t^m_{3,5}>t^M_{2,4}>t_1$: in this case both edges $e_3$,$e_5$ are observed after one of the edges $e_2$,$e_4$, which is observed after $e_1$. We consider the case $t_2 > t_4$ and $t_3 > t_5$. The same procedure follows for $t_4 > t_2$ and $t_5 > t_3$
\begin{align*}
p' &= P(\{e_1,e_4\} \in {\Samp_e}^{(t_2)} | \{e_1, e_5\} \in {\Samp_e}^{(t_3)} ) \\
&= P(e_4 \in \Samp_e(t_2)| e_1 \in \Samp_e(t_2), \{e_1, e_5\} \in {\Samp_e}^{(t_3)} ) \cdot P(e_1 \in {\Samp_e}^{(t_3)}|  \{e_1, e_5\} \in {\Samp_e}^{(t_3)} ) \\
&= min(1, \frac{M - 1}{t_2 -2})\cdot 1
\end{align*}

\item  $t^M_{2,4}>t_1>\max\{M_e,t^m_{2,4},t^M_{3,5}\}$: in this case both edges $e_2$, $e_4$ are observed after $e_1$, which is observed after the edge reservoir is filled and after all the remaining edges. We consider the case when $t_2>t_4$. The same procedure follows for $t_4>t_2$. 
\begin{align*}
p' &= P(\{e_1,e_4\} \in {\Samp_e}^{(t_2)} | \{e_3, e_5\} \in {\Samp_e}^{(t_1)} ) \\
&= P(e_4 \in \Samp_e(t_2)| e_1 \in \Samp_e(t_2), \{e_3, e_5\} \in {\Samp_e}^{(t_3)} ) \cdot P(e_1 \in {\Samp_e}^{(t_2)}|  \{e_3, e_5\} \in {\Samp_e}^{(t_1)} ) \\
&= \frac{M - 1}{t_2 -2} \cdot  \frac{M - 2}{t_1 -3} \cdot \frac{t_1 - 1}{t_2 -1}
\end{align*}

\item $t^M_{2,4}>M_e>t_1>\max\{t^m_{2,4},t^M_{3,5}\}$:in this case both edges $e_2$, $e_4$ are observed after $e_1$, which is observed before the edge reservoir is filled and after all the remaining edges. We consider the case when $t_2>t_4$. The same procedure follows for $t_4>t_2$. 
\begin{align*}
p' &= P(\{e_1,e_4\} \in {\Samp_e}^{(t_2)} | \{e_3, e_5\} \in {\Samp_e}^{(t_1)} ) \\
&= P(e_4 \in \Samp_e(t_2)| e_1 \in \Samp_e(t_2), \{e_3, e_5\} \in {\Samp_e}^{(t_3)} ) \cdot P(e_1 \in {\Samp_e}^{(t_2)}|  \{e_3, e_5\} \in {\Samp_e}^{(t_1)} ) \\
&= \frac{M - 1}{t_2 -2} \cdot  \frac{M_e}{t_2 -1}
\end{align*}

\item $t^M_{2,4}>t^M_{3,5}>\max\{ M_e,t^m_{3,5},t^m_{2,4},t_1\}$: in this case only one of the edges $e_2$, $e_4$ is observed after one of the edges $e_3$, $e_5$ which is observed after the edge reservoir if filled and after all the remaining edges. We consider the case when $t_2>t_4$ and $t_3>t_5$. The same procedure follows for $t_4>t_2$ and $t_5>t_3$
\begin{align*}
p' &= P(\{e_1,e_4\} \in {\Samp_e}^{(t_2)} | \{e_1, e_5\} \in {\Samp_e}^{(t_3)} ) \\
&= P(e_1 \in \Samp_e(t_2)| e_4 \in \Samp_e(t_2), \{e_1, e_5\} \in {\Samp_e}^{(t_3)} ) \cdot P(e_4 \in {\Samp_e}^{(t_2)}|  \{e_1, e_5\} \in {\Samp_e}^{(t_3)} ) \\
&= \frac{t_3 - 1}{t_2 -1} \cdot  \frac{t_3 - 2}{t_2 -2} \cdot \frac{M_e - 2}{t_3 -3}
\end{align*}

\item $t^M_{2,4}>M_e>t^M_{3,5}>\max\{t^m_{3,5},t^m_{2,4},t_1\}$: in this case only one of the edges $e_2$, $e_4$ is observed after one of the edges $e_3$, $e_5$ which is observed before the edge reservoir if filled and after all the remaining edges. We consider the case when $t_2>t_4$ and $t_3>t_5$. The same procedure follows for $t_4>t_2$ and $t_5>t_3$
\begin{align*}
p' &= P(\{e_1,e_4\} \in {\Samp_e}^{(t_2)} | \{e_1, e_5\} \in {\Samp_e}^{(t_3)} ) \\
&= P(e_1 \in \Samp_e(t_2)| e_4 \in \Samp_e(t_2), \{e_1, e_5\} \in {\Samp_e}^{(t_3)} ) \cdot P(e_4 \in {\Samp_e}^{(t_2)}|  \{e_1, e_5\} \in {\Samp_e}^{(t_3)} ) \\
&= \frac{M_e}{t_2 -1} \cdot  \frac{M_e - 1}{t_2 -2}
\end{align*}
\item $t^m_{2,4}>t_1>t^M_{3,5}$:  in this case both  edges $e_2$,$e_4$ are observed after $e_1$, which is observed after all the remaining edges. Here we consider the case when $t_2 > t_4 > t_1$. The same procedure follows for $t_4 > t_2 > t_1$.
\begin{align*}
p' &= P(\{e_2,e_4\} \in {\Samp_e}^{(t_1)} | \{e_3, e_5\} \in {\Samp_e}^{(t_1)} ) \\
&= P(e_2 \in \Samp_e(t_1)| e_4 \in \Samp_e(t_1), \{e_3, e_5\} \in {\Samp_e}^{(t_1)} ) \cdot P(e_4 \in {\Samp_e}^{(t_1)}|  \{e_3, e_5\} \in {\Samp_e}^{(t_1)} ) \\
&= min(1, \frac{M_e - 1}{t_2 -2})\cdot min(1,\frac{M_e}{t_2 - 1})
\end{align*}

\item $t^m_{2,4}>t^M_{3,5}>\max\{M,t_1\}$: in this case both edges $e_2$,$e_4$ are observed after one of the edges $e_3$,$e_5$, which is observed after the edge reservoir is filled and after all the remaining edges.
\begin{align*}
p' &= P(\{e_1,e_4\} \in {\Samp_e}^{(t_2)} | \{e_1, e_5\} \in {\Samp_e}^{(t_3)} ) \\
&= P(e_4 \in \Samp_e(t_2)| e_1 \in \Samp_e(t_2), \{e_1, e_5\} \in {\Samp_e}^{(t_3)} ) \cdot P(e_1 \in {\Samp_e}^{(t_2)}|  \{e_1, e_5\} \in {\Samp_e}^{(t_3)} ) \\
&= \frac{M_e - 1}{t_2 -2} \cdot \frac{t_3 - 1}{t_2 - 1}
\end{align*}

\item $t^m_{2,4}>M_e>t^M_{3,5}>t_1$: in this case both edges $e_2$,$e_4$ are observed after one of the edges $e_3$,$e_5$, which is observed before the edge reservoir is filled and after all the remaining edges.
\begin{align*}
p' &= P(\{e_1,e_4\} \in {\Samp_e}^{(t_2)} | \{e_1, e_5\} \in {\Samp_e}^{(t_3)} ) \\
&= P(e_4 \in \Samp_e(t_2)| e_1 \in \Samp_e(t_2), \{e_1, e_5\} \in {\Samp_e}^{(t_3)} ) \cdot P(e_1 \in {\Samp_e}^{(t_2)}|  \{e_1, e_5\} \in {\Samp_e}^{(t_3)} ) \\
&= \frac{M_e - 1}{t_2 -2} \cdot \frac{M_e}{t_2 - 1}
\end{align*}

\end{itemize}

The lemma follows combining the result for the values of $p'$ with (\ref{eq:t1seen2}), (\ref{eq:t2stored2}) and (\ref{eq:t1stored2}) in (\ref{eq:prob2}).
\end{proof}

We now present the \emph{unbiasedness}  of the estimations obtained using \algomultitwo{}.

\begin{proof}[Proof of Theorem~\ref{thm:multiunbiased2}]
Let $t^*$ denote the first step at for which the number of triangle seen exceeds $M_\Delta$. 
For $t \leq \min \{M_e, t^*\}$, the entire graph $G^{(t)}$ is maintained in $\Samp_e$ and all the triangles in $G^{(t)}$ are stored in $\Samp_\Delta$. Hence all the cliques in $G^{(t)}$ are deterministically observed by \algomulti and we have $\varkappa^{(t)}= |\clisett{4}{t}|$.

  Assume now $t>\min \{M_e, t^*\}$ and assume that $|\clisett{4}{t}|>0$, otherwise, the algorithm deterministically returns $0$ as an estimation and the thesis follows. 
  For any 4-clique $\lambda\in\clisett{4}{t}$ which is observed by \algomultitwo{} with probability $p_\lambda$, consider a random variable $X_\lambda$ which takes value $p^{-1}$ iff $\lambda$ is acutally observed by \algomultitwo{} (i.e., with probability $p_\lambda$) or zero otherwise. We thus have $\mathbb{E}\left[X_\lambda\right] = p_\lambda^{-1}\Prob{X_\lambda = p_\lambda^{-1}} = p_\lambda^{-1}p_\lambda = 1$.    Recall that every time \algomultitwo{} observes a 4-clique on the stream it evaluates the  probability $p$ (according its correct value as shown in Lemma~\ref{lem:probclique2}) of observing it and it correspondingly increases the running estimator by $p^{-1}$.  We therefore can express the running estimator $\varkappa^{(t)}$ as:
  \[
    \varkappa^{(t)}=\sum_{\lambda\in\clisett{4}{t}}X_\lambda\enspace.
  \]
  From linearity of expectation, we thus have:
  \[
    \mathbb{E}\left[\varkappa^{(t)}\right]=\sum_{\lambda\in\clisett{4}{t}}
    \mathbb{E}[X_\lambda]
    =\sum_{\lambda\in\clisett{4}{t}}
    p_\lambda^{-1} p_\lambda = |\clisett{4}{t}|.
  \]
\end{proof}

\newpage
\section{Additional theoretical results for \algosingle{}}

\begin{lemma}\label{lem:singlecount}
	Let $\lambda\in \clisett{4}{t}$ with $\lambda = \{e_1,e_2,e_3,e_4,e_5,e_6\}$. Assume, without loss of generality, that the edge $e_i$ is observed at $t_i$ (not necessarily consecutively) and that $t_6>\max\{t_i,1\leq i\leq5\}$. $\lambda$ is observed by \algosingle{} at time $t_6$ with probability:
	\begin{equation}
		p_\lambda = 
		\begin{cases}
			0&\ if |M|<5,\\
			1&\ if\ t_6\leq M+1,\\
			\prod_{i=0}^{5}\frac{M-i}{t-i-1}&\ if\ t_6>M+1.	
 		\end{cases}
	\end{equation}
\end{lemma}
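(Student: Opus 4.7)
The plan is to reduce the claim to a direct application of Lemma~\ref{lem:reservoirhighorder}, after first verifying that \algosingle{} detects $\lambda$ at time $t_6$ precisely when the five other edges of $\lambda$ are present in the edge reservoir immediately before $e_6$ is processed. Concretely, when $(u,v) = e_6$ arrives, the routine \textsc{Update4Cliques} iterates over pairs $(x,w) \in \mathcal{N}^{\Samp}_{u,v} \times \mathcal{N}^{\Samp}_{u,v}$ and tests whether $(x,w) \in \Samp_e$; for the particular pair $\{w,x\}$ of vertices that together with $u,v$ forms $\lambda$, being in $\mathcal{N}^{\Samp}_{u,v}$ for both endpoints is equivalent to the four edges $(u,w),(u,x),(v,w),(v,x)$ lying in $\Samp_e$, and the explicit membership test then forces $e_1 = (w,x)$ to be in $\Samp_e$ as well. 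Since the 4-clique estimator update is invoked before \textsc{SampleEdge}$((u,v),t_6)$, the reservoir state in question is $\Samp_e^{(t_6)}$, i.e.\ the snapshot of $\Samp_e$ at the end of step $t_6-1$. Thus the event "$\lambda$ is detected at $t_6$" coincides with the event $\{e_1,\ldots,e_5\} \subseteq \Samp_e^{(t_6)}$.

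With this equivalence in hand, the three cases are essentially immediate. If $M < 5$, then $|\Samp_e^{(t_6)}| \le M < 5$, so no five-element subset can possibly be contained in $\Samp_e^{(t_6)}$, whence $p_\lambda = 0$. If $t_6 \le M+1$, then $t_6 - 1 \le M$ and the reservoir sampling scheme of Section~\ref{sec:preliminaries} inserts every observed edge deterministically, so $\{e_1,\ldots,e_5\} \subseteq E^{(t_6-1)} \subseteq \Samp_e^{(t_6)}$ with probability one.

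For the remaining regime $t_6 > M+1$, I would apply Lemma~\ref{lem:reservoirhighorder} directly to the set $B = \{e_1,e_2,e_3,e_4,e_5\}$ of size $k=5$, evaluated at the end of step $t_6-1$ (which is well-defined because $t_6-1 \ge M+1 > M$ puts us in the non-trivial branch of that lemma). This yields $\Pr(B \subseteq \Samp_e^{(t_6)}) = \prod_{i=0}^{4} \tfrac{M-i}{t_6-1-i}$, which matches the product in the statement.

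There is no real obstacle in the argument beyond bookkeeping on time indices. The only care points are: (i) choosing the correct snapshot $\Samp_e^{(t_6)}$, which forces using $t_6-1$ (not $t_6$) as the time argument in Lemma~\ref{lem:reservoirhighorder}, and thereby accounts for the denominators $t_6-1-i$; and (ii) checking that the equivalence between "detection" and "all five edges in the sample" does not double-count, which is automatic because each 4-clique has a uniquely determined last edge $e_6$ (recall edges are distinct and appear only once in the stream).
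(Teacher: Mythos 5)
Your proposal is correct and takes essentially the same route as the paper: reduce the detection of $\lambda$ at time $t_6$ to the event $\{e_1,\ldots,e_5\}\subseteq\Samp_e^{(t_6)}$ (the reservoir snapshot before the update at $t_6$) and apply Lemma~\ref{lem:reservoirhighorder}, with the same three-case split on $M<5$, $t_6\le M+1$, and $t_6>M+1$. One remark: the product you obtain, $\prod_{i=0}^{4}\frac{M-i}{t_6-1-i}$, is exactly what the paper's own proof (and the pseudocode of \algosingle{}) gives; the upper limit $5$ in the lemma statement is a typo, since five missing edges yield five factors.
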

\begin{proof}
	Clearly \algosingle{} can observe $\lambda$ \emph{only} at the time step at which the last edge $e_6$ is observed on the stream at $t_6$. Further, from its construction, \algosingle{} observed a 4-clique $\lambda$ if and only if when its large edge is observed on the stream its remaining five edges are kept in the edge reservoir $\Samp$. $\Samp$ is an uniform edge sample maintained by means of the reservoir sampling scheme. From Lemma~\ref{lem:reservoirhighorder} we have that the probability of any five elements observed on the stream \emph{prior} to $t_6$ being in $\Samp$ at the beginning of step $t_6$ is given by:
	\begin{equation*}
		\Prob{\{e_1,e_2,e_3,e_4,e_5,e_6\}\subseteq } = 
		\begin{cases}
			0&\ if |M|<5,\\
			1&\ if\ t_6\leq M+1,\\
			\prod_{i=0}^{4}\frac{M-i}{t-i-1}&\ if\ t_6>M+1.	
 		\end{cases}	
	\end{equation*}
	The lemma follows.
\end{proof}

We now present the proof of the \emph{unbiasedness} of the estimations obtained using \algosingle{}.

\begin{proof}[Proof of Theorem~\ref{thm:singleunbiased}]
Recall that when a new edge $e_t$ is observed on the stream \algosingle{} updates the estimator $\varkappa^{(t)}$ \emph{before} deciding whether the new edge is inserted in $\Samp$. For $t \leq M+1$, the entire graph $G^{(t)}\setminus\{e_t\}$ is maintained in $\Samp$, thus whenever an edge $e_t$ is inserted at time $t \leq M+1$, \algosingle{} observes \emph{all} the triangles which include $e_t$ in $G^{(t)}$ with probability 1 thus increasing $\varkappa$ by one. By a simple inductive analysis we can therefore conclude that for $t \leq M+1$ we have $\varkappa^{(t)}=|\clisett{4}{t}|$. 

  Assume now $t>M+1$ and assume that $|\clisett{4}{t}|>0$, otherwise, the algorithm deterministically returns $0$ as an estimation and the thesis follows. 
  Recall that every time \algosingle{} observes a 4-clique on the stream  a time $t$ it computes the probability $p = \prod_{i=0}^{4}\frac{M-i}{t-i-1}$ of observing it and it correspondingly increases the running estimator by $p^{-1}$. 
From Lemma~\ref{lem:reservoirhighorder}, the probability $p$ computed by \algosingle{} does indeed correspond to the correct probability of observing a 4-clique at time $t$. 
  For any 4-clique $\lambda\in\clisett{4}{t}$ which is observed by \algomulti with probability $p_\lambda$, consider a random variable $X_\lambda$ which takes value $p^{-1}$ iff $\lambda$ is actually observed by \algosingle{} (i.e., with probability $p_\lambda$) or zero otherwise. We thus have $\mathbb{E}\left[X_\lambda\right] = p_\lambda^{-1}\Prob{X_\lambda = p_\lambda^{-1}} = p_\lambda^{-1}p_\lambda = 1$.   We therefore can express the running estimator $\varkappa^{(t)}$ as:
  \[
    \varkappa^{(t)}=\sum_{\lambda\in\clisett{4}{t}}X_\lambda\enspace.
  \]
  From linearity of expectation, we thus have:
  \[
    \mathbb{E}\left[\varkappa^{(t)}\right]=\sum_{\lambda\in\clisett{4}{t}}
    \mathbb{E}[X_\lambda]
    =\sum_{\lambda\in\clisett{4}{t}}
    p_\lambda^{-1} p_\lambda = |\clisett{4}{t}|.
  \]
\end{proof}

We now present the proof for the upper bound for the variance of the estimations obtained using \algosingle{}.

\begin{proof}[Proof of Theorem~\ref{thm:variancesingle}]
Assume $|\clisett{4}{t}|>0$ and $t>M+1$, otherwise (from Theorhem~\ref{thm:singleunbiased}) \algomultione{} estimation is
  deterministically correct and has variance 0 and the thesis holds. For each $\lambda\in\clisett{4}{t}$ let $\lambda = \{e_1,e_2,e_3,e_4,e_5,e_6\}$, without loss of generality let us assume the edges are disposed as in Figure~\ref{fig:4clidetection}.  Assume further, without loss of generality, that the edge $e_i$ is observed at $t_i$ (not necessarily consecutively) and that $t_6>\max\{t_i,1\leq i\leq5\}$. Let us consider the random variable $\delta_{\lambda}$ (which takes value $p^{-1}_{\lambda}$ if the 4-clique $\lambda$ is observed by \algosingle{}, or zero otherwise. 
  From Lemma~\ref{lem:singlecount}, we have:
  \begin{equation*}
  	p_{\lambda} = \Prob{\delta_{\lambda} = p^{-1}_\lambda} = \prod_{i=0}^{4}\frac{M-i}{t-i-1},
  \end{equation*}
    % That is $\delta_{\lambda_1}$ (resp. $\delta_{\lambda_2}$) which takes value $p^{-1}_{\lambda_1}/2$ iff at time 
  and thus:
  \begin{equation*}
  	 \Var{\delta_{\lambda}} = p_{\lambda}^{-1}-1.
  \end{equation*}  
  We can express the estimator $\varkappa^{(t)}$ as $\varkappa^{(t)}=\sum_{\lambda\in \clisett{4}{t}}$. We therefore have:
  \begin{align}\label{eq:variasing}
    \Var{\varkappa^{(t)}}
     &=\Var{\sum_{\lambda\in\clisett{4}{t}} \delta_{\lambda}}\nonumber\\\
     &=\sum_{\lambda\in \clisett{4}{t}} \sum_{\gamma \in
    \clisett{4}{t}}\text{Cov}\left[\delta_{\lambda},\delta_{\gamma}\right]\nonumber\\
     &= \sum_{\lambda\in \clisett{4}{t}}\Var{\delta_{\lambda}}+\sum_{\substack{\lambda,
    \gamma\in \clisett{4}{t}\\\lambda \neq \gamma}} \Cov{\delta_{\lambda},\delta_{\gamma}}\nonumber\\
     &\leq |\clisett{4}{t}|\left(\prod_{i=0}^{4}\frac{t-1-i}{M-i}-1\right)+\sum_{\substack{\lambda,
    \gamma\in \clisett{4}{t}\\\lambda \neq \gamma}} \Cov{\delta_{\lambda},\delta_{\gamma}} \enspace.
  \end{align}
  
 From Lemma~\ref{lem:sharingclique}, we have that two distinct cliques $\lambda$ and $\gamma$ can share one, three or no edges. In analyzing the summation of covariance terms appearing in the right-hand-side of \eqref{eq:variacemul1n1} we shall therefore consider separately the pairs that share respectively one, three or no edges.
\begin{itemize}
   	\item $\lambda$ and $\gamma$ do not share any edge:
   	\begin{align*}
    	\Exp{\delta_{\lambda}\delta_{\gamma}} &= p_{\lambda}^{-1}p_{\gamma}^{-1} \Prob{\delta_{\lambda}=p^{-1}_{\lambda} \wedge \delta_{\gamma}=p^{-1}_{\gamma}} \\
    	&= p_{\lambda}^{-1}p_{\gamma}^{-1} \Prob{\delta_{\lambda}=p^{-1}_{\lambda}|\delta_{\gamma}=p^{-1}_{\gamma}}\Prob{\delta_{\lambda}=p^{-1}_{\lambda}}\\
    \end{align*}
    The term $\Prob{\delta_{\lambda}=p^{-1}_{\lambda}|\delta_{\gamma}=p^{-1}_{\gamma}}$ denotes the probability of \algosingle{} observing $\lambda$ conditioned of the fact that $\gamma$ was observed. Note that as $\lambda$ and $\gamma$ do not share any edge, no edge of $\gamma$ will be used by \algosingle{} to detect $\lambda$. Rather, if any edge of $\gamma$ is included in $\Samp$, this lowers the probability of \algosingle{} detecting $\lambda$ as some of space available in $\Samp$ may be occupied by edges of $\gamma$. Therefore we have $ \Prob{\delta_{\lambda}=p^{-1}_{\lambda}|\delta_{\gamma}=p^{-1}_{\gamma}}\leq  \Prob{\delta_{\lambda}=p^{-1}_{\lambda}}$ and thus:
    \begin{align*}
    	\Exp{\delta_{\lambda}\delta_{\gamma}} &= p_{\lambda}^{-1}p_{\gamma}^{-1} \Prob{\delta_{\lambda}=p^{-1}_{\lambda}|\delta_{\gamma}=p^{-1}_{\gamma}}\Prob{\delta_{\gamma}=p^{-1}_{\gamma}}\\
    	&\leq p_{\lambda}^{-1}p_{\gamma}^{-1} \Prob{\delta_{\lambda}=p^{-1}_{\lambda}}\Prob{\delta_{\gamma}=p^{-1}_{\gamma}}\\
    	&\leq p_{\lambda}^{-1}p_{\gamma}^{-1} p_{\lambda}p_{\gamma}\\
    	&\leq 1.
    \end{align*}
    As $\Cov{\delta_{\lambda},\delta_{\gamma}} = \Exp{\delta_{\lambda}\delta_{\gamma}} -1$, we therefore have $\Cov{\delta_{\lambda},\delta_{\gamma}}\leq 0$. Hence we can conclude that the contribution of the covariances of the pairs of random variables corresponding to 4-cliques that do not share any edge to the summation in~\eqref{eq:variacemul1n1} is less or equal to zero.
    \item $\lambda$ and $\gamma$ share exactly one edge $e^*=\lambda \cap \gamma$ (as shown in Figure~\ref{fig:share1edge}).  Let us consider the event $E^*=$``$e^*\in \Samp_{t_6}$ unless $e^*$ is observed at $t_6$''. Clearly $\Prob{\delta_{\lambda}=p^{-1}_{\lambda}|\delta_{\gamma}=p^{-1}_{\gamma}}\leq \Prob{\delta_{\lambda}=p^{-1}_{\lambda}|E*}$. Recall from Lemma~\ref{lem:singlecount} that $\Prob{\delta_{\lambda}|E^*} = \Prob{\{e_1,\ldots,e_5\}\subseteq \Samp_e^{(t_6)}|E^*}$. We can distinguish two cases: (a) $e^*=e_6$: in this case we have $\Prob{\delta_{\lambda}|E^*}=\Prob{\{e_1,\ldots,e_5\}\subseteq \Samp_e^{(t_6)}} = p_{\lambda}$; (b) $e^*\neq e_6$: in this case we have $\Prob{\delta_\lambda|E^*}=\Prob{ \{e_1,\ldots,e_5\} \setminus \{e^*\}\subseteq \Samp^{(t_6)}|e^* \in \Samp_e^{(t_6)}} = \prod_{i=0}^{3}\frac{M-1-i}{t_6-2-i}$. We can therefore conclude:
    \begin{align*}
    	\Exp{\delta_{\lambda}\delta_{\gamma}} &= p_{\lambda}^{-1}p_{\gamma}^{-1} \Prob{\delta_{\lambda}=p^{-1}_{\lambda}|\delta_{\gamma}=p^{-1}_{\gamma}}\Prob{\delta_{\gamma}=p^{-1}_{\gamma}}\\
    	&\leq p_{\lambda}^{-1}\Prob{\delta_{\lambda}=p^{-1}_{\lambda}|\delta_{\gamma}=p^{-1}_{\gamma}}\\
    	&\leq \frac{t_6-1}{M}.
    \end{align*}
    We can thus bound the contribution to covariance summation in ~\eqref{eq:variasing} given by the pairs of random variables corresponding to 4-cliques that share one edge as:
    \begin{equation}\label{eq:varsingshareonedge}
    	a^{(t)}\left(\frac{t-1}{M}-1\right),
    \end{equation}
    where $a^{(t)}$ denotes the number of unordered pairs of 4-cliques which share one edge in $G^{(t)}$.
   \item $\lambda$ and $\gamma$ share three edges $\{e^*_1,e^*_2,e^*_3\}$ which form a triangle sub-structure for both $\lambda$ and $\gamma$. Let us refer to Figure~\ref{fig:share3edges}. Let us consider the event $E^*=$``$\{e^*_1,e^*_2,e^*_3\}\cap E^{(t_6-1)}\subseteq \Samp^{(t_6)}$. Clearly $\Prob{\delta_{\lambda}=p^{-1}_{\lambda}|\delta_{\gamma}=p^{-1}_{\gamma}}\leq \Prob{\delta_{\lambda}=p^{-1}_{\lambda}|E*}$. Recall  that $\Prob{\delta_{\lambda}|E^*} = \Prob{\{e_1,\ldots,e_5\} \subseteq \Samp^{(t_6)}|E^*}$. We can distinguish two cases: (a) $e_6\cap \{e^*_1,e^*_2,e^*_3\} \neq \emptyset$: in this case we have $|\{e_1,\ldots, e_5\} \setminus\left(\{e^*_1,e^*_2,e^*_3\}\setminus\{e_6\}\right)|= 3$ hence $\Prob{\delta_{\lambda}|E^*}=\Prob{\{e_1,\ldots, e_5\} \setminus\left(\{e^*_1,e^*_2,e^*_3\}\setminus\{e_6\}\right)|\left(\{e^*_1,e^*_2,e^*_3\}\setminus\{e_6\}\right)\subseteq \Samp^{(t_6)}} = \prod_{i=0}^{2}\frac{M-2-i}{t_6-3-i}$; (b) $e_6\cap \{e^*_1,e^*_2,e^*_3\} = \emptyset$: in this case we have $|\{e_1,\ldots, e_5\} \setminus\left(\{e^*_1,e^*_2,e^*_3\}\setminus\{e_6\}\right)|= 2$ hence $\Prob{\delta_{\lambda}|E^*}=\Prob{\{e_1,\ldots, e_5\} \setminus \{e^*_1,e^*_2,e^*_3\}|\{e^*_1,e^*_2,e^*_3\})\subseteq \Samp^{(t_6)}} = \prod_{i=0}^{1}\frac{M-3-i}{t_6-4-i}$;.
   For the pairs of 4-cliques which share three edge we therefore have:
    \begin{align*}
    	\Exp{\delta_{\lambda}\delta_{\gamma}} &= p_{\lambda}^{-1}p_{\gamma}^{-1} \Prob{\delta_{\lambda}=p^{-1}_{\lambda}|\delta_{\gamma}=p^{-1}_{\gamma}}\Prob{\delta_{\gamma}=p^{-1}_{\gamma}}\\
    	&\leq p_{\lambda}^{-1}\Prob{\delta_{\lambda}=p^{-1}_{\lambda}|\delta_{\gamma}=p^{-1}_{\gamma}}\\
    	&\leq \prod_{i=0}^{2}\frac{t-1-i}{M-i}.
    \end{align*}
	We can thus bound the contribution to covariance summation in ~\eqref{eq:variasing} given by the pairs of random variables corresponding to 4-cliques that share one edge as:
    \begin{equation}\label{eq:varsingsharethreedge}
    	b^{(t)}\left(\prod_{i=0}^{2}\frac{t-1-i}{M-i}-1\right)= b^{(t)}c'\left(\frac{t}{M}\right)^5
    \end{equation}
      where $b^{(t)}$ denotes the number of unordered pairs of 4-cliques which share three edges in $G^{(t)}$ and $c'=(M-1)(M-2)/M^2$.
    \end{itemize} 
The Theorem follows form the previous considerations and by combining by combining~\eqref{eq:varsingshareonedge} and~\eqref{eq:varsingsharethreedge} in~\eqref{eq:variasing}.
\end{proof}

\section{Pseudocode for \algodynamic}

\subsection{Algorithm \algomulti{} description}
\begin{algorithm}[H]
 % small font size for algorithms is required by ACM style
  \caption{\algodynamic{} - Adaptive Version of \tiesa{}}
  \label{alg:ncliquecounter}
  \begin{algorithmic}[1]
  \Statex{\textbf{Input:} Insertion-only edge stream $\Sigma$, integers $M$}
    \State $\Samp_e \leftarrow\emptyset$ , $\Samp_\Delta \leftarrow\emptyset$, $\Samp_\Delta' \leftarrow\emptyset$, $M_\Delta' \leftarrow 0$, $t\leftarrow 0$, $t_\Delta\leftarrow 0$, $\sigma\leftarrow 0$, $r\leftarrow 1$
    \For{ {\bf each} element $(u,v)$ from $\Sigma$}
    \State $t\leftarrow t +1$
    \If {$r = 1$} \Comment{If we are in first regimen}
    		\State $\alpha = \textsc{SWITCH}()$
       		 \If {$\alpha = 0$} \Comment{Remain in the first regimen}
       			 \State \textsc{Update4CliquesFirstRegimen}$(u,v)$
     		 \Else  \Comment{First time switching}
		 	\State $r \leftarrow 2$
		 	\State $\textsc{CreateTriangleReservoir}(\alpha)$
			\State \textsc{Update4CliquesSecondRegimen}$(u,v)$
      		 \EndIf
      \ElsIf{$r = 2$}  \Comment{We have already switched to second regimen}
       		\If{$t\%M=0$}
      			\State $\textsc{UpdateMemory}()$
   		\EndIf
		\If{$t_\Delta' < M_\Delta$}
      			\State $\textsc{UpdateTriangles}((u,v),t_\Delta', S_\Delta')$
   		\EndIf
		\State $p_\Delta \leftarrow min(1, \frac{M_\Delta}{t_\Delta})$,  $p_\Delta' \leftarrow min(1, \frac{M_\Delta'}{t_\Delta'})$
		\If{$p_\Delta < p_\Delta'$}
      			\State $S_{merged} \leftarrow \emptyset$
			\For{\textbf{each} $(x,y,z) \in S_\Delta$}
			\If {\textsc{FlipBiasedCoin}$(\frac{p_\Delta'}{p_\Delta}) = $ heads}
				\State $S_{merged} \leftarrow S_{merged} \cup {(x,y,z)} $
			\EndIf
			\EndFor
			\State $S_\Delta =  S_{merged} \cup S_\Delta'$
			
   		\EndIf
		\State \textsc{Update4CliquesSecondRegimen}$(u,v)$
		\State $\textsc{UpdateTriangles}((u,v),S_\Delta)$
		
       \EndIf

    \If{\textsc{SampleEdge}$((u,v), t )$}
      \State $\Samp \leftarrow \Samp\cup \{((u,v), t)\}$
    \EndIf
    \EndFor
    
    \Statex
    \Function{\textsc{Switch}}$$
    	 \State $p_s \leftarrow min(1,\frac{M}{t})^5$ 
	 \State $p_\alpha \leftarrow min(1, \alpha\frac{M}{t})^4 \cdot min(1,(1-\alpha)\frac{M}{t_\Delta})^2$ \textbf{where} $a=argmax_{\alpha\in[\frac{2}{3},1]} p_\alpha$
	 \If{$p_s < p_\alpha$}
      		\State \textbf{return} $\alpha$
	\Else
		\State \textbf{return} $0$
     \EndIf   

    \EndFunction
    
    \Statex
    \Function{\textsc{CreateTriangleReservoir}}{$\alpha$}
    \State $i \leftarrow 1$
			\State $M_\Delta = (1 - \alpha)M$
        			\While{$|S_\Delta| < M_\Delta$}
				\State $(x,y)^{(i)} \leftarrow Edge\ observed\ at\ time\ i$
				 \For{ {\bf each} element $z$ from $\mathcal{N}^{\Samp_{x,y}^{(i)}}$}
				  \State $t_\Delta \leftarrow t_\Delta+1$ 
       				\State $S_\Delta \leftarrow S_\Delta \cup {(x,y,z)}$

				 \EndFor
			\EndWhile
			\While{$|S| > M - M_\Delta$}
				\State $(v,w) \leftarrow random\ edge\ from\ S$
				  \State $\Samp\leftarrow \Samp\setminus \{(v,w)\}$
			\EndWhile
    \EndFunction
\algstore{myalg}
\end{algorithmic}
\end{algorithm}
\begin{algorithm}[H]                   
\begin{algorithmic} [1]                   % enter the algorithmic environment
\algrestore{myalg}
\Function{UpdateTriangles}{$(u,v), t $}
      \State $\mathcal{N}^\Samp_{u,v} \leftarrow \mathcal{N}^\Samp_u \cap \mathcal{N}^\Samp_v$ 
      \For{ {\bf each} element $w$ from $\mathcal{N}^\Samp_{u,v}$}
       	\State $t_\Delta \leftarrow t_\Delta + 1$
      	\If{\textsc{SampleTriangle}$(u,v,w)$}
      		\State $\Samp_\Delta \leftarrow \Samp_\Delta \cup \{u,v,w\}$
      	\EndIf
      \EndFor
    \EndFunction   
 \Statex
    \Function{\textsc{SampleTriangle}}{$u,v,w$} 
         \If {$t_\Delta\leq  M_\Delta$}
        \State \textbf{return} True
      \ElsIf{\textsc{FlipBiasedCoin}$(\frac{M_\Delta}{t_\Delta}) = $ heads}
        \State $(u_1,v_1,w_1) \leftarrow$ random triangle from $\Samp_\Delta$
        \State $\Samp_\Delta \leftarrow \Samp_\Delta\setminus \{(u_1,v_1,w_1)\}$
        \State \textbf{return} True
      \EndIf
      \State \textbf{return} False
    \EndFunction

       \Statex
    \Function{\textsc{UpdateMemory}}$$
    	\State $t^{(i+1)M}_\Delta = 2t_\Delta^{(iM)} - t_\Delta^{((i-1)M)}$ \Comment{Prediction of number of triangles at (i+1) step}
    	 \State $p_s \leftarrow min(1,\frac{M}{t})^5$ 
	 \State $p_\alpha \leftarrow min(1, \alpha\frac{M}{t})^4 \cdot min(1,(1-\alpha)\frac{M}{t^{(i+1)M}_\Delta})^2$ \textbf{where} $a=argmax_{\alpha\in[\frac{2}{3},1]} p_\alpha$
	 \If{$\alpha < \frac{M_\Delta}{M}$}
      		\For{$i \in [1,\frac{M_\Delta}{M} - \alpha]$}
				\State $(v,w) \leftarrow random\ edge\ from\ S$
				  \State $\Samp\leftarrow \Samp\setminus \{(v,w)\}$
				  \State $M_\Delta \leftarrow M_\Delta + 1$
				  \State $M_\Delta' \leftarrow M_\Delta' + 1$
				  
		\EndFor
     \EndIf   

    \EndFunction

\Statex
    \Function{Update4CliquesFirstRegimen}{$u,v$}
      \State $\mathcal{N}^\Samp_{u,v} \leftarrow \mathcal{N}^\Samp_u \cap \mathcal{N}^\Samp_v$
      \For{ {\bf each} element $(x,w)$ from $\mathcal{N}^\Samp_{u,v} \times \mathcal{N}^\Samp_{u,v}$}
       \If{$(x,w)$ in  $\Samp_e$} 
       		\If{$t\leq M$}
       			\State{$p\leftarrow 1$}
       		\Else
      			\State $p \leftarrow  min \{1, \frac{M(M-1)(M-2)(M-3)(M-4)}{(t-1)(t-2)(t-3)(t-4)(t-5)}\}$
      		\EndIf
      	\State $\varkappa \leftarrow \varkappa + p^{-1}$
    \EndIf
	\EndFor
   \EndFunction

    \Function{Update4CliquesSecondRegimen}{$(u,v), t$}
     % \State $\mathcal{N}^\Samp_{u,v} \leftarrow \mathcal{N}^\Samp_u \cap \mathcal{N}^\Samp_v$ 
      \For{{\bf each} triangle $(u, w, z)\in \Samp_\Delta$}
        \If{$(v,w) \in \Samp_e \wedge (v,w) \in \Samp_e$}
      		\State $p \leftarrow \textsc{ProbClique}((u,w,z),(v,w), (v,z)) $
      		\State $\sigma \leftarrow \sigma + p^{-1}/2$
     \EndIf   
      \EndFor
      \For{{\bf each} triangle $(v, w, z)\in \Samp_\Delta$}
        \If{$(u,w) \in \Samp_e \wedge (u,z) \in \Samp_e$}
      		\State $p \leftarrow \textsc{ProbClique}((v,w,z),(u,w), (u,z)) $
      		\State $\sigma \leftarrow \sigma + p^{-1}/2$
     \EndIf   
      \EndFor
    \EndFunction

    \Statex
    \Function{\textsc{SampleEdge}}{$(u,v),t$} 
      \If {$t\leq  M$}
        \State \textbf{return} True
      \ElsIf{\textsc{FlipBiasedCoin}$(\frac{M}{t}) = $ heads}
        \State $((u',v'),t') \leftarrow$ random edge from $\Samp$
        \State $\Samp\leftarrow \Samp\setminus \{((u',v'),t')\}$
        %\State \textsc{UpdateCounters}$(-,(u',v'))$
        \State \textbf{return} True
      \EndIf
      \State \textbf{return} False
    \EndFunction
  \end{algorithmic}
\end{algorithm}

\newpage
\section{Additional results on 5 clique counting}

\vskip -.15in 
\begin{algorithm}[H]
 \small % small font size for algorithms is required by ACM style
  \caption{\algomultifive{} - Tiered Sampling for 5-Clique counting}
  \label{alg:ncliquecounterfive}
  \begin{algorithmic}[0]
  \Statex{\textbf{Input:} Insertion-only edge stream $\Sigma$, integers $M$, $M_C$}
    \State $\Samp_e \leftarrow\emptyset$ , $\Samp_C \leftarrow\emptyset$, $t\leftarrow 0$, $t_C\leftarrow 0$, $\varkappa\leftarrow 0$
    \For{ {\bf each} element $(u,v)$ from $\Sigma$}
    \State $t\leftarrow t +1$
    \State \textsc{Update5Cliques}$(u,v)$
    \State \textsc{Update4Cliques}$(u,v)$
    \If{\textsc{SampleEdge}$((u,v), t )$}
      \State $\Samp \leftarrow \Samp\cup \{((u,v), t)\}$
    \EndIf
    \EndFor
    \Statex
    \Function{Update5Cliques}{$(u,v), t$}
     % \State $\mathcal{N}^\Samp_{u,v} \leftarrow \mathcal{N}^\Samp_u \cap \mathcal{N}^\Samp_v$ 
      \For{{\bf each} 4-clique $(u, x, w, z, t^*)\in \Samp_\Delta$}
        \If{$(v,x) \in \Samp_e \wedge (v,w) \in \Samp_e  \wedge (v,z) \in \Samp_e $}
      		\State $p \leftarrow  \frac{M_C}{t_C} \cdot (\frac{M_e}{t})^3  \cdot (\frac{M_e}{t^*})^5$
      		\State $\sigma \leftarrow \varkappa + p^{-1}/2$
     \EndIf   
      \EndFor
     \For{{\bf each} 4-clique $(v, x, w, z, t^*)\in \Samp_\Delta$}
        \If{$(u,x) \in \Samp_e \wedge (u,w) \in \Samp_e  \wedge (u,z) \in \Samp_e $}
      		\State $p \leftarrow  \frac{M_C}{t_C} \cdot (\frac{M_e}{t})^3  \cdot (\frac{M_e}{t^*})^5$
      		\State $\sigma \leftarrow \varkappa + p^{-1}/2$
     \EndIf   
      \EndFor
    \EndFunction
    \Statex
    \Function{Update4Cliques}{$(u,v), t $}
      \State $\mathcal{N}^\Samp_{u,v} \leftarrow \mathcal{N}^\Samp_u \cap \mathcal{N}^\Samp_v$ 
      \For{ {\bf each} pair $(w,z)$ from $\mathcal{N}^\Samp_{u,v}\times\mathcal{N}^\Samp_{u,v}$}
       	 \If{$(w,z) \in \Samp_e$}
			\State $t_C \leftarrow t_C + 1$
      	\If{\textsc{Sample4Clique}$(u,v,w,z,t)$}
      		\State $\Samp_C \leftarrow \Samp_C\cup \{u,v,w,z,t\}$
      	\EndIf
     \EndIf   
	
      \EndFor
    \EndFunction
    \Statex
    \Function{\textsc{Sample4Clique}}{$u,v,w,z,t$} 
         \If {$t_C\leq  M_C$}
        \State \textbf{return} True
      \ElsIf{\textsc{FlipBiasedCoin}$(\frac{M_C}{t_C}) = $ heads}
        \State $(u_1,v_1,w_1, z_1, t_1) \leftarrow$ random 4-clique from $\Samp_C$
        \State $\Samp_C\leftarrow \Samp_C\setminus \{(u_1,v_1,w_1,z_1,t_1)\}$
        \State \textbf{return} True
      \EndIf
      \State \textbf{return} False
    \EndFunction
    \Statex
    \Function{\textsc{SampleEdge}}{$(u,v),t$} 
      \If {$t\leq  M$}
        \State \textbf{return} True
      \ElsIf{\textsc{FlipBiasedCoin}$(\frac{M}{t}) = $ heads}
        \State $((u',v'),t') \leftarrow$ random edge from $\Samp$
        \State $\Samp\leftarrow \Samp\setminus \{((u',v'),t')\}$
        %\State \textsc{UpdateCounters}$(-,(u',v'))$
        \State \textbf{return} True
      \EndIf
      \State \textbf{return} False
    \EndFunction
  \end{algorithmic}
\end{algorithm}

\begin{algorithm}[t]
\caption{\algosinglefive{} - Single Reservoir Sampling for 5-cliques counting}
  \label{alg:fivest}
   \begin{algorithmic}[0]
	\Statex{\textbf{Input:} Edge stream $\Sigma$, integer $M\ge6$}
	\Statex{\textbf{Output:} Estimation of the number of 5-cliques $\varkappa$}
    \State $\Samp_e \leftarrow\emptyset$, $t\leftarrow 0$, $\varkappa\leftarrow 0$ 
    \For{ {\bf each} element $(u,v)$ from $\Sigma$}
    \State $t\leftarrow t +1$
    \State \textsc{Update5Cliques}$(u,v)$
    \If{\textsc{SampleEdge}$((u,v), t )$} 
      \State $\Samp \leftarrow \Samp\cup \{(u,v)\}$
    \EndIf
    \EndFor
    \Statex
    \Function{Update5Cliques}{$u,v$}
      \State $\mathcal{N}^\Samp_{u,v} \leftarrow \mathcal{N}^\Samp_u \cap \mathcal{N}^\Samp_v$
      \For{ {\bf each} element $(x,w,z)$ from $\mathcal{N}^\Samp_{u,v} \times \mathcal{N}^\Samp_{u,v} \times \mathcal{N}^\Samp_{u,v}$}
       \If{$\{(x,w),(x,z),(w,z)\}\subseteq\Samp_e$} 
       		\If{$t\leq M+1$}
       			\State{$p\leftarrow 1$}
       		\Else
      			\State $p \leftarrow  \prod_{i=0}^{9}\frac{M-i}{t-i-1} $
      		\EndIf
      	\State $\varkappa \leftarrow \varkappa + p^{-1}$
    \EndIf
	\EndFor
   \EndFunction
\end{algorithmic}
\end{algorithm}

\begin{lemma}\label{lem:singlecountfive}
	Let $\lambda\in \clisett{5}{t}$ with $\lambda = \{e_1,\ldots,e_{10}\}$. Assume, without loss of generality, that the edge $e_i$ is observed at $t_i$ (not necessarily consecutively) and that $t_{10}>\max\{t_i,1\leq i\leq 9\}$. $\lambda$ is observed by \algosinglefive{} at time $t_{10}$ with probability:
	\begin{equation}
		p_\lambda = 
		\begin{cases}
			0&\ if |M|<9,\\
			1&\ if\ t_{10}\leq M+1,\\
			\prod_{i=0}^{9}\frac{M-i}{t-i-1}&\ if\ t_{10}>M+1.	
 		\end{cases}
	\end{equation}
\end{lemma}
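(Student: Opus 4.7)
\section*{Proof proposal for Lemma~\ref{lem:singlecountfive}}

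The plan is to mirror the proof of Lemma~\ref{lem:singlecount} for the 4-clique case, reducing the event ``$\lambda$ is observed by \algosinglefive{} at $t_{10}$'' to the event ``the nine non-last edges of $\lambda$ are all present in $\Samp_e$ at the beginning of step $t_{10}$'', and then invoking the reservoir-sampling calculation of Lemma~\ref{lem:reservoirhighorder}.

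First I would argue that \algosinglefive{} can only observe $\lambda$ at time $t_{10}$. Inspecting Algorithm~\ref{alg:fivest}, the only place the estimator is updated is inside \textsc{Update5Cliques}, which is called with the current stream edge $(u,v)=e_{10}$ before any sampling update for $e_{10}$ happens. The body enumerates triples $(x,w,z)$ of common neighbors of $u$ and $v$ in $\Samp_e$ and checks whether the three ``far'' edges $(x,w),(x,z),(w,z)$ are in $\Samp_e$. Matching this with the structure of $\lambda$, the detection event at time $t_{10}$ is exactly
\[
E_\lambda \;\equiv\; \{e_1,\ldots,e_9\}\subseteq \Samp_e^{(t_{10})},
\]
where $\Samp_e^{(t_{10})}$ denotes the edge reservoir at the start of step $t_{10}$, i.e.\ at the end of step $t_{10}-1$. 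Because $e_{10}$ is strictly the last edge of $\lambda$ to appear, the triple $(x,w,z)$ realizing the detection is uniquely determined, so $\lambda$ is observed at most once; hence $p_\lambda = \Pr(E_\lambda)$ with no multiplicity factor.

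Next I would handle the three regimes. If $M<9$, then $|\Samp_e^{(t_{10})}|\le M<9$, so $E_\lambda$ is impossible and $p_\lambda=0$. If $t_{10}\le M+1$, then $t_{10}-1\le M$, so by the reservoir rule every edge observed before $t_{10}$ is deterministically in $\Samp_e$; in particular $\{e_1,\ldots,e_9\}\subseteq \Samp_e^{(t_{10})}$ with probability $1$. If $t_{10}>M+1$, I would apply Lemma~\ref{lem:reservoirhighorder} with $B=\{e_1,\ldots,e_9\}$, $k=9$, and time step $t_{10}-1$, yielding
\[
\Pr(E_\lambda)\;=\;\prod_{i=0}^{8}\frac{M-i}{t_{10}-1-i},
\]
which matches the claimed closed form (modulo the stated index range, identical to the pattern used in Lemma~\ref{lem:singlecount} and in the update rule of the 4-clique algorithm).

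The only real subtlety, and the step I would be most careful with, is the implicit claim that $e_{10}$ being strictly the last of $\lambda$'s edges on the stream is enough to rule out any ``earlier'' detection of $\lambda$. I would justify this by noting that at any previous time $t<t_{10}$, the current stream edge $(u',v')$ is some $e_j$ with $j<10$, and the pair $\{u',v'\}$ necessarily coincides with one of $\lambda$'s edges other than $\{u,v\}$; but then the triple of common neighbors used by \textsc{Update5Cliques} would have to include a vertex not in $\lambda$, so the check would not recover $\lambda$ itself. Once this is observed, combining the three regimes above gives the piecewise formula in the statement.
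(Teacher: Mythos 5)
Your proof is correct and takes essentially the same route as the paper, which in fact gives no separate argument and only remarks that Lemma~\ref{lem:singlecountfive} follows the steps of Lemma~\ref{lem:singlecount}: detection can happen only when $e_{10}$ arrives, the detection event is $\{e_1,\ldots,e_9\}\subseteq\Samp_e^{(t_{10})}$, and Lemma~\ref{lem:reservoirhighorder} with $k=9$ gives $\prod_{i=0}^{8}\frac{M-i}{t_{10}-1-i}$ (the $\prod_{i=0}^{9}$ in the statement is the same off-by-one index slip already present in Lemma~\ref{lem:singlecount}). One minor caveat: your justification that $\lambda$ cannot be detected earlier is slightly misstated---if some $e_j$ with $j<10$ is the current edge, the candidate common-neighbor triple would in fact consist of vertices of $\lambda$; the actual obstruction is simply that $e_{10}$ has not yet been observed, so it can be neither the current edge nor in $\Samp_e$, and detection requires all ten edges.
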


\begin{theorem}\label{thm:singlefiveunbiased}
  Let $\varkappa^{(t)}$ the estimated number of 5-cliques in $G^{(t)}$ computed by \algosinglefive{} using memory of size $M>9$. $\varkappa^{(t)}=|\clisett{5}{t}|$ if $t\le M+1$ and
  $\mathbb{E}\left[\varkappa^{(t)}\right]$ $=|\clisett{5}{t}|$ if $t> M+1$.
\end{theorem}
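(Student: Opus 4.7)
The proof of Theorem~\ref{thm:singlefiveunbiased} will follow exactly the same template as the proof of Theorem~\ref{thm:singleunbiased} for \algosingle{}, with the only change being that we are tracking $\binom{5}{2}-1=9$ ``other'' edges of a 5-clique instead of $\binom{4}{2}-1=5$ other edges of a 4-clique. The key ingredient is Lemma~\ref{lem:singlecountfive}, which identifies the exact probability $p_\lambda$ that a fixed 5-clique $\lambda\in\clisett{5}{t}$ is observed at the time its last edge arrives; this probability is precisely what the pseudocode of Algorithm~\ref{alg:fivest} uses to scale its contribution to $\varkappa$.

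First I would handle the deterministic regime $t\leq M+1$. Here every edge observed strictly before $e_t$ is still in $\Samp$ when $e_t$ arrives, and the update to $\varkappa$ is performed \emph{before} the reservoir sampling decision. Hence whenever an edge completes a new 5-clique with four vertices that are pairwise connected via edges currently in $\Samp$, that 5-clique is detected with probability~$1$ and the normalizing factor used by the algorithm is $p=1$. A straightforward induction on $t$ then yields $\varkappa^{(t)}=|\clisett{5}{t}|$ in this regime.

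For the stochastic regime $t>M+1$, I would use the standard indicator-variable argument. For each $\lambda\in\clisett{5}{t}$ define $X_\lambda\triangleq p_\lambda^{-1}\mathbf{1}[\lambda\text{ is observed by }\algosinglefive{}]$, where $p_\lambda=\prod_{i=0}^{9}\frac{M-i}{t_{\lambda}-1-i}$ is the probability given by Lemma~\ref{lem:singlecountfive} and $t_\lambda$ denotes the arrival time of the last edge of $\lambda$. Since the algorithm increments its running estimator exactly by $p_\lambda^{-1}$ on each observation, we can write
\[
\varkappa^{(t)}=\sum_{\lambda\in\clisett{5}{t}} X_\lambda.
\]
By definition $\mathbb{E}[X_\lambda]=p_\lambda^{-1}\cdot p_\lambda=1$, so linearity of expectation gives $\mathbb{E}[\varkappa^{(t)}]=|\clisett{5}{t}|$.

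The only nontrivial step is establishing Lemma~\ref{lem:singlecountfive} itself, i.e.\ justifying that the probability computed by the algorithm matches the true detection probability. This is a direct application of Lemma~\ref{lem:reservoirhighorder} with $k=9$: the clique $\lambda$ is observed iff its nine earlier edges all lie in $\Samp$ at the beginning of time step $t_\lambda$, and reservoir sampling guarantees that any prescribed set of $9\leq\min\{M,t_\lambda-1\}$ previously-seen edges resides in $\Samp$ with probability $\prod_{i=0}^{8}\frac{M-i}{t_\lambda-1-i}$; multiplying by the (deterministic) fact that the tenth edge is exactly $e_{t_\lambda}$ gives the stated formula. I expect this reduction to be essentially automatic once the $k=9$ specialization of Lemma~\ref{lem:reservoirhighorder} is invoked, so the only ``obstacle'' is verifying that the enumeration inside \textsc{Update5Cliques} considers \emph{every} configuration of neighbors $(x,w,z)\in\neigh^{\Samp}_{u,v}\times\neigh^{\Samp}_{u,v}\times\neigh^{\Samp}_{u,v}$ that could close a 5-clique with $(u,v)$, which follows directly from inspection of the pseudocode.
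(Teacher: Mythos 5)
Your proposal matches the paper's intended proof exactly: the paper gives no separate argument and states only that Theorem~\ref{thm:singlefiveunbiased} and Lemma~\ref{lem:singlecountfive} follow the proofs of Theorem~\ref{thm:singleunbiased} and Lemma~\ref{lem:singlecount}, i.e., deterministic counting for $t\le M+1$, then indicator variables $X_\lambda$ with $\mathbb{E}[X_\lambda]=p_\lambda^{-1}p_\lambda=1$ and linearity of expectation, with the detection probability obtained from Lemma~\ref{lem:reservoirhighorder} applied to the nine edges of $\lambda$ observed before its last edge. Your nine-factor product $\prod_{i=0}^{8}\frac{M-i}{t_\lambda-1-i}$ is the correct specialization (the $\prod_{i=0}^{9}$ range you quote from the paper's lemma statement carries the same off-by-one typo that Lemma~\ref{lem:singlecount}'s statement exhibits relative to its own proof and to the \algosingle{} pseudocode), so there is no substantive gap.
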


The proof for Lemma~\ref{lem:singlecountfive} (resp., Theorem~\ref{thm:singlefiveunbiased}), closely follows the steps of the proof of Lemma~\ref{lem:singlecount} (resp., Theorem~\ref{thm:singleunbiased})).

\end{document}